\newtheorem{thm}{THEOREM}[section]
\newtheorem*{thm*}{THEOREM}
\newtheorem{cor}[thm]{Corollary}
\newtheorem{prop}[thm]{Proposition}
\theoremstyle{definition}
\newtheorem{rem}[thm]{Remark}
\DeclareMathOperator{\tr}{tr}
\newcommand{\refeq}[1]{{\rm(\ref{#1})}}
\renewcommand{\eqref}[1]{{\rm(\ref{#1})}}
\newcommand{\bpi}{\boldsymbol{\pi}}
\numberwithin{equation}{section}
\newcounter{subequation}
	\newenvironment{subequation}%
	{\addtocounter{equation}{-1}%
	\stepcounter{subequation}%
	\begin{equation}}%
	{\end{equation}%
}
\newcommand{\be}{\mathbf{e}}
\newcommand{\bn}{\mathbf{n}}
\newcommand{\bP}{\mathbf{P}}
\newcommand{\bq}{\mathbf{q}}
\newcommand{\bT}{\mathbf{T}}
\newcommand{\bx}{\mathbf{x}}
\newcommand{\bz}{\mathbf{z}}
\newcommand{\bZ}{\mathbf{Z}}
\newcommand{\beq}{\begin{equation}}
\newcommand{\eeq}{\end{equation}}
\newcommand{\bseq}{\begin{subequation}}
\newcommand{\eseq}{\end{subequation}}
\newcommand{\p}{\partial}
\newcommand{\cA}{\mathcal{A}}
\newcommand{\cB}{\mathcal{B}}
\newcommand{\cC}{\mathcal{C}}
\newcommand{\cI}{\mathcal{I}}
\newcommand{\cM}{{\mathcal M}}
\newcommand{\cN}{{\mathcal N}}
\newcommand{\cQ}{{\mathcal Q}}
\newcommand{\bLa}{\boldsymbol{\Lambda}}
\newcommand{\psiPH}{{\psi_{\mbox{\rm\tiny ph}}}}
\newcommand{\psiEL}{{\psi_{\mbox{\rm\tiny el}}}}
\newcommand{\phiPH}{{\phi_{\mbox{\rm\tiny ph}}}}
\newcommand{\mEL}{{m_{\mbox{\rm\tiny el}}}}
\newcommand{\mPH}{{m_{\mbox{\Lightning}}}}
\newcommand{\opsi}{\accentset{\circ}{\psi}}
\newcommand{\Bset}{{\mathbb B}}
\newcommand{\Cset}{{\mathbb C}}
\newcommand{\Pset}{\mathbb{P}}
\newcommand{\Rset}{{\mathbb R}}
\newcommand{\ze}{\zeta}
\newcommand{\de}{\delta}
\newcommand{\al}{\alpha}
\newcommand{\bet}{\beta}
\newcommand{\ga}{\gamma}
\newcommand{\om}{\omega}
\newcommand{\Om}{\Omega}
\newcommand{\si}{\sigma}
\newcommand{\Si}{\Sigma}
\newcommand{\half}{\frac{1}{2}}
\newcommand{\diag}{\mbox{diag}}
\newcommand{\Span}{\mbox{Span}}
\newcommand{\bna}{\begin{eqnarray}}
\newcommand{\ena}{\end{eqnarray}}
\newcommand{\bea}{\begin{eqnarray*}}
\newcommand{\eea}{\end{eqnarray*}}
\newcommand{\ben}{\begin{enumerate}}
\newcommand{\een}{\end{enumerate}}
\newcommand{\bi}{\begin{itemize}}
\newcommand{\ei}{\end{itemize}}
\newcommand{\dal}{\raisebox{3pt}{\fbox{}}\,}
\newcommand{\el}{{\mbox{\tiny{el}}}}
\newcommand{\ph}{{\mbox{\tiny{ph}}}}
\title{A Lorentz-Covariant Interacting Electron-Photon System \\ in One Space Dimension}
\author{\normalsize\sc{Michael K.-H. Kiessling$^{*,\diamond}$, Matthias Lienert$^\dagger$, and A. Shadi Tahvildar-Zadeh$^{*}$}\\
	{$\phantom{nix}$}\\[-0.1cm] 
        \normalsize $^{*}\,$Department of Mathematics\\[-0.1cm]
	Rutgers, The State University of New Jersey\\[-0.1cm]
	110 Frelinghuysen Rd., Piscataway, NJ 08854, USA\\[0.2cm]
	$^\dagger \,$Fachbereich Mathematik\\[-0.1cm]
	Eberhard-Karls-Universit\"at T\"ubingen\\[-0.1cm]
	Auf der Morgenstelle 10, 72076 T\"ubingen, Germany}
\date{Accepted for publication in Lett. Math. Phys., Sept. 18, 2020} 
\begin{document}
\maketitle
\vspace{-1truecm}
\begin{abstract} 
A Lorentz-covariant system of wave equations is formulated for a quantum-mechanical two-body system in one space dimension, 
comprised of one electron and one photon. 
 {Manifest Lorentz covariance is achieved using Dirac's formalism of multi-time wave functions, i.e., 
wave functions $\Psi^{{(2)}}(\bx_\ph,\bx_\el)$ where $\bx_\el,\bx_\ph$ are the generic spacetime 
events of the electron and photon, respectively.} 
Their interaction is implemented via a Lorentz-invariant no-crossing-of-paths boundary condition at the coincidence submanifold 
$\{\bx_\el=\bx_\ph\}$, compatible with particle current conservation.
 The corresponding initial-boundary-value problem is proved to be well-posed.
 Electron and photon trajectories are shown to exist globally in a hypersurface {Bohm--Dirac} theory, for typical particle
initial conditions. 
 Also presented are the results of some numerical experiments which illustrate Compton scattering as well as a new phenomenon:
photon capture and release by the electron.\vspace{-10pt}
\end{abstract}
\vfill
\hrule
\vspace{5pt}
\copyright{2020} The authors. Reproduction of this preprint is permitted for non-commercial purposes.\hspace{1truecm}
\hrule
$^\diamond$ Corresponding author. email: miki@math.rutgers.edu

\section{Introduction and summary of main results}

\begin{quote}\small\it
``The Compton effect, at its discovery, was regarded as a simple collision of two bodies, and yet the 
detailed discussion at the present time involves the idea of the annihilation of one photon and the 
simultaneous creation of one among an infinity of other possible ones. We would like to be able to 
treat the effect as a two-body problem, with the scattered photon regarded as the same individual as 
the incident, in just the way we treat  the collisions of electrons.''

\hfill ---{\bf C G. Darwin,} ``Notes on the Theory of Radiation" (1932)
\end{quote}\vspace{-.3truecm}
 Four score and {eight years after Charles G.~Darwin's paper appeared in print, it is still true that 
``the detailed discussion [of the Compton effect] at the present time involves the idea of the 
annihilation of one photon and the simultaneous creation of one among an infinity of other possible ones;''
 see \cite{FGS,CFPa,CFPb,Buchholz,AlDy} for rigorous quantum-field theoretical treatments.
 Not only has Darwin's goal, of treating electrons and photons on an equal footing 
within a quantum-{\em mechanical} framework of a fixed number of particles, remained elusive, 
during the last century quantum-field  theorists have come to the conclusion that 
``a relativistic quantum theory of a fixed number of particles, is an impossibility;'' (\cite{WeinbergBOOKqft}, p.3).}

{The purpose of the present paper is to demonstrate that the impossible can be done
for a 1+1-dimensional toy version of the problem. 
 More precisely, we show how the relativistic quantum-mechanical photon wave equation of \cite{KTZ2018} can be coupled with 
Dirac's well-known relativistic quantum-mechanical} wave equation for the electron in a Lorentz-covariant manner to accomplish 
Darwin's goal:
 to ``treat the [Compton] effect as a two-body problem, with the scattered photon regarded as the same individual as the incident,
 in just the way we treat the collisions of electrons.'' 

 True, {a Lorentz-covariant dynamical theory of one electron and one photon in 1+1-dimensional Minkowski spacetime is 
merely a caricature of a relativistic quantum-mechanical theory of an arbitrary but fixed finite number $N$ of electrons, 
photons, and their anti-particles in 3+1-dimensional Minkowski spacetime, yet also quantum field theory (QFT) once started 
in 1+1 dimensions.
 And while quantum electrodynamics (QED) in particular is 
often lauded as one of the most successful physical theories ever developed by physicists, this praise often ignores that QED
is not a mathematically consistent physical theory in the same sense in which, say, Newton's mechanics is (for instance),
and neither is the QFT known as the standard model of elementary particles; cf. \cite{Penrose}, \cite{Jost}, and \cite{Bricmont}
for similar sentiments.
 In this situation one should not only continue to try to overcome QFT's technical and conceptual problems, as e.g. in \cite{Buchholz}, 
one should also explore alternative approaches to relativistic quantum theory, in particular when they have been declared to be impossible.}

  {In this vein, we emphasize that the relativistic wave equation for the
quantum-mechanical photon wave function constructed in \cite{KTZ2018} furnishes a conserved probability current 
for the photon position, obeying Born's rule, which transforms in the right way under Lorentz transformations.
 Such a feat has often been declared to be impossible, too; see, for instance \cite{BohmBOOKonQM}.}
 
 {Mathematically well-formulating a ``physical theory'' is not in itself sufficient to establish 
a physical theory of natural phenomena.
 Without quantitative explanatory power of empirical data no ``physical theory'' would be acceptable in this sense.
 Of course, a 1+1-dimensional treatment of the Compton effect is too simplistic to allow a comparison with empirical data. 
{Moreover,} a comparison of our 1+1-dimensional relativistic toy electron-photon quantum mechanics 
with a 1+1-dimensional toy QED is not possible {either}, since QED is based on the quantization of the Maxwell--Dirac system, 
{and} 
Maxwell{'s} equations in 1+1 dimensions do not describe electromagnetic radiation 
{that} could be quantized.
 Our present work should be seen as a \emph{proof of concept} only.}
 
 {The just{-}mentioned fact that in 1+1 spacetime dimensions there are no 
electromagnetic Maxwell radiation fields makes it plain that the photon wave function 
constructed in \cite{KTZ2018} for arbitrary $1+d$-dimensional Minkowski spacetimes is not
a Maxwell field, and the photon wave equation in \cite{KTZ2018} is not Maxwell{'s} field equations {in disguise}.
{What {\em is} true is that,} in $1+3$ dimensions{,} the mathematical structure of Maxwell's field equations 
{shows up} as a 
substructure in the quantum-mechanical photon wave equation of \cite{KTZ2018}, and while 
there is {therefore} some 
danger of confusing the two with each other, one should not.
 Not only is one structure mathematically a subset of the other, the Maxwell fields are defined in physical 
spacetime while the photon wave function is defined on the configuration spacetime of the photon, 
{which is a totally different arena, even though for a single photon the two spacetimes happen to be isomorphic}.}

{It follows right away from what we have just stated{,} that the interaction of a photon and an electron in a 1+1-dimensional
model cannot possibly be achieved by the familiar ``minimal coupling'' of a Dirac and a Maxwell field.
  Instead, we work with Dirac's manifestly Lorentz-covariant formalism of multi-time wave functions ---
for our $N=2$ body problem, wave functions $\Psi^{(2)}(\bx_\ph,\bx_\el)$         
{that} depend on the two generic {space-like separated}                
events $\bx_\el$ and $\bx_\ph$ of the electron and the photon, respectively 
---, and we accomplish the interaction through 
suitable {boundary conditions}
at the subset of co-incident events, $\{\bx_\el=\bx_\ph\}$, 
{which we call an \emph{intrinsic boundary}.} }

{Historically there is a direct line from Dirac's multi-time formalism to 
the quantum field-theoretical formalism of Schwinger and Tomonaga; cf. \cite{SchweberBOOK}.
 This suggests that our approach is not totally disjoint from a quantum field-theoretical formulation. 
 Indeed, the connection promises some illuminating insights into the quantum-theoretical formalism.}

{{T}he following relation connects the Heisenberg picture of quantum field theory 
with the formalism of multi-time wave functions:
 Let $\widehat{\Psi}$ stand for the Heisenberg field operators, $|\emptyset \rangle$ 
for the zero-particle state and $|\Psi_0\rangle$ for the Heisenberg state vector. 
 Then, for all spacelike configurations $(\bx_1,...,\bx_N)$ of any number $N$ of particles, the 
$N$-time $N$-body wave function is given by            
\begin{equation}\label{QFTtoMULTIt}
  \Psi^{(N)}(\bx_1,...\bx_N) = 
\frac{1}{\sqrt{N!}} \langle \emptyset| \widehat{\Psi}(\bx_1) \cdots \widehat{\Psi}(\bx_N) |\Psi_0\rangle.
\end{equation}
 Thus the multi-time picture can be regarded as a covariant particle-position representation of the quantum state.}

{Now, for 
quantum field-theoretical dynamics{,} the particle creation/annihilation formalism requires working with 
a multi-time Fock-type wave function, which can be represented as a sequence 
\begin{equation}
	\Psi = (\Psi^{(0)}, \Psi^{(1)}, \Psi^{(2)}, \cdots),
\end{equation}
where each $\Psi^{(N)}$ is an $N$-particle multi-time wave function of the kind described above.
 The quantum field-theoretical dynamics for the multi-time wave equations \eqref{abstrERWINeq}
then typically mixes the different particle numbers in the interaction terms, i.e., the right-hand side of the
abstract Schr\"odinger equation
\begin{equation}\label{abstrERWINeq}
	i\partial_{t_k} \Psi^{(N)} = (H_k\Psi)^{(N)}
\end{equation}
generally contains $\Psi^{(M)}$ for $M\neq N$.}

{The quantum-field formalism is usually applied with all times synchronized, cf. \cite{ThallerBOOK};
work on the multi-time formalism by comparison has been lagging behind. 
 Recently a multi-time formulation has been achieved for certain quantum field theories (essentially {those} 
with local interactions and finite propagation speed), see \cite{PT2014b}.
 Proofs of the existence of multi-time versions of particular quantum field theories have recently been carried out also 
when a single-time version of these quantum field theories had not been rigorously constructed before, in which case the 
relation (\ref{QFTtoMULTIt}) could not be used; see \cite{LN2020} for a toy model in 1+1 dimensions, or assuming an 
ultraviolet cutoff \cite{DN2019}.}

 {It is important to note that unique solvability of the system of evolution equations \refeq{abstrERWINeq} 
requires imposing suitable {\em boundary conditions} at the subset of co-incident events, $\{\bx_j=\bx_k\}$. 
{We refer to the set of coincidence points as an \emph{intrinsic boundary}, because it is system-intrinsic
 and not determined by some extraneous constraint (such as a container wall).
 In particular, the quantum field-theoretical creation/annihilation formalism can be implemented by formulating 
suitable intrinsic boundary conditions on the $N$-body multi-time wave functions which involve, in addition to such
boundary points, interior points of the $N-1$-body wave functions, see \cite{TT2015,TT2016} for single-time wave functions. 
 This has led to the notion of \textit{interior-boundary conditions}. For an extension to multi-time wave functions, see \cite{LN2020}.}
 In a nutshell, the difference between our quantum-mechanical approach with a fixed number $N$ of
particles and a quantum field-theoretical approach with particle creation/annihilation boils down to 
postulat{ing} 
different types of boundary conditions at the set of coincidence events.}

 {In our relativistic quantum-mechanical theory with a fixed finite number $N=2$ of particles,}
both the Dirac operator of a free electron \cite{ThallerBOOK}, and the Dirac-type operator of a 
free photon constructed in \cite{KTZ2018}, act on $\Psi^{{(2)}}(\bx_\ph,\bx_\el)$.
 Conservation of the particle current, in concert with a ``no-crossing of paths'' condition for electron and photon,
fixes the boundary condition at the subset of co-incident events, $\{\bx_\el=\bx_\ph\}$, up to a choice of a phase. 
 We will prove that the resulting initial-boundary-value problem is well-posed.

 It is intuitively obvious that a boundary condition at coincident events $\{\bx_\el=\bx_\ph\}$ 
amounts to a local pair interaction between electron and photon in a Lorentz-covariant manner. 
 Our boundary condition is compatible with the kind of interaction expected for an electron and a photon in Compton scattering. 

 Beyond demonstrating that relativistic quantum mechanics for a fixed number of two interacting particles is feasible
with a multi-time wave function, we also inquire into the possibility of computing the dynamics of these particles 
and whether their motion indeed resembles ``a simple collision of two bodies,'' as Darwin wrote.
 For this we pick up on the ideas originally conceived for photons by Einstein \cite{EinsteinPHOTONb}, \cite{EinsteinPHOTONbb}
and in 1923/24 adapted for electrons by de Broglie in his thesis. 

 Einstein thought of his ``light quanta'' (photons) as being particles which
are guided in their motion by a  ``wave field'' in spacetime{.}  
{S}ince the photons 
carried the momentum and energy, he thought of the guiding field itself as devoid of energy and momentum and thus called it a 
``ghost field,'' but he never developed his ideas into a working theory.
 Yet Einstein's ideas prompted de Broglie to argue that also electrons might be guided by a wave field, though he didn't 
supply a wave equation. 
 After Schr\"odinger's discovery  in 1926 of his wave equation, the next person to pick up on the guiding field idea was
Born  who in \cite{BornsPSISQUAREpapersA}, \cite{BornsPSISQUAREpapersB} interpreted Schr\"odinger's
wave function $\Psi$ on the configuration space of electrons and nuclei as a guiding field for these 
particles, though not stating how the guiding is being done --- except that Born thought of it as \emph{non-deterministic} 
and such that the likelihood of finding a configuration in d$^{3N}q$ centered at $Q\in \Rset^{3N}$ is $|\Psi|^2(Q){\rm{d}}^{3N}q$.
 Eventually Nelson's non-relativistic ``stochastic mechanics'' (see \cite{Guerra1995}) realized Born's narrative 
of a non-deterministic guiding law in a mathematically sharp manner.
 But also de Broglie \cite{deBroglieSOLVAY} combined the ideas of his thesis with Born's suggestion that
Schr\"odinger's $\Psi$ might guide massive particles, and came up with a deterministic theory, which was
later rediscovered by Bohm \cite{Bohm52}{,} who developed it more systematically (see \cite{DT2009}). 
 We remark that Nelson's ``current velocity field'' coincides
with the guiding velocity field of the de Broglie--Bohm theory. 

 Our treatment of the interacting electron and photon 
combines the ideas of Einstein and of de Broglie--Bohm{,} and implements them in a Lorentz covariant fashion. 
 This is accomplished by adapting to our interacting two-body model the so-called ``hypersurface {Bohm--Dirac}''-type formulation 
for non-interacting particles \cite{DGMZ1999}. 
 This formulation requires a foliation of spacetime by spacelike hypersurfaces, {which 
can be compatible with relativity if it is determined by the wave function in a Lorentz covariant way; cf. \cite{DGNSZ2014}.}
 In our model these hypersurfaces are normal to a time-like Killing vector field $X$ that is determined 
by the initial data of the wave function.
 The guiding field for the particles is furnished by the conserved current of our quantum-mechanical multi-time wave function. 
 We extend a theorem of Teufel--Tumulka \cite{TT2005}, which implies that unique particle motions typically exist globally in time.

 We have carried out some numerical experiments with our system of equations, which indeed demonstrate the process of Compton 
scattering, but which also have revealed an unexpected novel phenomenon: photon capture and subsequent release by the electron.
 The effects of Compton scattering as well as photon capture and release by the electron are nicely illustrated by the numerically 
computed electron and photon trajectories. 
 For many practical purposes such a scenario may well be indistinguishable from the scenario in which a photon is
annihilated and another one created subsequently.
 In our quantum-mechanical two-body model the photon of course never gets destroyed or created, precisely as envisioned by Darwin. 

 By sampling a large ensemble of random positions we also illustrate that the empirical statistics over the possible actual 
trajectories reproduces Born's rule in our model, a consequence of the equivariance of the evolution of the probability densities; 
see Appendix B.

The rest of this paper is structured as follows: 

Section 2 collects the basic mathematical ingredients needed to formulate our model.

In Section 3 we define our two-time electron-photon wave function and system of equations.

Section 4 is devoted to the discussion of our boundary conditions, based on particle current conservation.
 Two propositions and a theorem are stated and proved in this section.

In Section 5 we state and prove the well-posedness of the initial-boundary-value problem for the multi-time wave function, 
according to which a unique global solution exists for each time-like Killing vector $X$. 
 Some technical material is supplemented in Appendix A.

Section 6 identifies the distinguished $X$ which depends only on the initial data for $\Psi^{{(2)}}(\bx_\ph,\bx_\el)$.

Section 7 establishes the hypersurface Bohm--Dirac-type  motion for electron \&\ photon.

Section 8 furnishes a selection of numerically computed electron \&\ photon trajectories. 

In Section 9 we close with an outlook on open questions left for future work.

\section{Preliminaries}
For any $d \in \mathbb{N}$, we let $\bm{\eta} = (\eta_{\mu\nu})$ denote the Minkowski metric on Minkowski spacetime $\Rset^{1,d}$, i.e. 
$$\bm{\eta} = \bm{\eta}^{-1} = \diag(1,\underbrace{-1,...,-1}_{d}).$$
\subsection{Clifford algebra}
Of central importance to the relativistic formulation of quantum mechanics in $d$ space dimensions is the {\em complexified spacetime algebra}
 $\cA$, defined as the complexification of the real Clifford algebra $\mbox{Cl}_{1,d}(\Rset)$ associated with the Minkowski quadratic form of
 signature $(+,-,\dots,-)$.  For $d=1$ this algebra is easily seen to be isomorphic to the algebra of $2\times 2$ complex matrices:
$\cA := \mbox{Cl}_{1,1}(\Rset)_\Cset \cong M_2(\Cset)$.
The isomorphism can be realized by choosing a basis for the algebra:  Let 
\beq
\mathds{1} := \left(\begin{array}{cc} 1 & 0 \\ 0 & 1\end{array}\right),\qquad
\gamma^0 := \left(\begin{array}{cc} 0 & 1\\ 1 & 0\end{array}\right),\qquad 
\gamma^1 := \left(\begin{array}{cc} 0 & -1\\ 1 & 0\end{array}\right),
\eeq
so that we have the Clifford algebra relations
\beq
\gamma^\mu\ga^\nu + \ga^\nu \ga^\mu = 2\eta^{\mu\nu}\mathds{1}.
\eeq
Then a basis for $\mbox{Cl}_{1,1}(\Rset)$ is 
\beq
\cB := \{ \mathds{1}, \ga^0, \ga^1, \ga^0\ga^1 \},
\eeq
and hence
$$\cA = \Span_\Cset \cB = \left\{ \left( \begin{array}{cc} \al+\bet & \ga-\de\\ \ga+\de & \al-\bet\end{array}\right) \right\}_{\al,\bet,\ga,\de \in \Cset} \cong M_2(\Cset).$$ 

\subsection{Lorentz group $O(1,1)$ and its spinorial representation}
The group of isometries of $\Rset^{1,1}$ is the Lorentz group $O(1,1)$.  Viewed as a matrix group, the {\em proper} Lorentz group is identified with $SO(1,1)$, the subgroup of matrices in $O(1,1)$ with determinant 1.  We have
\beq\label{def:SO(1,1)}
SO(1,1) = \left\{ \left(\begin{array}{cc} \cosh a & \sinh a \\ \sinh a & \cosh a \end{array}\right)\right\}_{a\in\Rset}.
\eeq
The full Lorentz group $O(1,1)$ is generated by elements of $SO(1,1)$ together with the space-reflection $\bP := \left(\begin{array}{cc} 1 & 0 \\ 0 & -1 \end{array}\right)$ and time-reversal $\bT := - \bP = \left(\begin{array}{cc} -1 & 0 \\ 0 & 1 \end{array}\right)$ .  

For $\bx \in \Rset^{1,1}$ let 
$$\ga(\bx) := \ga_\mu x^\mu = \left(\begin{array}{cc} 0 & x^0+x^1 \\ x^0-x^1 & 0\end{array}\right) \in \cA$$
be the image of $\bx$ under the standard embedding of the Minkowski spacetime into its Clifford algebra 
(indices are raised and lowered using the Minkowski metric $\bm{\eta}$, and  we are using Einstein's summation convention).
   Let $\bLa$ be an element in $SO(1,1)$, as in \eqref{def:SO(1,1)}.  We then have
$$
\ga(\bLa \bx) = L_{\bLa} \ga(\bx) L_{\bLa}^{-1},
$$
where 
\beq \label{def:Spingroup}
L_{\bLa} = \left(\begin{array}{cc} e^{a/2} & 0 \\ 0 & e^{-a/2} \end{array}\right).
\eeq
It is also easy to check that 
$$
\ga(\bP \bx) = \ga^0 \ga(\bx) \ga^0 = \ga^0 \ga(\bx) \left(\ga^0\right)^{-1},
$$
so that we can set 
$$L_\bP := \ga^0.$$
 We note that $L_{\bLa}$ and $L_\bP$ are unitary operators. 
 When it comes to choosing the operator $L_\bT$ representing time-reversal, it's more advantageous to choose an {\em anti}-unitary operator. 
 It's easy to see that $$ L_\bT := \ga^0\cC$$ works (cf. \cite{ThallerBOOK}, eqs. (3.158-159)), where $\cC$ is the complex conjugation operator
$$
 \cC \psi = \psi^*,\qquad \forall \psi \in M_2(\Cset).
$$
We thus have the spinorial representation of $O(1,1)$ as the group generated by matrices of the form $L_{\bLa}$ as in \eqref{def:Spingroup} 
together with $L_\bP$ and $L_\bT$ defined above.  

The {\em Dirac operator} $D$ on $\Rset^{1,d}$ is by definition
\beq
 D := \ga(\p) = \ga^\mu \frac{\p}{\p x^\mu}
\eeq
Thus for $d=1$ we have
\beq
D = \left(\begin{array}{cc} 0 & \p_0 - \p_1 \\ \p_0+\p_1 & 0 \end{array}\right).
\eeq
Note that $D^2 = \dal \mathds{1}$ where $\dal = \p_0^2 - \p_1^2$ is the one-dimensional wave operator.
\subsection{Spinor fields of ranks two and one}
 A {\em rank-two spinor field} on $\Rset^{1,1}$ is an $\cA$-valued map $\accentset{(2)}{\psi}$ 
that under Lorentz transformations $\bLa\in O(1,1)$ transforms equivariantly with respect to the spinorial representation,
 meaning $\bx' = \bLa \bx$ implies $\phi'(\bx') = L_{\bLa} \phi(\bx) L_{\bLa}^{-1}$. 
Thus {if we denote} 
the components of $\accentset{(2)}{\psi}(\bx)$ as 
$$
\accentset{(2)}{\psi} = \left(\begin{array}{cc} \phi_+ & \chi_- \\ \chi_+ & \phi_- \end{array}\right)
$$
with $\phi_\pm(\bx),\chi_\pm(\bx) \in \Cset$, then under the action of an element of the proper Lorentz group as in \eqref{def:SO(1,1)} these 
components transform as follows
\beq
\phi_+ \to \phi_+,\qquad \chi_+ \to e^{-a}\chi_+,\qquad \phi_- \to \phi_-,\qquad \chi_- \to e^{a}\chi_-,
\eeq
while under the space-reflection $\bP$ we simply have $\phi_\pm \to \phi_\mp,\  \chi_\pm \to \chi_\mp$; and under the time-reversal $\bT$ we 
have $\phi_\pm \to \phi_\mp^*,\  \chi_\pm \to \chi_\mp^*$.  
\begin{rem}\label{rem:diagzero}
{
Note the above transformation rules in particular imply that the {\em diagonal part} of $\accentset{(2)}{\psi}$ transforms like a 
{\em spin-zero field}, so that to isolate the spin-one sector, one needs to project out the diagonal entries in $\accentset{(2)}{\psi}$.}
\end{rem}
To define rank-one spinors, let $\bz \in \Cset^2$ be an {\em isotropic} vector, namely $\bz \ne 0$ and 
$\bm{\eta}(\bz,\bz) = (z^0)^2 - (z^1)^2 = 0$. 
 Let $\bZ = \ga(\bz)$. 
 Then $\det \bZ = 0$ and thus the nullspace of $\bZ$ is non-trivial,
 and one-dimensional since $\bz \ne 0$.
  An element $\accentset{(1)}{\psi} \in \Pset\Cset^2$ is a {rank-one spinor} if there exists an isotropic  vector $\bz$ such that 
$\bZ\accentset{(1)}{\psi} = 0$. Let
$$
\accentset{(1)}{\psi} = \left(\begin{array}{cc} \psi_- \\ \psi_+ \end{array}\right).
$$
It follows that under the action of a proper Lorentz transform of the form \refeq{def:SO(1,1)} the components of a 
{\em rank-one spinor field} on $\Rset^{1,1}$ transform as 
\beq
\psi_- \to e^{a/2} \psi_-,\qquad \psi_+ \to e^{-a/2}\psi_+,
\eeq
while under the space-reflection $\bP$ we simply have $\psi_\pm \to \psi_\mp$ and under the time-reversal $\bT$ we have $\psi_\pm \to  \psi_\mp^*$.

\subsection{One-body wave functions and equations}

\subsubsection{Photon wave function and equation}
 According to \cite{KTZ2018}, in $d$ space dimensions the wave function $\psiPH$ of a single photon is a rank-two bi-spinor field on
 $\Rset^{1,d}$ which, when viewed as a linear transformation, has trace-free diagonal blocks {(see Remark~\ref{rem:diagzero} for why this is the case)}.

  In the case $d=1$, bi-spinors are the same as spinors as defined in the above, while the trace-free condition implies 
$\phi_+ = \phi_- \equiv 0$.
  Thus in one space dimension the wave function of a single photon has only two non-zero components
\beq
\psiPH = \left(\begin{array}{cc} 0 & \chi_- \\ \chi_+ & 0 \end{array}\right).
\eeq
Moreover, according to \cite{KTZ2018} the photon wave function satisfies a Dirac equation with a projection term:
\beq\label{eq:DirPH}
-i\hbar D \psiPH + \mPH\Pi \psiPH = 0,
\eeq
where $\Pi$ is the projection onto diagonal blocks, and $\mPH>0$ to be determined.
  In the case $d=1$, the projection term drops out, so that $\psiPH$ is simply a solution of the {\em massless} Dirac equation:
\beq\label{eq:masslessDir}
-i\hbar \ga^\mu\frac{\p\psiPH}{\p x^\mu} = 0.
\eeq
In components, this becomes
\beq \label{eq:Dircomp}
\left\{\begin{array}{rcl} 
-i\hbar (\p_0-\p_1)\chi_+ & = & 0\\
-i\hbar(\p_0+\p_1) \chi_- & = & 0,
\end{array}\right.
\eeq
which {can be solved as follows.}
 Let $\accentset{\circ}{\psiPH} = \left(\begin{array}{cc} 0 & \accentset{\circ}{\chi_-}\\
 \accentset{\circ}{\chi_+} & 0\end{array}\right)$ be the initial data supplied on the Cauchy hypersurface $\{x^0 = 0\}$ for \refeq{eq:DirEL}.
Let us define the {\em characteristic} coordinates
\beq u := \half(x^0 - x^1),\qquad v := \half(x^0 + x^1).
\eeq
Then \refeq{eq:Dircomp} implies that $\p_u \chi_+ = \p_v \chi_- = 0$, and therefore the solution is 
\beq
\chi_+ = \accentset{\circ}{\chi_+}(x^1+x^0),\qquad \chi_- = \accentset{\circ}{\chi_-}(x^1-x^0).
\eeq
Hence the component $\chi_+$ is constant along {\em left-moving} null rays and $\chi_-$ is constant along {\em right-moving} null rays.
\begin{rem}
{For $d>1$ equation \refeq{eq:DirPH} has a large group of gauge transformations, see \cite{KTZ2018} for details.
 For $d=1$ there is {\em no} such gauge freedom left in the photon wave function $\psiPH$, except for the action of the usual $U(1)$.  
 This is an artifact of being in one space dimension.}
\end{rem}
\subsubsection{Photon probability current}
The existence of a conserved probability current is of profound importance to the understanding of the dynamics of a quantum particle. 
 According to \cite{KTZ2018}, the photon wave function $\psiPH$ described above has an intrinsically defined conserved probability 
current, one which they construct in two steps: {First they show that solutions of the photon wave equation \refeq{eq:DirPH} 
enjoy the following system of conservation laws
\beq\label{def:RieszCons}
\p_\mu\left[ \tr \left(\overline{\phiPH} \ga^\mu \phiPH \ga^\nu\right)\right] = 0,\qquad \nu = 0,\dots,d;\quad d>1,
\eeq
where $\overline{\psi} := \ga^0 \psi^\dag \ga^0$ is the Dirac adjoint for rank-two bispinors, and $\phiPH := \Pi \psiPH$.  It then follows that,}
given any Killing field $X$ of Minkowski spacetime, the manifestly covariant current
\beq\label{def:PHcurr}
j_X^\mu := \frac{1}{{2}} \tr \left(\overline{\phiPH} \ga^\mu \phiPH \ga(X)\right),
\eeq
is conserved, i.e. 
\beq\label{eq:PHconserv}
\p_\mu j_X^\mu = 0.
\eeq
Here $\ga(X) := \ga_\mu X^\mu$. 

 In \cite{KTZ2018} it is next proved that when $X$ is causal  and future-directed, then so is $j_X$, i.e. 
$\bm{\eta}(j_X,j_X)\geq 0$, and $j_X^0 \geq 0$. 
 The authors then show that there exists a distinguished, constant (and therefore Killing) vectorfield $X$ that is 
completely determined by the wave function $\psiPH$ (in fact, given a Cauchy surface $\Si$ it depends only on the 
initial value of $\psiPH$ on $\Si$.)
  They define $X$ {in the following way: Let $\Si$ be any Cauchy surface in Minkowski spacetime, and let $\bn = (n_\nu)$ denote its unit co-normal.
Define
\beq \label{def:pi}
\pi^\mu := \frac{1}{{2}}\int_{\Si}  \tr \left(\overline{\Pi\psiPH} \ga^\mu \Pi\psiPH \ga^\nu\right) n_\nu d\si
\eeq
 The conservation law \refeq{def:RieszCons} then implies that the quantities $\pi^\mu$ are independent of $\Si$, so that if $\{\Si_t\}_{t\in\Rset}$ is
 a foliation of the spacetime by constant-time slices (with respect to an arbitrary time-function $t$), then the quantities $\pi^\mu$ are {\em constant},
  while their manifestly covariant definition \refeq{def:pi} implies that as a $d+1$-component object, with $d>1$, the vector
$\boldsymbol{\pi} := (\pi^\mu)$ transforms correctly, i.e. like a Lorentz vector.}
 Moreover, $\bpi$ is a future-directed causal vectorfield, and is {\em typically} time-like.
  They finally set $X := \boldsymbol{\pi}/\eta(\bpi,\bpi)$  and define  the photon probability current $j_{\mbox{\rm\tiny ph}}$ to be $j_X$ 
for this particular choice of constant vectorfield $X$. 

 Finally, in \cite{KTZ2018} it is shown that $j_{\rm{ph}}$ satisfies the appropriate {version} 
of the Born rule, i.e., if one defines 
\beq
\rho_{\mbox{\rm\tiny ph}} :=
 j_{\mbox{\rm\tiny ph}}^0,\quad v_{\mbox{\rm\tiny ph}}^k := \frac{j_{\mbox{\rm\tiny ph}}^k}{j_{\mbox{\rm\tiny ph}}^0},
\eeq
then one has the continuity equation
\beq
\p_t \rho_{\mbox{\rm\tiny ph}} + \p_k (\rho_{\mbox{\rm\tiny ph}} v_{\mbox{\rm\tiny ph}}^k) = 0.
\eeq
Moreover, in the Lorentz frame where $X = (X^0,0,\dots,0){\in\Rset^{1,d}};\ d>1$, one has 
\beq\label{eq:BornrulePH}
\rho_{\mbox{\rm\tiny ph}} = \frac{ \tr(\Pi\psiPH^\dag \Pi\psiPH) }{\int_{\Rset^d} \tr(\Pi\psiPH^\dag \Pi\psiPH) dx },
\eeq
and thus, $\rho_{\mbox{\rm\tiny ph}}$ is for all practical purposes a probability density, which (for normalized wave 
functions when the denominator in \refeq{eq:BornrulePH} equals 1) depends quadratically on the wave function. 

The proof of \refeq{eq:PHconserv} relies on the fact that for $\psiPH$ satisfying the photon wave equation \refeq{eq:DirPH}, 
its projection onto the diagonal blocks $\phiPH$ satisfies the {\em massless} Dirac equation. 
 As we have seen, for $d=1$ the projection is not needed, the wave function $\psiPH$ itself satisfies the massless Dirac equation 
\ref{eq:masslessDir}. 
Thus in one space dimension we may define the current using the wave function itself:
\beq
j_X^\mu := \frac{1}{{2}} \tr \left(\overline{\psiPH} \ga^\mu \psiPH \ga(X)\right).
\eeq

%%%%%%%%%%%%%%%%%
\subsubsection{Electron wave function and equation}
%%%%%%%%%%%%%%%%%
According to the standard one-body relativistic quantum mechanics (see e.g. \cite{ThallerBOOK}) the wave function of a
 single electron is a rank-one spinor field on $\Rset^{1,1}$,
$$
\psiEL = \left(\begin{array}{c} \psi_- \\ \psi_+ \end{array}\right),
$$
which satisfies a massive Dirac equation
\beq\label{eq:DirEL}
-i\hbar D \psiEL + \mEL \psiEL = 0,
\eeq
where $\mEL>0$ is the electron rest mass. 
 These equations can be written in components as 
\beq\label{eq:Dircomps}
\left\{\begin{array}{rcl} \p_u \psi_+ & = & \frac{-im_\el}{\hbar}\psi_- \\
\p_v\psi_- & = & \frac{-im_\el}{\hbar} \psi_+ \end{array}\right..
\eeq
 These {can be disentangled as follows}:  Let 
$\accentset{\circ}{\psiEL} = \left(\begin{array}{c}\accentset{\circ}{\psi_-}\\ \accentset{\circ}{\psi_+} \end{array}\right)$
 be the initial data for \refeq{eq:DirEL} given on the Cauchy hypersurface $\{x^0 = 0\}$.
  By substituting one equation into the other one in \refeq{eq:Dircomps} we see that both components satisfy the Klein-Gordon equation:
\beq\label{eq:KGs}
\left\{\begin{array}{rcl} \dal \psi_+ + \frac{m_\el^2}{\hbar^2} \psi_+ & = & 0 \\
\psi_+|_{x^0 = 0} &=& \accentset{\circ}{\psi_+} \\
\p_0\psi_+|_{x^0 = 0} & = & \frac{-im_\el}{\hbar} \accentset{\circ}{\psi_-} + \p_1 \accentset{\circ}{\psi_+}\end{array}\right.
\qquad
\left\{\begin{array}{rcl} \dal \psi_- + \frac{m_\el^2}{\hbar^2} \psi_- & = & 0 \\
\psi_-|_{x^0 = 0} &=& \accentset{\circ}{\psi_-} \\
\p_0\psi_-|_{x^0 = 0} & = & \frac{-im_\el}{\hbar} \accentset{\circ}{\psi_+} - \p_1 \accentset{\circ}{\psi_-}.\end{array}\right.
\eeq
{The above equations are solved in the Appendix.}

\subsubsection{Electron probability current}
As is well-known, the electron wave function $\psiEL$ has a conserved probability current, namely 
\beq\label{def:ELcurr}
j_{\mbox{\rm \tiny el}}^\mu := \overline{\psiEL}\ga^\mu \psiEL,
\eeq
where $\overline{\psi} := \psi^\dag \ga^0$ indicates the Dirac adjoint for rank-one bispinors. 
 The manifestly covariant current $j_{\mbox{\rm \tiny el}}$ satisfies 
\beq 
\p_\mu j_{\mbox{\rm\tiny el}}^\mu = 0,
\eeq
and it defines a conserved probability density 
\beq\label{eq:BornruleEL}
\rho_{\mbox{\rm\tiny el}} :=
\frac{ j_{\mbox{\rm\tiny el}}^0 }{\int_{\Rset^d} j_{\mbox{\rm\tiny el}}^0 dx}= \frac{\psiEL^\dag \psiEL}{\int_{\Rset^d} \psiEL^\dag \psiEL dx};
\eeq
{when the denominator in \refeq{eq:BornruleEL} equals 1, $\rho_{\mbox{\rm\tiny el}}$ depends quadratically on the wave function,
compatible with the Born rule.}

\subsection{Multi-time wave functions for an electron-photon system}

In the non-relativistic Schr\"odinger--Pauli ``wave mechanics,'' a two-particle wave function is a map
\beq
{\Phi}: \Rset \times \Rset^d \times \Rset^d \rightarrow \Cset^k,~~~(t, \mathbf{s}_1,\mathbf{s}_2) \mapsto \Phi(t,\mathbf{s}_1,\mathbf{s}_2),
\label{eq:singletimewf}
\eeq
where $d$ is the number of space dimensions and $k$ determined by the number of spin components 
(e.g., if both particles are spin 1/2 electrons and $d=3$, then $k=4$).
 Evidently, such a single-time wave function is not a Lorentz covariant object. 

 For the relativistic treatment of the electron-photon system we need a covariant notion of a many-particle wave function.
 We shall make use of the concept of a multi-time wave function first suggested by Dirac in 1932 \cite{Dir1932} and recently 
developed considerably so as to yield a relativistic version of the Schr\"odinger--Pauli theory (see \cite{LPT2017} for a review).
 A \textit{multi-time (here: two-time) wave function} is a map {(we now purge the superscript ${}^{(2)}$)}
\beq
	\Psi : \mathscr{S} \subset \Rset^{1,d} \times \Rset^{1,d} \rightarrow \Cset^k,~~~(\bx_1,\bx_2)  \mapsto \Psi(\bx_1,\bx_2),
\label{eq:multitimewf}
\eeq
where $\mathscr{S}$ is the set of \textit{space-like configurations}
\beq
	\mathscr{S} = \{ (\bx_1,\bx_2) \in \Rset^{1,d} \times \Rset^{1,d} : {\boldsymbol{\eta}}(\bx_1-\bx_2,\bx_1-\bx_2) < 0 \}.
	\label{eq:spacelikeconfigs}
\eeq  
 One way to visualize $\mathscr{S}$ and other domains in the two-particle configuration space $\Rset^{1,d}\times \Rset^{1,d}$ is to
 use color to distinguish the two events, while depicting them in the {\em same} copy of Minkowski spacetime $\Rset^{1,d}$.
  For example, in one space dimension, using the color blue for electrons and red for photons, a space-like configuration of one 
photon and one  electron can be depicted as in Figure~\ref{fig:spacelike},
\begin{figure}[ht]
\centering
\input{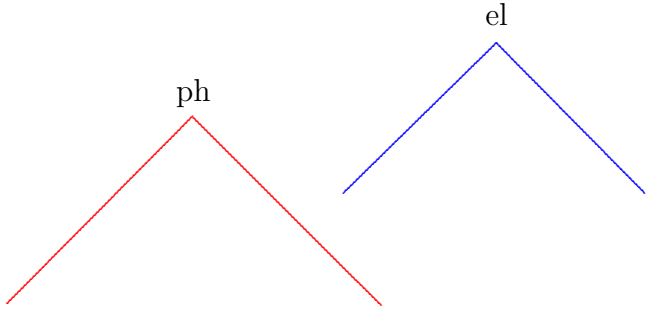}
\caption{\label{fig:spacelike}Example of a space-like configuration}
\end{figure}
while time-like configurations of such a photon-electron pair would be depicted as in Figure~\ref{fig:timelike}.
\begin{figure}[ht]
\centering
\input{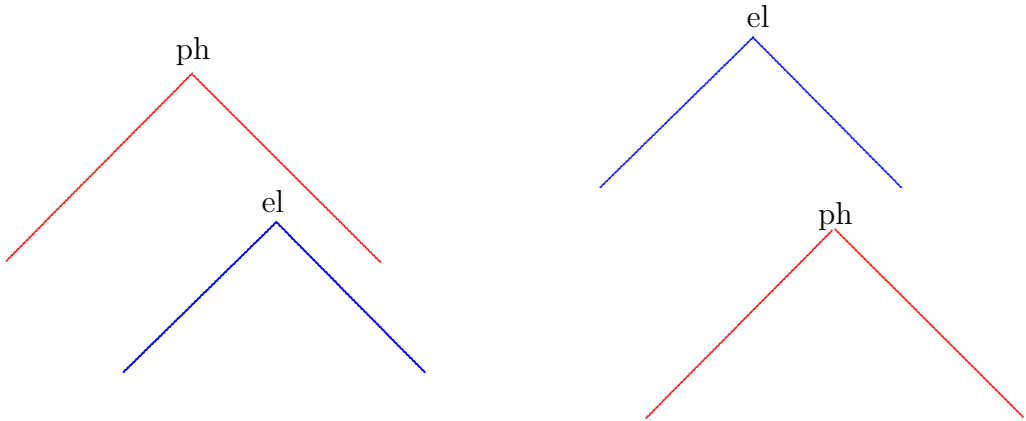}
\caption{\label{fig:timelike}Examples of time-like configurations}
\end{figure}

 Writing $\bx_i = (t_i,\mathbf{s}_i)$ for events in $\Rset^{1,d}$ (we set $c=1${),}
the multi-time wave function $\Psi$ is related straightforwardly to the single-time wave function $\Phi$ by evaluation at equal times 
(relative to a Lorentz frame):
\beq
	{\Phi}(t,\mathbf{s}_1,\mathbf{s}_2) = \Psi(t,\mathbf{s}_1,t,\mathbf{s}_2).
	\label{eq:relationphipsi}
\eeq

As evolution equations, one usually considers a system of two first-order partial differential equations, one for each particle:
\beq
	i \hbar \, \partial_{t_k} \Psi = H_k \Psi,~~k=1,2,
	\label{eq:multitimeeqs}
\eeq
where $H_1,H_2$ are differential operators, called \textit{partial Hamiltonians}. 
 Using \eqref{eq:relationphipsi}, the chain rule implies the usual Schr\"odinger--Pauli equation for 
${\Phi}$, with Hamiltonian $H = H_1+H_2$.

It is assumed that one can rewrite \eqref{eq:multitimeeqs} in a manifestly covariant way, e.g., for Dirac particles:
\beq
	(-i \hbar D_k + m_k) \Psi = 0,~~k=1,2,
	\label{eq:freediracmultitime}
\eeq
where $D_1 = D \otimes \mathds{1}$, $D_2 = \mathds{1} \otimes D$, and $m_1,m_2$ are masses.

 {{We are only interested in multi-time equations \eqref{eq:multitimeeqs} which}} determine $\Psi$ uniquely from initial data on 
$\Sigma^2 := \Sigma \times \Sigma$, where $\Sigma$ is an initial Cauchy surface.
 This is possible if and only if the following \textit{consistency condition} is satisfied \cite{SchweberBOOK,PT2014a}:
\beq
	[i \hbar \, \partial_{t_1} - H_1, i \hbar \, \partial_{t_2} - H_2] = 0.
	\label{eq:cc}
\eeq 
 This condition turns out to be very restrictive. 
 For example, it largely rules out interaction potentials $V_k$ in $H_k = H_k^{\rm free} + V_k(\bx_1,\bx_2)$ \cite{PT2014a,DN2016}.
{On the positive side, \textit{relativistic contact interactions} in 1+1 dimensions 
satisfy the consistency condition \refeq{eq:cc}.
 These can be stated in form of {\emph{intrinsic boundary conditions}} on the two-time wave function
at the coincidence set $\{\bx_1=\bx_2\}$. 
 Interestingly, such boundary conditions can be 
compatible with conservation of particles,
or with particle creation and annihilation \cite{LN2020}; recall our discussion in the introduction. 
 Since we are interested in a fixed number of particles, we choose to work with particle-conserving contact interactions
{which amount to reflective conditions at the intrinsic boundary (illustrated in Fig.~6 below)}.}

 As explained in detail in \cite{Lie2015}, contact interactions can arise naturally from \emph{intrinsic boundary conditions}
of the domain $\mathscr{S}$ for multi-time wave functions.
 Clearly, $\mathscr{S}$ is a domain with boundary, its boundary consisting of {\em lightlike} configurations
\beq\label{def:lightlike}
\mathscr{L} := \{ (\bx_1,\bx_2) \in \Rset^{1,d} \times \Rset^{1,d} : \bx_1 \ne \bx_2,\ \bm{\eta}(\bx_1-\bx_2,\bx_1-\bx_2) = 0 \},
\eeq
together with the set of \textit{coincidence points},
\beq\label{def:coincidences}
	\mathscr{C} = \{ (\bx_1,\bx_2) \in \Rset^{1,d} \times \Rset^{1,d} : \bx_1 = \bx_2 \}.
\eeq
Thus 
$$
\p \mathscr{S} = \mathscr{L} \cup \mathscr{C}.
$$
 As $\mathscr{S}$ is a domain with boundary, the question of boundary conditions arises, for the evolution of the two-time two-body 
wave function in $\mathscr{S}$. 
 We shall see that it is enough to prescribe boundary conditions only on the coincidence set $\mathscr{C}$.  
 Boundary conditions on $\mathscr{C}$ are equivalent to prescribing pair interactions upon contact --- at least
for dynamics in $d=1$ space dimensions.
 At equal times, they then correspond to a $\delta$-potential \cite{LN2015}. 

On the other hand, since the equations satisfied by the wave function are evolution equations,
 it is natural to study the {\em initial value problem} for these equations, which involves 
specifying the values of the unknown on a given initial surface that is space-like with respect 
to both $\bx_1$ and $\bx_2$ variables.
  The introduction of such an initial surface $\Sigma \subset \cM$  divides the domain $\mathscr{S}$ 
of space-like configurations into two subdomains,  $\mathscr{F}_\Sigma$ and $\mathscr{N}_\Sigma$, 
consisting of ``far away" and ``nearby" (with respect to $\Sigma$) configurations, respectively:  
Let $\mathscr{F}_\Sigma$ be the set of configurations $(\bx_1,\bx_2)\in\cM$ such that $\bx_1$ and 
$\bx_2$ are both in the future of $\Sigma$, and the backward light cones emanating from $\bx_1$ and 
$\bx_2$ do {\em not} intersect each other in the future of $\Sigma$.  
Similarly, let $\mathscr{N}_\Sigma$ denote the set of space-like configurations $(\bx_1,\bx_2)$ 
lying in the future of $\Sigma$ with the opposite property, namely their backward light cones do 
intersect in the future of $\Sigma$ (see Figure~\ref{fig:farnear}.)
\begin{figure}[ht]
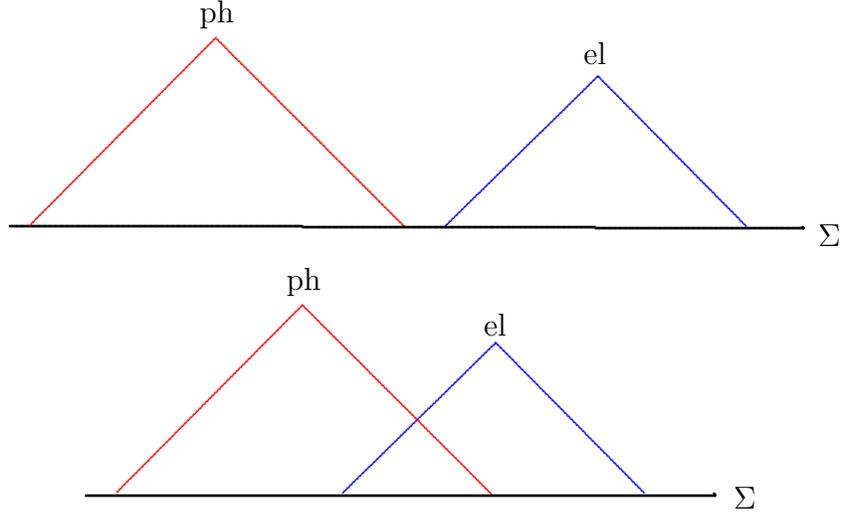

\centering
\input{far_configs.tex}
\input{near_config.tex}
\caption{\label{fig:farnear}Typical configuration in $\mathscr{F}_\Sigma$ (top) and $\mathscr{N}_\Sigma$ (bottom)}
\end{figure}

The significance of this distinction is as follows: since the equations satisfied by the wave function are hyperbolic, they have the 
domain-of-dependence property, namely the value of the wave function at a spacetime point depends only on its initial values on 
$\Sigma$ that lie inside the (closed) backward light cone emanating from that point.
  Thus configurations in $\mathscr{F}_\Sigma$ have the property that the equations in $\bx_1$ and $\bx_2$ can 
be solved independently 
of one another, in other words the two particles corresponding to the $(\bx_1,\bx_2)$ configuration have not 
yet had a chance to interact,
 and are only affected by the initial values.
 By contrast, for the nearby configurations, those in $\mathscr{N}_\Sigma$, the backward light cones of the two particles intersect 
before they reach $\Sigma$ and thus the boundary condition prescribed on the coincidence set $\mathscr{C}$ 
needs to be taken into account.  

The common boundary of the two domains $\mathscr{F}_\Sigma$ and $\mathscr{N}_\Sigma$ is denoted by $\mathscr{B}$.
 We shall see that it plays an important role in solving the initial-boundary value problem for the two-time wave function.

Finally, 
one should note that the multi-time wave function often\footnote{This may require suitable conservation laws, see Sec.~\ref{sec:currents}.}
gives rise to a \textit{probability amplitude for particle detection on Cauchy surfaces} $\Sigma$, given by a generalized version of Born's 
rule \cite{LT2017}.
 This means that a suitable quadratic expression in the multi-time wave function, such as (for two spin-$\half$ Dirac particles)
\beq
	\overline{\Psi}(\bx_1,\bx_2) \,  (\gamma^\mu n_\mu(\bx_1) ) \otimes (\gamma^\nu n_\nu(\bx_2)) \, \Psi(\bx_1,\bx_2),
\label{eq:curvedborn}
\eeq
where $n(\bx)$ is the normal co-vector field at $\Sigma$ and $\overline{\Psi} =\Psi^\dag\ga^0\otimes \ga^0$ is the Dirac adjoint of $\Psi$, 
yields the probability density to detect two particles, at spacetime points $\bx_1,\bx_2 \in \Sigma$.

\section{The  electron-photon two-time wave function and equation}

\subsection{The electron-photon wave function}

In our case, $d=1$ and the configuration spacetime $\Rset^{1,d} \times \Rset^{1,d}$ appearing in \eqref{eq:multitimewf} is given by
$\cM = \Rset^{1,1}_\ph\times \Rset^{1,1}_\el$, where for the rest of this paper, the subscript ${}_\ph$ refers to the photon and ${}_\el$ 
to the electron variables. 
 We denote by $(x_\ph^\mu, x_\el^\nu)$ with $\mu,\nu=0,1$ a global system of rectangular coordinates on $\cM$, and 
refer to space and time coordinates as 
\beq\label{def:coords}
x_\ph^0 = t_\ph,\quad x_\ph^1 = s_\ph,\quad x_\el^0 = t_\el,\quad x_\el^1 = s_\el.
\eeq

The multi-time wave function of the photon-electron system is a section of the vector bundle $\Bset$ over $\cM$ whose fibers are 
isomorphic to the tensor product of the subspace of anti-diagonal matrices in $M_2(\Cset)$ with the vector space $\Cset^2$.  
Thus $\Psi$ has four complex components and can be represented in the form \eqref{eq:multitimewf} with $k=4$. 
A convenient basis for the tensor space can be obtained by taking the tensor product of the bases for each constituent:  
Let 
\beq
\cB_1 := \left\{ E_- = \left(\begin{array}{cc} 0 & 1\\ 0 & 0 \end{array}\right),\quad E_+ = \left(\begin{array}{cc} 0 & 0\\ 1 & 0 \end{array}\right) \right\}
\eeq
be the standard basis for the subspace of anti-diagonal  $2\times 2$ matrices, and let 
\beq
\cB_2 := \left\{ e_- = \left(\begin{array}{c} 1 \\ 0 \end{array}\right),\quad e_+ = \left(\begin{array}{c} 0 \\ 1 \end{array}\right) \right\}
\eeq
be the standard basis for $\Cset^2$.  Then a basis for the tensor product is
\beq
\cB := \left\{ E_-\otimes e_-,E_-\otimes e_+,E_+\otimes e_-,E_+\otimes e_+ \right\},
\eeq
{and t}hus a section $\Psi$ of $\Bset$ will have an expansion
\beq\label{expansion}
\Psi(\bx_\ph,\bx_\el) = 
\sum_{\varsigma_1,\varsigma_2\in\{-,+\}} \psi_{\varsigma_1\varsigma_2}(\bx_\ph,\bx_\el) E_{\varsigma_1}\otimes e_{\varsigma_2},
\eeq
where  {the} four components $\psi_{--},\psi_{-+},\psi_{+-},\psi_{++}$ are complex-valued  functions on $\cM$.  The spinorial character of $\Psi$ reveals itself in the transformation rules of these components: Under a Lorentz transformation\footnote{We are only interested in Lorentz transformations with {\em identical} action on the photonic and electronic variables.} of the form \refeq{def:SO(1,1)} these transform as
\beq
\psi_{--} \to e^{3a/2} \psi_{--},\quad \psi_{-+}\to e^{a/2} \psi_{-+},\quad \psi_{+-} \to e^{-a/2} \psi_{+-},\quad \psi_{++} \to e^{-3a/2}\psi_{++},
\label{eq:tranformationcpts}
\eeq
while under a space-reflection $\bP$ we have 
\beq
\psi_{--} \leftrightarrow \psi_{++},\qquad\psi_{-+}\leftrightarrow\psi_{+-}, 
\eeq
and under a time-reversal $\bT$ we have
\beq
\psi_{--} \to \psi_{++}^*,\quad \psi_{+-} \to  \psi_{-+}^*,\quad \psi_{-+}\to \psi_{+-}^*,\quad \psi_{++} \to \psi_{--}^*,
\eeq
where ${}^*$ denotes complex conjugation extended in the standard way to the tensor space.

A two-body two-time wave function $\Psi$ is called a {\em pure product state} if it is the tensor product of a one-body photon and a 
one-body electron wave function, $\Psi = \psiPH \otimes \psiEL$.
  Equation \refeq{expansion} shows that any two-time two-body wave function can be written as a linear superposition of pure products.
    A two-time two-body wave function that cannot be written as a single pure product state is called {\em entangled}.  

  A two-body quantum system is called {\em interacting} if a pure product initial state can evolve into an entangled state. 

\subsection{Defining equations of the electron-photon wave function dynamics}
We are now ready to write down the defining equations of our photon-electron model.
 Our goal is to set up an interacting dynamics via relativistic contact interactions, as outlined above. Let
\beq
\ga_\ph^\mu := \ga^\mu \otimes \mathds{1},\qquad \ga_\el^\nu := \mathds{1}\otimes \ga^\nu,
\eeq
and
\beq
D_\ph := \ga_\ph^\mu \p_{x_\ph^\mu},\qquad D_\el := \ga_\el^\nu \p_{x_\el^\nu}.
\eeq
Then the multi-time dynamics of the electron-photon wave function is defined by:
\begin{enumerate}
	\item[(a)] The free \textit{multi-time equations} on $\mathscr{S}$:
		\beq\label{eq:joint}
			\left\{ \begin{array}{rcl} -i\hbar D_\ph \Psi & = & 0\\
			-i\hbar D_\el \Psi + m_\el \Psi & = & 0 \end{array}\right.;
		\eeq
	\item[(b)] \textit{Initial data} specified on the surface
\beq\label{def:initialsurf}
\mathcal{I} := \{ (t_\ph,s_\ph,t_\el,s_\el) \in \overline{\mathscr{S}} :  t_\el = t_\ph = 0\},
\eeq
namely
	\beq\label{eq:init}
		\Psi(0,s_\ph,0,s_\el) = \accentset{\circ}{\psi}(s_\ph,s_\el);
	\eeq
	\item[(c)] \textit{Boundary conditions} on $\mathscr{C}$. These implement contact interactions, as described above. 
 Their exact form shall be determined in the next section by considerations about probability conservation. 
 {Here we note} that in 1+1 dimensions, $\mathscr{S}$ decomposes into two disjoint parts, {namely}
\beq\label{def:S12}
\mathscr{S}_1 = 
\{ (t_\ph,s_\ph, t_\el,s_\el) \in \mathscr{S} : s_\ph < s_\el \}, \qquad \mathscr{S}_2 = 
\{ (t_\ph,s_\ph, t_\el,s_\el) \in \mathscr{S} : s_\ph > s_\el \}{,}
\eeq
{and boundary conditions will be linear relations between limits along events in $\mathscr{S}_1$, respectively in
$\mathscr{S}_2$, towards the coincidence set $\mathscr{C}$ (where $s_\ph=s_\el$), of the components of $\Psi$; see 
\refeq{eq:bccomponents}.}
\end{enumerate}

\section{Conserved currents and the boundary condition} \label{sec:currents}

 Given a two-time two-body wave function $\Psi$, we define its {\em Dirac adjoint} $\overline{\Psi}$ as follows: If $\Psi$ is a pure product $\Psi = \psiPH\otimes \psiEL$ then the adjoint is defined by taking the product of the adjoints, i.e.
\beq\label{def:Diradj}
\overline{\Psi} := \overline{\psiPH} \otimes \overline{\psiEL} = \ga^0 \psiPH^\dagger \ga^0 \otimes \psiEL^\dagger \ga^0.
\eeq
For a general two-time two-body wave function, we first write it as a linear superposition of product states and define the 
adjoint by insisting that it be a linear operation. 
\begin{rem}
Despite the appearance of factors $\ga^0$ in \refeq{def:Diradj}, the Dirac adjoint operation is frame-independent.
  On Lorentzian backgrounds, it is the dagger, i.e. conjugate-transpose, operation that is frame-dependent, since 
it requires the choice of a time-like direction to be made (see e.g. \cite{Rie1946}.)
\end{rem}

Given the existence of a conserved probability current for each of the two particles separately 
(i.e. \refeq{def:PHcurr} and \refeq{def:ELcurr},) a good candidate for a joint probability current is the tensor product of the two currents,
 {\em \`a la} \refeq{eq:curvedborn}, for an appropriately chosen {\em constant} vectorfield $X$ that is solely determined by the joint wave function $\Psi$.
   To this end, let $X$ be a fixed {\em time-like}, constant vectorfield on the Minkowski spacetime $\Rset^{1,1}$.
  Later on we will show how to determine $X$ from the wave function. Let
\beq\label{def:jX}
j_X^{\mu\nu}[\Psi] := \frac{1}{{2}} \tr_\ph \{ \overline{\Psi} \ga_\ph^\mu \ga_\el^\nu \Psi \ga_\ph(X) \},
\eeq
where $\tr_\ph = \tr \otimes \mathds{1}$ is the operation of taking the trace of the photonic component (again,
 defined first on pure products and then extended by linearity.
  Recall that the photonic component of the wave function is a rank-two spinor, and a linear transformation, so its trace is well defined.)
We have

\begin{prop}\label{prop:conservedcurrent}
Let $\Psi$ be a {$C^1$-}solution of \refeq{eq:joint} and let $X$ be any  constant vectorfield on $\Rset^{1,1}$.  Then the current $j_X$ is jointly conserved, meaning
\beq \label{eq:continuity}
\left\{\begin{array}{rcl} \p_{x_\ph^\mu} j_X^{\mu\nu}  & =  & 0 \quad\mbox{ for } \nu = 0,1 \\
\p_{x_\el^\nu} j_X^{\mu \nu} & = & 0 \quad\mbox{ for } \mu = 0,1. 
\end{array}\right.
\eeq
\end{prop}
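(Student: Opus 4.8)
The plan is to establish both families of identities in \eqref{eq:continuity} by a single direct computation, using only the evolution equations \eqref{eq:joint}, their Dirac adjoints, and the fact that $\ga_\ph^\mu$ acts on the photonic tensor factor while $\ga_\el^\nu$ and the mass term act on the electronic one, so that the constant matrices $\ga_\ph^\mu$, $\ga_\el^\nu$ and $\ga_\ph(X)$ all commute with one another and slide freely inside the trace. Conceptually $j_X^{\mu\nu}$ is the ``tensor product'' of the photon current \eqref{def:PHcurr} with the electron current \eqref{def:ELcurr}: on a pure product $\Psi=\psiPH\otimes\psiEL$ one has $j_X^{\mu\nu}[\Psi]=\bigl(\tfrac14\tr(\overline{\psiPH}\ga^\mu\psiPH\ga(X))\bigr)\bigl(\overline{\psiEL}\ga^\nu\psiEL\bigr)$, whence $\p_{x_\ph^\mu}j_X^{\mu\nu}=0$ by \eqref{eq:PHconserv} and $\p_{x_\el^\nu}j_X^{\mu\nu}=0$ by conservation of \eqref{def:ELcurr}. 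Since the computation below never uses a factorization, it applies verbatim to a general entangled $C^1$ solution, so I would simply run it directly rather than argue that $\Psi$ decomposes into products.

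First I would record the Dirac adjoints of \eqref{eq:joint}. In the basis of Section~2 one checks $\ga^0(\ga^\mu)^\dagger\ga^0=\ga^\mu$ for $\mu=0,1$, and the Dirac adjoint \eqref{def:Diradj} is defined factorwise by the $\ga^0$-sandwich; combining these, taking the adjoint of $D_\ph\Psi=0$ yields
\[
(\p_{x_\ph^\mu}\overline{\Psi})\,\ga_\ph^\mu=0,
\]
and taking the adjoint of $-i\hbar D_\el\Psi+m_\el\Psi=0$ yields
\[
i\hbar\,(\p_{x_\el^\nu}\overline{\Psi})\,\ga_\el^\nu+m_\el\overline{\Psi}=0.
\]
I would emphasize that, because it is built from $\psi^{\dagger}$, the Dirac adjoint is antilinear, so the scalar $-i\hbar$ turns into $+i\hbar$ in the second relation; this sign is precisely what produces the cancellation below.

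For the photonic divergence, fix $\nu$, differentiate \eqref{def:jX} with respect to $x_\ph^\mu$, and use the product rule together with $[\ga_\ph^\mu,\ga_\el^\nu]=0$ to get
\[
\p_{x_\ph^\mu}j_X^{\mu\nu}=\tfrac14\tr_\ph\bigl\{(\p_{x_\ph^\mu}\overline{\Psi})\,\ga_\ph^\mu\,\ga_\el^\nu\Psi\,\ga_\ph(X)\bigr\}+\tfrac14\tr_\ph\bigl\{\overline{\Psi}\,\ga_\el^\nu\,(\ga_\ph^\mu\p_{x_\ph^\mu}\Psi)\,\ga_\ph(X)\bigr\}.
\]
The first trace vanishes because $(\p_{x_\ph^\mu}\overline{\Psi})\ga_\ph^\mu=0$, the second because $\ga_\ph^\mu\p_{x_\ph^\mu}\Psi=D_\ph\Psi=0$, so $\p_{x_\ph^\mu}j_X^{\mu\nu}=0$. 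For the electronic divergence, fix $\mu$ and differentiate with respect to $x_\el^\nu$; after sliding $\ga_\ph^\mu$ past the electronic factors and inserting $\ga_\el^\nu\p_{x_\el^\nu}\Psi=D_\el\Psi=-\tfrac{i m_\el}{\hbar}\Psi$ into one term and $(\p_{x_\el^\nu}\overline{\Psi})\ga_\el^\nu=\tfrac{i m_\el}{\hbar}\overline{\Psi}$ into the other, the two contributions become $\pm\tfrac{i m_\el}{4\hbar}\tr_\ph\{\overline{\Psi}\,\ga_\ph^\mu\Psi\,\ga_\ph(X)\}$ and cancel, giving $\p_{x_\el^\nu}j_X^{\mu\nu}=0$. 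This establishes \eqref{eq:continuity}.

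The computation itself is routine, and I do not expect a genuine obstacle; the places demanding care are getting the two Dirac-adjoint equations right inside the tensor-product formalism — in particular the identity $\overline{\ga^\mu}=\ga^\mu$, the factorwise definition of $\overline{\,\cdot\,}$, and its antilinearity — and tracking on which side each constant $\ga$-matrix multiplies its tensor factor. The one spot where a slip would actually break the argument is the relative sign of the two electronic mass terms, which hinges on $-i\hbar$ becoming $+i\hbar$ under the adjoint; that is worth double-checking. Everything else is bookkeeping with $\tr_\ph$ and the mutually commuting blocks.
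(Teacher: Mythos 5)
Your proof is correct and follows essentially the same route as the paper's: differentiate $j_X^{\mu\nu}$, kill the photonic divergence term-by-term using $D_\ph\Psi=0$ and its adjoint $(\p_{x_\ph^\mu}\overline{\Psi})\ga_\ph^\mu=0$, and cancel the two electronic mass terms $\pm\tfrac{im_\el}{\hbar}\tr_\ph\{\overline{\Psi}\ga_\ph^\mu\Psi\ga_\ph(X)\}$ coming from the Dirac equation and its adjoint. The only difference is that you spell out the derivation of the adjoint equations (via $\ga^0(\ga^\mu)^\dagger\ga^0=\ga^\mu$ and the antilinearity flipping $-i\hbar$ to $+i\hbar$), which the paper simply quotes; your signs check out.
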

\begin{proof}
The  photon equation and its adjoint, $\partial_{x_\ph^\mu} \overline{\Psi} \gamma^\mu=0$, imply:
\beq
	\partial_{x_\ph^\mu}  j_X^{\mu \nu} = \frac{1}{{2}} \tr_\ph \{ (\underbrace{\partial_{x_\ph^\mu} \overline{\Psi} \ga_\ph^\mu}_{=0}) \ga_\el^\nu \psi \ga_\ph(X) + \overline{\psi}  \ga_\el^\nu (\underbrace{\ga_\ph^\mu\partial_{x_\ph^\mu} \psi}_{=0}) \ga_\ph(X)\}= 0.
\eeq
Moreover, due to the Dirac equation for the electron and its adjoint, we have:
\begin{align}
	\partial_{x_\el^\nu}  j_X^{\mu \nu} &= \frac{1}{{2}} \tr_\ph \{ (\partial_{x_\el^\nu} \overline{\Psi} \ga_\el^\nu) \ga_\ph^\mu \Psi \ga_\ph(X) + \overline{\Psi}  \ga_\ph^\mu (\ga_\el^\nu\partial_{x_\el^\nu} \Psi) \ga_\ph(X)\}\nonumber\\
&= \frac{1}{{2\hbar}} \tr_\ph \{ im_\el \overline{\Psi} \ga_\ph^\mu \Psi \ga_\ph(X) + \overline{\Psi}  \ga_\ph^\mu (-im_\el \Psi) \ga_\ph(X)\}\nonumber\\
&=0.
\end{align}
\end{proof}

We now quote a result from \cite[Thm. 4.4]{Lie2015} which applies to every tensor current $j^{\mu \nu}(\bx_1,\bx_2)$ {that is $C^1$ in $\mathscr{S}$ (but possibly discontinuous across $\mathscr{C}$,) and that satisfies \eqref{eq:continuity} in $\mathscr{S}$.}

\begin{prop}\label{prop:probconserved}
	The {quantity}
	\beq\label{eq:P}
		P := \int_{\Sigma \times \Sigma} d \sigma_\mu(\bx_\ph) \, d\sigma_\nu(\bx_\el)\, j^{\mu \nu}_X(\bx_\ph,\bx_\el)
	\eeq
is conserved ({i.e. it is} independent of the Cauchy surface $\Sigma$) if the following condition of \emph{local {current} 
conservation} is satisfied:
 Let $\epsilon_{\mu \nu}$ stand for the Levi-Civita symbol. Then:
 \beq
	\lim_{\varepsilon \rightarrow 0} \epsilon_{\mu \nu} \left( j_X^{\mu \nu}(t,s-\varepsilon, t, s + \varepsilon) -  
j_X^{\mu \nu}(t,s+\varepsilon, t, s - \varepsilon)\right) = 0~~\forall t,s \in \Rset.
	\label{eq:localprobcons}
 \eeq

{Thus, $\Psi$ can be normalized in such a way that $P\equiv 1$, and then the integrand in \refeq{eq:P} has the interpretation of
 a joint probability density.}
\end{prop}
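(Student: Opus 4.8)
The plan is to show that $P$ is unchanged under an arbitrary deformation of the Cauchy surface, by differentiating $P[\Sigma_\tau]$ along a smooth one-parameter family of Cauchy surfaces and checking that the derivative collapses to a flux through the coincidence locus $\mathscr{C}$ which is annihilated by \refeq{eq:localprobcons}. Since any two spacelike Cauchy surfaces of $\Rset^{1,1}$, written as graphs $x^0 = f_i(x^1)$ with $|f_i'| < 1$, are joined by the family of (still spacelike) graphs $x^0 = (1-\tau)f_0(x^1) + \tau f_1(x^1)$, it suffices to treat such families; and since \refeq{eq:localprobcons} is local on $\Sigma$, the natural first step is the model case of flat slices $\Sigma_t := \{x_\ph^0 = x_\el^0 = t\}$.

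On a flat slice, $d\sigma_\mu(\bx) = \delta^0_\mu\, ds$, so $P(t) = \iint_{\Rset^2} j^{00}(t,s_\ph,t,s_\el)\, ds_\ph\, ds_\el$, and I shall assume the usual spatial decay of the current so that $P$ is finite and all contributions from $|s|=\infty$ drop out. Differentiating under the integral, using the chain rule $\frac{d}{dt}[j^{00}(t,s_\ph,t,s_\el)] = (\partial_{x_\ph^0} + \partial_{x_\el^0})j^{00}$, and then substituting $\partial_{x_\ph^0}j^{00} = -\partial_{x_\ph^1}j^{10}$ and $\partial_{x_\el^0}j^{00} = -\partial_{x_\el^1}j^{01}$ from the conservation law \refeq{eq:continuity}, one gets
\[ \frac{d}{dt}P(t) \;=\; -\iint_{\Rset^2}\bigl(\partial_{s_\ph}j^{10} + \partial_{s_\el}j^{01}\bigr)\, ds_\ph\, ds_\el . \]
The only genuinely delicate point is the next one: $j^{\mu\nu}$ is merely $C^1$ on $\mathscr{S} = \mathscr{S}_1 \cup \mathscr{S}_2$ and may jump across the diagonal $\{s_\ph=s_\el\}$, so the integration by parts must be carried out separately on $\mathscr{S}_1$ ($s_\ph<s_\el$) and on $\mathscr{S}_2$ ($s_\ph>s_\el$). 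On each piece the terms at $|s|=\infty$ vanish and only the diagonal boundary values remain; tracking the orientations, the two pieces combine into
\[ \frac{d}{dt}P(t) \;=\; \int_{\Rset} ds\ \lim_{\varepsilon\to0}\epsilon_{\mu\nu}\Bigl(j^{\mu\nu}(t,s-\varepsilon,t,s+\varepsilon) - j^{\mu\nu}(t,s+\varepsilon,t,s-\varepsilon)\Bigr), \]
in which the two terms are exactly the limits of $\epsilon_{\mu\nu}j^{\mu\nu}$ onto $\mathscr{C}$ from within $\mathscr{S}_1$ and from within $\mathscr{S}_2$. This vanishes by hypothesis \refeq{eq:localprobcons}, so $P$ is the same on all slices $\{x^0=\mathrm{const}\}$.

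For a general family $\{\Sigma_\tau\}$ the argument is structurally the same, with $\frac{d}{dt}$ replaced by the deformation vector field $V_\tau$; the most economical packaging is to write $P[\Sigma] = \int_{\Sigma\times\Sigma}\omega$, where $\omega$ is the $2$-form obtained by contracting $j^{\mu\nu}$ with the spacetime volume element in each of the two variables, to note that $\omega$ is closed precisely because of \refeq{eq:continuity}, and to conclude $\frac{d}{d\tau}P[\Sigma_\tau] = \int_{\partial(\Sigma_\tau\times\Sigma_\tau)}\iota_{V_\tau}\omega$. Because $\Sigma_\tau\times\Sigma_\tau$ is spacelike it meets $\mathscr{C}$ along its diagonal but never meets $\mathscr{L}$, so $\partial(\Sigma_\tau\times\Sigma_\tau)$ is just that diagonal traversed once from each side $\mathscr{S}_1$, $\mathscr{S}_2$ (together with the vanishing contribution at infinity). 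The resulting jump of $\iota_{V_\tau}\omega$ across the diagonal collapses once more to the combination $\epsilon_{\mu\nu}j^{\mu\nu}$, because $\epsilon_{\mu\nu}j^{\mu\nu}$ is invariant under proper Lorentz transformations ($\epsilon_{\mu\nu}\Lambda^\mu_\alpha\Lambda^\nu_\beta = (\det\Lambda)\,\epsilon_{\alpha\beta}$ with $\det\Lambda=1$ on $SO(1,1)$); one may therefore boost each point of the diagonal to a frame in which $\Sigma_\tau$ is momentarily flat and apply \refeq{eq:localprobcons} pointwise.

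The step I expect to be the main obstacle is the middle one: converting the double integral of the divergence into a single integral over $\mathscr{C}$ while keeping exact account of which one-sided limit (from $\mathscr{S}_1$ or from $\mathscr{S}_2$) and which sign attaches to each term, so that precisely the antisymmetric combination $\epsilon_{\mu\nu}j^{\mu\nu}$ — and nothing more — is left over. The reduction to flat slices, the covariance remark used for general Cauchy surfaces, and the analytic hypotheses (finiteness of $P$, spatial decay) are all routine by comparison.
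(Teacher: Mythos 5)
Your argument is correct, but note that the paper itself does not prove this proposition: it is quoted verbatim from Theorem~4.4 of \cite{Lie2015}, so there is no in-paper proof to compare against. What you have written is essentially the standard argument of that reference. I checked your sign bookkeeping in the flat-slice step, which you rightly identified as the crux: using $\partial_{x_\ph^0}j^{00}=-\partial_{x_\ph^1}j^{10}$ and $\partial_{x_\el^0}j^{00}=-\partial_{x_\el^1}j^{01}$ from \eqref{eq:continuity} and integrating by parts separately over $\mathscr{S}_1$ and $\mathscr{S}_2$, the $\mathscr{S}_1$ half contributes $+\!\int ds\,(j^{01}-j^{10})$ evaluated from the $s_\ph<s_\el$ side and the $\mathscr{S}_2$ half contributes $-\!\int ds\,(j^{01}-j^{10})$ from the $s_\ph>s_\el$ side, which is exactly $\int ds$ of the combination in \eqref{eq:localprobcons} (with $\epsilon_{01}=1$). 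Two small points worth making explicit. First, your integration by parts produces one-sided limits of the form $(t,s_\el-\varepsilon,t,s_\el)$ rather than the symmetric $(t,s-\varepsilon,t,s+\varepsilon)$ of \eqref{eq:localprobcons}; these agree only because $j^{\mu\nu}$ extends continuously to $\mathscr{C}$ from within each of $\overline{\mathscr{S}}_1$, $\overline{\mathscr{S}}_2$ (guaranteed here by the regularity in Theorem~\ref{thm:main}, and implicitly presupposed for \eqref{eq:localprobcons} to be meaningful), so the limit onto $\mathscr{C}$ is direction-independent within each half. Second, the decay at infinity you assume is supplied in the present setting by the compact support of $\Psi$ on $(\Sigma\times\Sigma)\cap\overline{\mathscr{S}}_1$. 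Your treatment of general Cauchy surfaces --- convex interpolation of spacelike graphs, closedness of the contracted $2$-form $\omega$ as a repackaging of \eqref{eq:continuity}, reduction of the boundary term on the diagonal to the $SO(1,1)$-invariant $\epsilon_{\mu\nu}j^{\mu\nu}$ --- is sound and is the natural covariant completion of the flat-slice computation.
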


For simplicity, we specialize to the case
\beq
	\lim_{\varepsilon \rightarrow 0} \epsilon_{\mu \nu}  j_X^{\mu \nu}(t,s-\varepsilon, t, s + \varepsilon) = 0 
= \lim_{\varepsilon \rightarrow 0} \epsilon_{\mu \nu} j_X^{\mu \nu}(t,s+\varepsilon, t, s - \varepsilon).
	\label{eq:localprobcons2}
\eeq

\begin{rem} Physically, this boundary condition corresponds to the electron and the photon not passing through, but completely reflecting 
off of, each other {--- precisely as required for Compton scattering in 1D}.
\end{rem}

\begin{prop}
	Let the vector field $X$ be time-like.
 Then the only translation-invariant class of boundary conditions for the components of $\Psi$ (in any given Lorentz frame) that 
leads to \eqref{eq:localprobcons2} is:
\beq
		\lim_{\varepsilon \rightarrow 0} \psi_{-+}(t,s\mp \varepsilon, t, s \pm \varepsilon) =
 e^{i \theta_\pm} \left( \frac{X^0+X^1}{X^0 - X^1} \right)^{\frac12} \lim_{\varepsilon \rightarrow 0} 
\psi_{+-}(t,s\mp \varepsilon, t, s \pm \varepsilon)~~\forall t,s \in \Rset
	\label{eq:bccomponents}
\eeq
for constant phases $\theta_\pm \in {[0,2\pi)}$.
\end{prop}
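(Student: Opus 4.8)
The plan is to reduce the local probability conservation requirement \eqref{eq:localprobcons2} to an explicit algebraic condition on the four boundary values of $\Psi$, and then classify the translation-invariant linear relations among those values that enforce it.

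First I would compute the antisymmetrized current $\epsilon_{\mu\nu}j_X^{\mu\nu}=j_X^{01}-j_X^{10}$ in terms of the components $\psi_{\varsigma_1\varsigma_2}$. Expanding $\Psi=\sum_{ab}\psi_{ab}\,E_a\otimes e_b$ and using $\overline{E_\pm}=E_\pm$, one has $j_X^{\mu\nu}=\tfrac14\sum_{ab,a'b'}\psi_{ab}^*\psi_{a'b'}\,\tr\!\big(E_a\ga^\mu E_{a'}\ga(X)\big)\,\big(\overline{e_b}\ga^\nu e_{b'}\big)$. The identities $\ga^0-\ga^1=2E_-$, $\ga^0+\ga^1=2E_+$ together with $E_-E_-=E_+E_+=0$ give $E_-\ga^0=E_-\ga^1$ and $E_+\ga^0=-E_+\ga^1$; hence the photonic traces are diagonal in $(a,a')$, with $\tr(E_-\ga^0E_-\ga(X))=X^0-X^1$, $\tr(E_+\ga^0E_+\ga(X))=X^0+X^1$ and the sign of the $E_+$ contribution flipping for $\mu=1$. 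The electronic factors $\overline{e_b}\ga^\nu e_{b'}=e_b^\dagger\ga^0\ga^\nu e_{b'}$ are likewise diagonal. Thus only $|\psi_{ab}|^2$ terms survive, and a short computation shows that the $|\psi_{--}|^2$ and $|\psi_{++}|^2$ contributions cancel in $j_X^{01}-j_X^{10}$, leaving
\[
\epsilon_{\mu\nu}j_X^{\mu\nu}[\Psi]=\tfrac12\big((X^0+X^1)\,|\psi_{+-}|^2-(X^0-X^1)\,|\psi_{-+}|^2\big).
\]
Consequently \eqref{eq:localprobcons2} is equivalent to the single scalar identity $(X^0+X^1)|\psi_{+-}|^2=(X^0-X^1)|\psi_{-+}|^2$ holding for the boundary limit taken within $\mathscr{S}_1$ and, separately, within $\mathscr{S}_2$.

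Next I would determine which boundary conditions produce this. Since $X$ is time-like, $\bm{\eta}(X,X)=(X^0+X^1)(X^0-X^1)>0$, so the two coefficients have the same sign and $\big((X^0+X^1)/(X^0-X^1)\big)^{1/2}$ is a well-defined positive real; this is exactly where time-likeness enters (for space-like $X$ the identity would force $\psi_{+-}=\psi_{-+}=0$, which is not a viable reflecting interaction). The scalar identity is the vanishing of an indefinite Hermitian form of signature $(1,1)$ in the pair $(\psi_{+-},\psi_{-+})$ and involves neither $\psi_{--}$ nor $\psi_{++}$. A translation-invariant boundary condition is a linear relation with $(t,s)$-independent coefficients among the boundary limits of the four components. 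It cannot constrain $\psi_{--}$ or $\psi_{++}$: these carry generic initial data and, by the explicit characteristic solution of \eqref{eq:joint}, their coincidence limits can be prescribed independently of $\psi_{+-},\psi_{-+}$, so a relation involving them would leave $(\psi_{+-},\psi_{-+})$ free and violate the identity. Hence the condition has the form $b\,\psi_{-+}+c\,\psi_{+-}=0$; the cases $b=0$ or $c=0$ force one of the two moduli to vanish with the other still free, hence both to vanish, a degenerate over-restrictive ``absorbing'' case discarded by the reflecting interpretation of the Remark preceding the statement (and incompatible with the count of boundary data in Section 5). Otherwise $\psi_{-+}=-(c/b)\psi_{+-}$ and the identity forces $|c/b|^2=(X^0+X^1)/(X^0-X^1)$, i.e. $-c/b=e^{i\theta}\big((X^0+X^1)/(X^0-X^1)\big)^{1/2}$; applying this separately on $\mathscr{S}_1$ and $\mathscr{S}_2$ yields the two phases $\theta_+,\theta_-$ of \eqref{eq:bccomponents}. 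The converse — that \eqref{eq:bccomponents} does imply \eqref{eq:localprobcons2} — is then immediate by substituting back into the displayed formula for $\epsilon_{\mu\nu}j_X^{\mu\nu}$.

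The main obstacle is the step asserting that an admissible boundary condition cannot constrain $\psi_{--}$ and $\psi_{++}$. Making this precise requires knowing that solutions of \eqref{eq:joint} realize arbitrary prescribed coincidence values of $\psi_{--},\psi_{++}$ (and of $\psi_{+-}$, say) — which follows from the exact solvability and domain-of-dependence structure recorded earlier, but should be stated carefully — or, equivalently, invoking the characteristic count near $\mathscr{C}$ that dictates exactly one complex relation per connected component $\mathscr{S}_1,\mathscr{S}_2$ of $\mathscr{S}$, so the boundary condition is forced to be minimal and the linear-algebra argument above applies. The remaining steps — the Clifford-algebra bookkeeping of the photonic and electronic trace factors, and the classification of isotropic complex lines of a signature-$(1,1)$ Hermitian form — are routine.
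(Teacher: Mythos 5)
Your proposal is correct and follows essentially the same route as the paper: compute $\epsilon_{\mu\nu}j_X^{\mu\nu}=j_X^{01}-j_X^{10}$ in components (your trace bookkeeping reproduces the paper's \eqref{eq:jcomponents} up to an irrelevant overall constant, with the $|\psi_{--}|^2$ and $|\psi_{++}|^2$ terms cancelling), reduce \eqref{eq:localprobcons2} to $(X^0-X^1)|\psi_{-+}|^2=(X^0+X^1)|\psi_{+-}|^2$, use time-likeness of $X$ to get positivity of the ratio and hence a phase relation, and invoke translation invariance to make the phase constant. Your additional discussion of why an admissible linear boundary condition cannot involve $\psi_{--},\psi_{++}$ and why the degenerate cases $b=0$ or $c=0$ are excluded is extra care on the ``only'' claim that the paper passes over silently, and does not change the substance of the argument.
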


\begin{proof}
 	For ease of notation, we omit the limits and arguments of the wave function in \eqref{eq:localprobcons2} and focus on $\mathscr{S}_1$.
 First, we write out $j_X^{\mu \nu}$ in terms of the components of $\psi$. An elementary calculation yields:
\beq
	j_X^{\mu \nu} = 
{\half}\sum_{\rho = 0,1} X_\rho \left[|\psi_{--}|^2+(-1)^\nu |\psi_{-+}|^2+(-1)^{\mu+\rho}|\psi_{+-}|^2+ (-1)^{\mu+\nu + \rho} |\psi_{++}|^2 \right].
\label{eq:jcomponents}
\eeq
Thus:
\begin{align}
\left( j^{01}_X - j^{10}_X \right) ~=~~ &\tfrac{1}{2} \sum_{\rho = 0,1} X_\rho\left[ |\psi_{--}|^2 - |\psi_{-+}|^2 + (-1)^{\rho} |\psi_{+-}|^2   - (-1)^\rho  |\psi_{++}|^2 \right]\nonumber\\
-&\tfrac{1}{2} \sum_{\rho = 0,1} X_\rho \left[ |\psi_{--}|^2 + |\psi_{-+}|^2  - (-1)^\rho |\psi_{+-}|^2  - (-1)^\rho |\psi_{++}|^2 \right]\nonumber\\
=~~&-|\psi_{-+}|^2(X_0 + X_1) + |\psi_{+-}|^2(X_0-X_1),
\end{align}
so that the boundary condition \refeq{eq:localprobcons2} is equivalent to
\beq\label{eq:bndrycnd}
|\psi_{-+}|^2 = \frac{X^0+X^1}{X^0 - X^1}|\psi_{+-}|^2.
\eeq
(Recall that $X_0 = X^0$, $X_1 = -X^1$, and that $X$ is not null.) 
As $X$ is time-like, i.e. $(X^0)^2 - (X^1)^2 > 0$, we have that $\frac{X^0+X^1}{X^0 - X^1}>0$.
 Thus the two sides of \refeq{eq:bndrycnd} can be viewed as the squares of the moduli of two complex numbers, therefore these complex 
numbers can only differ by a phase, which yields \refeq{eq:bccomponents}, except that the phases $\theta_i$ could be functions of $t, s$.
 However, the only translation invariant possibilities clearly are $\theta_\pm = {\rm const}$.
\end{proof}
\begin{thm}
	The only choices for $\theta_\pm$ which make the boundary condition \refeq{eq:bccomponents}  invariant under the full Lorentz group are
\beq
	\theta_-  = -\theta_+ =: \theta ~~~{\rm with}~~~ \theta \in {[0,2\pi)}.
\eeq
\end{thm}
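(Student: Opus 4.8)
The plan is to use that the full Lorentz group $O(1,1)$ is generated by the proper boosts in $SO(1,1)$ together with the space-reflection $\bP$ and the time-reversal $\bT$, and to read off, for each of these three generators, the algebraic constraint that invariance of \eqref{eq:bccomponents} places on the pair $(\theta_+,\theta_-)$. Three facts will be used throughout: the vector field $X$ is determined by $\Psi$ and so transforms as a genuine Lorentz vector; since $X$ is time-like, $r:=(X^0+X^1)/(X^0-X^1)>0$, so the factor $r^{1/2}$ in \eqref{eq:bccomponents} is the ordinary positive square root; and the splitting $\mathscr{S}=\mathscr{S}_1\cup\mathscr{S}_2$ together with $\mathscr{C}$ is preserved by boosts, $\bT$ maps each of $\mathscr{S}_1,\mathscr{S}_2$ to itself, whereas $\bP$ interchanges $\mathscr{S}_1\leftrightarrow\mathscr{S}_2$.

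For a boost of rapidity $a$ as in \eqref{def:SO(1,1)}, the component rule \eqref{eq:tranformationcpts} gives $\psi_{-+}\mapsto e^{a/2}\psi_{-+}$ and $\psi_{+-}\mapsto e^{-a/2}\psi_{+-}$, while $X^0\pm X^1\mapsto e^{\pm a}(X^0\pm X^1)$, hence $r\mapsto e^{2a}r$ and $r^{1/2}\mapsto e^{a}r^{1/2}$. Substituted into \eqref{eq:bccomponents}, the left-hand factor $e^{a/2}$ is cancelled by $e^{a}\cdot e^{-a/2}$ on the right, so each of the two relations is reproduced with the \emph{same} $\theta_\pm$: boosts impose no constraint.

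Under $\bP$ one has $\psi_{-+}\leftrightarrow\psi_{+-}$, the sign flip $X^1\mapsto -X^1$ sends $r\mapsto r^{-1}$, and a limit taken from $\mathscr{S}_1$ is carried to one taken from $\mathscr{S}_2$. Inserting this into the $\mathscr{S}_1$-instance of \eqref{eq:bccomponents} (phase $\theta_+$) and re-solving for the ``$\psi_{-+}$ in terms of $\psi_{+-}$'' form on $\mathscr{S}_2$ yields $\psi_{-+}=e^{-i\theta_+}r^{1/2}\psi_{+-}$; comparison with the original $\mathscr{S}_2$-instance, which carries phase $\theta_-$, forces $e^{-i\theta_+}=e^{i\theta_-}$, i.e. $\theta_-=-\theta_+\pmod{2\pi}$ (the $\mathscr{S}_2\to\mathscr{S}_1$ half gives the identical relation). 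For $\bT$, which is anti-unitary and acts by $\psi_{-+}\mapsto\psi_{+-}^{*}$, $\psi_{+-}\mapsto\psi_{-+}^{*}$, preserves $\mathscr{S}_1$ and $\mathscr{S}_2$, and again sends $r\mapsto r^{-1}$: complex-conjugating \eqref{eq:bccomponents} affects only the phase $e^{i\theta_\pm}$ (because $r^{1/2}$ is real and positive), and after the substitutions the transformed relation is exactly \eqref{eq:bccomponents} again with unchanged $\theta_\pm$; thus $\bT$ imposes nothing further. Since $SO(1,1)$, $\bP$ and $\bT$ generate $O(1,1)$, the only fully invariant choices are $\theta_-=-\theta_+=:\theta$.

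The single delicate step is the $\bP$ computation: one must transform in concert the one-sided limits defining the boundary condition, the labels $\mathscr{S}_1/\mathscr{S}_2$, and the explicit occurrence of $X$ (which rotates under $\bP$), and a sign slip in any of these would corrupt the relation between $\theta_+$ and $\theta_-$. A minor additional point is that the $\bT$ argument genuinely needs $X$ time-like, so that $r^{1/2}$ is real and complex conjugation leaves the factor $\left(\tfrac{X^0+X^1}{X^0-X^1}\right)^{1/2}$ untouched, acting only on the phase.
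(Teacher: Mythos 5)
Your proposal is correct and follows essentially the same route as the paper: decompose $O(1,1)$ into boosts, $\bP$, and $\bT$; verify that boosts impose no constraint because $e^{a/2}$ on the left cancels $e^{a}\cdot e^{-a/2}$ on the right; extract $\theta_-=-\theta_+$ from the $\bP$-induced swap $\mathscr{S}_1\leftrightarrow\mathscr{S}_2$ together with $r\mapsto r^{-1}$; and check that $\bT$ (anti-unitary, $r\mapsto r^{-1}$) is then automatically satisfied for every $\theta$. The computations match the paper's step for step, so no further comment is needed.
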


\begin{proof}

We decompose the proof into three steps: 1) invariance under proper Lorentz transformations, 2) reflection invariance, and 3) 
time reversal invariance.

With regard to 1), invariance of \eqref{eq:bccomponents} under $\Lambda \in SO(1,1)$ can be seen using the transformation behavior \eqref{eq:tranformationcpts}:
\begin{align}
	\eqref{eq:bccomponents} ~\stackrel{\Lambda}{\longrightarrow}~~ e^{a/2} \psi_{-+} ~&=~ e^{i \theta_\pm} \left( \frac{ (\cosh a + \sinh a) X^0 + (\cosh a + \sinh a) X^1}{ (\cosh a - \sinh a) X^0 - (\cosh a - \sinh a) X^1} \right)^{\frac{1}{2}} e^{-a/2} \psi_{+-}\nonumber\\
\Leftrightarrow~~~~~ \psi_{-+} &= e^{i \theta_\pm} \left( \frac{e^a( X^0 + X^1)}{ e^{-a} (X^0 - X^1) } \right)^{1/2} e^{-a} \psi_{+-}\nonumber\\
\Leftrightarrow~~~~~ \psi_{-+} &= e^{i \theta_\pm} \left( \frac{X^0 + X^1}{ X^0 - X^1} \right)^{1/2} \psi_{+-},
\end{align}
which gives back \eqref{eq:bccomponents}.

With regard to 2), reflection invariance, we have $\psi_{+-} \leftrightarrow \psi_{-+}$ under a reflection $x \to \bP x$ (as noted before).
 Thus, the boundary condition on $\mathscr{S}_1$ transforms to:
	\begin{align}
		\lim_{\varepsilon \rightarrow 0} \psi_{-+}(t,s-\varepsilon, t, s +\varepsilon) &= e^{i \theta_+} \left( \frac{X^0+X^1}{X^0 - X^1} \right)^{1/2} \lim_{\varepsilon \rightarrow 0} \psi_{+-}(t,s-\varepsilon, t, s + \varepsilon)\nonumber\\
	\stackrel{\bP}{\longrightarrow}~~
	\lim_{\varepsilon \rightarrow 0} \psi_{+-}(t,-s+\varepsilon, t, -s -\varepsilon) &= e^{i \theta_+} \left( \frac{X^0-X^1}{X^0 + X^1} \right)^{1/2} \lim_{\varepsilon \rightarrow 0} \psi_{-+}(t,-s+\varepsilon, t, -s-\varepsilon)\nonumber\\
\Leftrightarrow ~~\lim_{\varepsilon \rightarrow 0} \psi_{-+}(t,-s+\varepsilon, t, -s-\varepsilon) &= e^{-i\theta_+} \left( \frac{X^0+X^1}{X^0 - X^1} \right)^{1/2} \lim_{\varepsilon \rightarrow 0} \psi_{+-}(t,-s+\varepsilon, t, -s -\varepsilon).
	\end{align}
 This is a boundary condition \eqref{eq:bccomponents} on $\mathscr{S}_2$ with
\beq
	\theta_- = -\theta_+.
\eeq
As there is only one phase remaining, we simply call it $\theta = \theta_+ = -\theta_-$.

Now we turn to 3), time reversal invariance.
 Recall that under a time-reversal $\bT$, the relevant components of $\psi$ transform as follows: $\psi_{+-} \to \psi_{-+}^*$ and $\psi_{-+} \to \psi_{+-}^*$.
  Hence
\begin{align}
		\lim_{\varepsilon \rightarrow 0} \psi_{-+}(t,s\mp\varepsilon, t, s \pm\varepsilon) &= e^{\pm i \theta} \left( \frac{X^0+X^1}{X^0 - X^1} \right)^{1/2} \lim_{\varepsilon \rightarrow 0} \psi_{+-}(t,s\mp\varepsilon, t, s \pm \varepsilon)\nonumber\\
	\stackrel{\bT}{\longrightarrow}~~
	\lim_{\varepsilon \rightarrow 0} \psi_{+-}^*(-t,s\mp \varepsilon, -t, s \pm\varepsilon) &=  e^{\pm i \theta} \left( \frac{-X^0+X^1}{-X^0 - X^1} \right)^{1/2} \lim_{\varepsilon \rightarrow 0} \psi_{-+}^*(-t,s\mp\varepsilon, -t,s\pm\varepsilon)\nonumber\\
\Leftrightarrow ~~\lim_{\varepsilon \rightarrow 0} \psi_{-+}(-t,s\mp\varepsilon, -t, s\pm\varepsilon) &= e^{\pm i\theta} \left( \frac{X^0+X^1}{X^0 - X^1} \right)^{1/2} \lim_{\varepsilon \rightarrow 0} \psi_{+-}(-t,s\mp \varepsilon, -t, s \pm\varepsilon)
	\end{align}
which is consistent with the original boundary condition, for all $\theta$.
\end{proof}
The dynamics on the two parts $\mathscr{S}_1, \mathscr{S}_2$ of the domain are now independent, as these are disjoint open sets 
and the boundary conditions for $\Psi$ on one subdomain do not involve the other one. 
This allows us to focus on the dynamics in, say, $\mathscr{S}_1$; the dynamics in $\mathscr{S}_2$ {can then be} treated analogously.

\section{\hspace{-10pt}The initial-boundary-value problem for two-time wave functions}

In this section we consider the initial-boundary value problem (IBVP) for the two-time two-body wave function 
$\Psi = (\psi_{--},\psi_{-+},\psi_{+-},\psi_{++})$,
 consisting of \refeq{eq:joint}, \refeq{eq:init}, and \refeq{eq:bccomponents}, and prove its well-posedness.  

\begin{thm} \label{thm:main}
Let $X$ be any constant time-like and future-directed vectorfield of the Minkowski spacetime $\Rset^{1,1}$, whose components in a given Lorentz frame are denoted $(X^0,X^1)$, and let $\theta \in [0,2\pi)$ be a fixed angle.  Let $\mathscr{S}$ denote the set of space-like configurations in $\cM = \Rset^{1,1}\times \Rset^{1,1}$, as defined in \refeq{eq:spacelikeconfigs}, let $\mathscr{C}\subset \p\mathscr{S}$ denote the set of {coincidence points}, 
 as in \refeq{def:coincidences}, and let $\mathcal{I}$ be the initial surface, as in \refeq{def:initialsurf}.  Let 
$
\accentset{\circ}{\psi}: 
\mathcal{I}\to \Cset^4
$
be {$C^1$} data that is compactly supported in the half-space $\mathcal{I}\cap \overline{\mathscr{S}}_1$ (with $\mathscr{S}_1$ as in \refeq{def:S12}). Assume that the initial data are compatible with the boundary condition, i.e. 
\beq\label{eq:compat}
\opsi_{-+}(s,s) = e^{i \theta} \sqrt{\frac{X^0+X^1}{X^0 - X^1}} \opsi_{+-}(s,s),\qquad \forall s \in \Rset. 
\eeq
 Then the following initial-boundary value problem for the multi-time wave function $\Psi:\cM \to \Cset^4$,
\beq\label{eq:IBVP}
\left\{ 
\begin{array}{rclr} -i\hbar D_\ph \Psi & = & 0 &\\
			-i\hbar D_\el \Psi + m_\el \Psi & = & 0 & \mbox{ in }\mathscr{S}\\
\Psi & = & \accentset{\circ}{\psi} & \mbox{ on }\mathcal{I}\\
 \psi_{-+} & = & e^{i \theta} \sqrt{\frac{X^0+X^1}{X^0 - X^1}} \psi_{+-} & \mbox{ on }\mathscr{C}
\end{array}
\right.
\eeq
(see \refeq{eq:bccomponents} for the precise statement of the boundary condition) has a unique global-in-time solution that is
 supported in $\overline{\mathscr{S}}_1$, depends continuously on the initial data $\opsi$, 
is $C^1$ in $\mathscr{S}_1 \backslash \mathscr{B}$ with 
\beq
 \mathscr{B} := \{ (t_\ph,s_\ph,t_\el,s_\el) \in \mathcal{M} : s_\ph + t_\ph = s_\el - t_\el \},
\eeq
is continuous across $\mathscr{B}$, and its first partial derivatives are bounded in a neighborhood of $\mathscr{B}$.
Furthermore, for every Cauchy surface $\Sigma \subset \mathbb{R}^{1,1}$, the solution is compactly supported in
 $(\Sigma \times \Sigma) \cap \overline{\mathscr{S}}_1$.

\end{thm}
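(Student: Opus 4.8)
The plan is to pass to null (characteristic) coordinates, where both free Dirac operators become explicitly integrable. Put $a=s_\ph-t_\ph$, $b=s_\ph+t_\ph$, $p=s_\el-t_\el$, $q=s_\el+t_\el$; an elementary computation gives $\mathscr{S}_1=\{a<p,\ b<q\}$, $\mathscr{C}=\{a=p,\ b=q\}$, $\mathcal{I}\cap\overline{\mathscr{S}}_1=\{a=b\le p=q\}$, and $\mathscr{B}=\{b=p\}$. In components $-i\hbar D_\ph\Psi=0$ reads $(\partial_{t_\ph}+\partial_{s_\ph})\psi_{-\varsigma_2}=0$ and $(\partial_{t_\ph}-\partial_{s_\ph})\psi_{+\varsigma_2}=0$; since $\mathscr{S}_1$ is an interval in $b$ (resp.\ in $a$) once the other three coordinates are fixed, these force $\psi_{--},\psi_{-+}$ to depend only on $(a,p,q)$ and $\psi_{+-},\psi_{++}$ only on $(b,p,q)$. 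The electron equation then becomes, for each frozen value of the inert photon coordinate, the explicitly solvable first-order system $\partial_p\psi_{-+}=\tfrac{im_\el}{2\hbar}\psi_{--}$, $\partial_q\psi_{--}=-\tfrac{im_\el}{2\hbar}\psi_{-+}$ on the half-plane $\{p>a\}$ of the $(p,q)$-plane, and likewise $\partial_p\psi_{++}=\tfrac{im_\el}{2\hbar}\psi_{+-}$, $\partial_q\psi_{+-}=-\tfrac{im_\el}{2\hbar}\psi_{++}$ on $\{q>b\}$. So \eqref{eq:IBVP} decouples into two one-parameter families of characteristic initial-boundary value problems for the $1{+}1$ massive Dirac equation, linked only through the common Cauchy data on $\{p=q\}$ coming from $\opsi$ and through the coincidence relation \eqref{eq:bccomponents}.

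Next I would solve these. For fixed $a$, integration along characteristics (equivalently, Picard iteration on the associated Volterra integral system) determines $(\psi_{--},\psi_{-+})(a,\cdot,\cdot)$ on the domain of determinacy $\{p>a,\ q>a\}$ of the Cauchy ray $\{p=q\ge a\}$, with no boundary data needed; on the complementary wedge $\{p>a,\ q<a\}$ one additionally needs $\psi_{-+}$ on the characteristic edge $\{p=a\}$, and \eqref{eq:bccomponents} supplies exactly this, expressing $\psi_{-+}$ there as a fixed multiple of $\psi_{+-}$ evaluated at a point whose $(b,p,q)$-coordinates satisfy $p>b$, i.e.\ a point already determined by the ``$+$''-sector's own Cauchy data. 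The mirror statement provides the ``$+$''-sector's edge data on $\{q=b\}$ from Cauchy-determined ``$-$''-values. Thus the construction proceeds without circularity: (i) solve both Cauchy-determined regions from $\opsi$; (ii) read off the two halves of the $\mathscr{C}$-data via \eqref{eq:bccomponents}; (iii) solve the two remaining wedges as Goursat problems (data prescribed on two transverse characteristics), again by Picard iteration. The wedges and their analogs assemble into $\mathscr{N}_\Sigma$ together with its counterpart to the past of $\mathcal{I}$, which is where the interaction encoded by \eqref{eq:bccomponents} actually enters.

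Well-posedness follows from this construction. Existence is immediate; uniqueness holds because every component at an arbitrary point of $\mathscr{S}_1$ is forced, by the transport/Volterra relations and at most one reflection off $\mathscr{C}$, to equal the constructed value, the compatibility hypothesis \eqref{eq:compat} guaranteeing the two determinations agree at the corner $\mathscr{C}\cap\mathcal{I}$; continuous (in fact linear) dependence on $\opsi$ is a Gr\"onwall estimate on the Picard iterates, uniform on compact sets. The solution remains supported in $\overline{\mathscr{S}}_1$ since $\mathscr{S}_1$ and $\mathscr{S}_2$ are dynamically disconnected and one sets $\Psi\equiv0$ on $\mathscr{S}_2$, and compact support of $\Psi$ on any $\Sigma\times\Sigma$ is finite propagation speed together with compactness of $\supp\opsi$.

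The part I expect to be the main obstacle is the regularity statement. Off the loci where a component switches between ``Cauchy-determined'' and ``boundary-condition-determined'' behaviour --- the leading edges of the reflected waves --- each $\psi_{\varsigma_1\varsigma_2}$ is a classical solution of a linear hyperbolic system with $C^1$ data, hence $C^1$. The work is to show that these transition loci all lie in $\mathscr{B}$ --- here the hypothesis $\supp\opsi\subset\overline{\mathscr{S}}_1$ is essential, as it forces the edge data feeding one sector from the other to vanish off a bounded range and thereby excludes spurious singular surfaces --- that across $\mathscr{B}$ the constructed pieces still match continuously (which is again precisely what \eqref{eq:compat} delivers), and that the one-sided first partial derivatives remain bounded as $s_\ph+t_\ph\nearrow s_\el-t_\el$ and as $s_\ph+t_\ph\searrow s_\el-t_\el$ (read off from the explicit Volterra/Goursat representations). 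Carrying out this bookkeeping --- and verifying that the single reflection off $\mathscr{C}$ creates no further singular surface --- is where the bulk of the technical effort lies.
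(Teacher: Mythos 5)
Your construction is essentially the paper's: pass to null coordinates, use the photon equation to make $\psi_{--},\psi_{-+}$ functions of $(a,p,q)$ and $\psi_{+-},\psi_{++}$ functions of $(b,p,q)$, solve the Cauchy-determined quadrants from $\opsi$, and fill the remaining wedge of $\mathscr{N}_\Sigma$ by a characteristic (Goursat) problem whose data on the $\mathscr{C}$-edge is supplied by \refeq{eq:bccomponents} applied to an already Cauchy-determined component, with \refeq{eq:compat} matching the two data sets at the corner. Your identification of the regions, of the non-circularity of the reflection, and of $\mathscr{B}$ as the only interface is correct (and your symmetric treatment of the past-directed wedge, which the paper suppresses by assuming $t_\ph,t_\el>0$, is a small improvement). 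The one methodological difference is that you solve the first-order $2\times2$ system by Picard/Volterra iteration, whereas the paper reduces each pair to a scalar Klein--Gordon equation and writes the solution explicitly with Bessel kernels (\refeq{sol:--}--\refeq{sol:+-QminusR}). Both routes give existence, uniqueness, and continuous dependence; the paper's choice is not cosmetic, though, because the explicit kernels are what it then uses to read off the regularity claims.

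That last point is where your proposal is genuinely incomplete rather than merely different: the theorem asserts, as part of its conclusion, that $\Psi$ is $C^1$ off $\mathscr{B}$, continuous across $\mathscr{B}$, and has bounded one-sided first derivatives as $s_\ph+t_\ph\nearrow s_\el-t_\el$ and $\searrow s_\el-t_\el$, and these facts are consumed later (they give the Lipschitz property of $j_X$ needed for the trajectory theorem). You correctly diagnose that this is the hard part and that it must be ``read off from the explicit Volterra/Goursat representations,'' but you do not carry it out. In the paper this step is a direct computation: continuity across $\mathscr{B}$ of $\psi_{+-}$ follows because the Cauchy formula \refeq{sol:+-R} depends on the photon variables only through $s_\ph+t_\ph$ and matches the Goursat formula \refeq{sol:+-QminusR} at $\tilde s_\el=-t_\el$ via $\ze(0)=\xi_\theta(0)$; continuity of $\psi_{++}$ follows from \refeq{sol:++int} since the integral there degenerates as $\tilde T\to t_\el$; and the derivative bounds come from the smoothness and boundedness of $J_0(x)$ and $J_1(x)/x$ together with the finite integration ranges. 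To make your version complete you would need to extract the same statements from your Picard iterates (e.g., by summing the iteration to obtain the Riemann-function representation, which is exactly the paper's Bessel formula), so in practice the two proofs converge at this point.
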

\begin{proof}
We begin by writing the equations in components.  With coordinates on $\cM$ as in \refeq{def:coords}, the massless equations are
\beq\label{eq:massless}
\p_{u_\ph} \psi_{++} = 0,\quad \p_{u_\ph}\psi_{+-} = 0,\quad \p_{v_\ph}\psi_{-+} = 0,\quad \p_{v_\ph}\psi_{--} = 0,
\eeq
while the massive equations read
\bna\label{eq:massive}
\p_{u_\el}\psi_{++} + i \om \psi_{+-} & = & 0 \label{eq:++}\\
\p_{v_\el}\psi_{+-} + i \om \psi_{++} & = & 0 \label{eq:+-}\\
\p_{u_\el}\psi_{-+} + i \om \psi_{--} & = & 0 \label{eq:-+}\\
\p_{v_\el} \psi_{--} + i \om \psi_{-+} & = & 0,\label{eq:--}
\ena
where 
\beq
\om := \frac{m_\el}{\hbar},
\eeq
and we have introduced the null coordinates
\beq
u_\ph = (t_\ph - s_\ph)/2,\quad v_\ph= (t_\ph+s_\ph)/2,\quad u_\el = (t_\el-s_\el)/2,\quad v_\el = (t_\el+s_\el)/2.
\eeq
Our strategy for solving these equations is as follows.  As we have seen before, the massless equations \refeq{eq:massless} imply that $\psi_{++}$ and $\psi_{+-}$ depend only on $(v_\ph; t_\el, s_\el)$ while $\psi_{-+}$ and $\psi_{--}$ depend only on $(u_\ph; t_\el, s_\el)$.   

We have also seen that a pair of the massive equations can be combined to give a Klein-Gordon equation for both components; e.g. \refeq{eq:++} and \refeq{eq:+-} combine to yield
\beq \label{eq:kg+-}
\left\{\begin{array}{rcl}
 (\dal_\el + \om^2)\psi_{+-} &=&0 \\
 \psi_{+-}{\big|_{t_\el = 0}} &=& \opsi_{+-}(s_\ph+t_\ph,s_\el) \\
\p_{t_\el}\psi_{+-}{\big|_{t_\el=0}}& = &-\p_{s_\el} \opsi_{+-}(s_\ph + t_\ph, s_\el) - i \om \opsi_{++}(s_\ph+t_\ph,s_\el).\\
\end{array}\right.
\eeq
and
\beq \label{eq:kg++}
\left\{\begin{array}{rcl}
 (\dal_\el + \om^2)\psi_{++} &=&0 \\
 \psi_{++}{\big|_{t_\el = 0}} &=& \opsi_{++}(s_\ph+t_\ph,s_\el) \\
\p_{t_\el}\psi_{++}{\big|_{t_\el=0}}& = &\p_{s_\el} \opsi_{++}(s_\ph + t_\ph, s_\el) - i \om \opsi_{+-}(s_\ph+t_\ph,s_\el),
\end{array}\right.
\eeq
(where $\dal_\el := \p_{t_\el}^2 - \p_{s_\el}^2$), while \refeq{eq:--} and \refeq{eq:-+} combine to yield
\beq \label{eq:kg--}
\left\{\begin{array}{rcl}
 (\dal_\el + \om^2)\psi_{--} & = & 0 \\
\psi_{--}{\big|_{t_\el = 0}} & = & \opsi_{--}(s_\ph-t_\ph,s_\el) \\
\p_{t_\el}\psi_{--}{\big|_{t_\el=0}} & = & -\p_{s_\el} \opsi_{--}(s_\ph - t_\ph, s_\el) - i \om \opsi_{-+}(s_\ph-t_\ph,s_\el).
\end{array}\right.
\eeq
and
\beq \label{eq:kg-+}
\left\{\begin{array}{rcl}
 (\dal_\el + \om^2)\psi_{-+} & = & 0 \\
\psi_{-+}{\big|_{t_\el = 0}} & = & \opsi_{-+}(s_\ph-t_\ph,s_\el) \\
\p_{t_\el}\psi_{-+}{\big|_{t_\el=0}} & = & \p_{s_\el} \opsi_{-+}(s_\ph - t_\ph, s_\el) - i \om \opsi_{--}(s_\ph-t_\ph,s_\el).
\end{array}\right.
\eeq

In the above Cauchy problems, one may  think of $t_\ph,s_\ph$ as fixed parameters, while the evolution takes place in the $(t_\el,s_\el)$ variables.  This observation allows us to view the above initial value problems as being posed in the {\em projected configuration spacetime}
\beq
\cM_\el(\bx_\ph) := \{ \bx_\el\in\Rset^{1,1}\ |\  (\bx_\ph,\bx_\el)\in \cM\}.
\eeq
Fig.~\ref{fig:regions} show various relevant regions in the projected configuration spacetime $\cM_\el(\bx_\ph)$.
\begin{figure}[ht]
\centering
\input{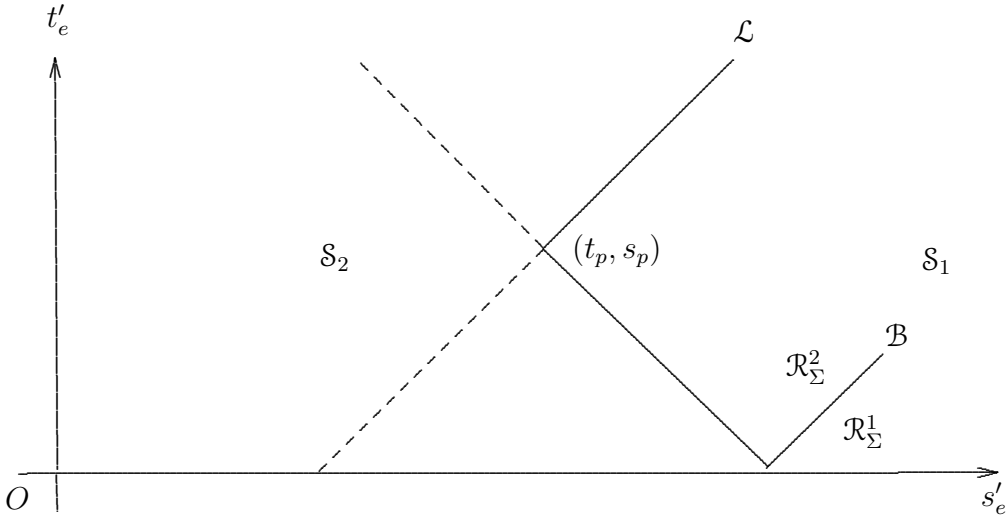}
\caption{\label{fig:regions} Regions in the projected configuration space $\mathcal{M}_\el(\bx_\ph)$}
\end{figure}

Now, these Cauchy problems are easily solvable using \refeq{eq:KGsol}, and as this formula suggests,
 the solution consists of integrating the initial data, against some kernel, on a line segment lying in the initial surface $\cI$. 
Let $\mathbf{q} := (t_\ph,s_\ph,t_\el,s_\el)$ be an arbitrary point in $\mathscr{S}_1$.
  Thus $s_\ph < s_\el$.
 Our objective is to solve for $\psi(\bq)$.
  In order to simplify the presentation, we will henceforth assume $t_\ph,t_\el>0$.
 For \refeq{eq:kg+-} and \refeq{eq:kg++} the line of integration is
\beq 
L_{+} = \{(0,s'_\ph,0,s'_\el)\in \cI\ |\ s'_\ph = s_\ph+t_\ph,    s_\el - t_\el \leq s'_\el \leq s_\el +t_\el \},
\eeq
while for \refeq{eq:kg--} and \refeq{eq:kg-+} that line is
\beq
L_{-} = \{ (0,s'_\ph,0,s'_\el)\in \cI\ |\ s'_\ph = s_\ph - t_\ph,  s_\el - t_\el \leq s'_\el \leq s_\el + t_\el \}.
\eeq
Clearly $L_{-}$ is wholly contained in the region $\mathscr{S}_1$, while $L_{+}$ in general is not.  More specifically, $L_{+}\subset \mathscr{S}_1$ if and only if $s_\ph+t_\ph<s_\el - t_\el$. Let
\beq
\mathscr{R}^1_\Sigma := \mathscr{F}_\Sigma \cap \mathscr{S}_1 =  \{ \bq \in \mathscr{S}_1\ |\ s_\ph + t_\ph < s_\el - t_\el\}
\eeq
denote the configurations corresponding to the particles being ``far" from each other, with respect to $\Sigma$. 
The above analysis means that while \refeq{eq:kg--} and \refeq{eq:kg-+} can be solved to find $\psi_{--}(\bq)$ and $\psi_{-+}(\bq)$ for any $\bq\in \mathscr{S}_1$, this is not the case for \refeq{eq:kg+-} and \refeq{eq:kg++}, which can only be solved using \refeq{eq:KGsol} if the backward characteristics emanating from $\bq$ do not hit the boundary $\mathscr{C}$.  Thus, we can only find $\psi_{+-}(\bq)$ and $\psi_{++}(\bq)$ using this method if $\bq\in \mathscr{R}^1_\Sigma$.  For  configurations that are ``nearby" with respect to $\Sigma$, i.e. those that belong to 
\beq
\mathscr{R}^2_\Sigma := \mathscr{N}_\Sigma \cap \mathscr{S}_1 = \mathscr{S}_1 \setminus \overline{\mathscr{R}^1_\Sigma} = \{ \bq \in \mathscr{S}_1\ |\ s_\ph + t_\ph > s_\el - t_\el\},
\eeq
 we need a different approach.  We will in fact set up and solve a {\em Goursat} problem in $\mathscr{R}^2_\Sigma$ for $\psi_{+-}$. 

The full algorithm for finding the solution $\psi$ is as follows:
Let $\bq = (s_\ph,t_\ph,s_\el,t_\el) \in \mathscr{S}_1$ be fixed.  

{\bf Step 1.} Use \refeq{eq:KGsol} to solve the Cauchy problems \refeq{eq:--} and\refeq{eq:-+} and find $\psi_{--}(\bq)$ and $\psi_{-+}(\bq)$ everywhere in $\mathscr{S}_1$.  The solutions will be as regular as the data, which is assumed to be smooth.

{\bf Step 2.} If $\bq\in \mathscr{R}^1_\Sigma$, use \refeq{eq:KGsol} to solve the Cauchy problems \refeq{eq:kg+-} and \refeq{eq:kg++}, and find $\psi_{+-}(\bq)$ and $\psi_{++}(\bq)$ everywhere in $\mathscr{R}^1_\Sigma$.
Extend the solutions by continuity to the closure of this region, so that $\psi_{+\pm}(s'_\ph,t'_\ph,s'_\el,t'_\el)$ are also known on its boundary $
\mathscr{B}$, where $ s'_\ph+t'_\ph = s'_\el - t'_\el$.

{\bf Step 3.}  Observe that, since the characteristics $u_\ph = \mbox{const.}$ in $\mathscr{S}_1$ intersect the boundary $\mathscr{C}$, the support of $\psi_{-+}$ as it evolves forward in time, will eventually hit the boundary.  Before the first hit, in (\ref{eq:massless}--\ref{eq:--}) we have a non-interacting system, whose solution trivially exists, is unique, and is as smooth as its data.  Without loss of generality therefore, we can assume that the support of the initial data for $\psi_{-+}$ is contained in a coordinate rectangle in the $s'_\ph<s'_\el$ half-plane in $\cI$, one corner of which lies on the diagonal line $s'_\ph = s'_\el$. 
(This can always be done via a suitable translation of the time coordinates $t'_\ph,t'_\el$.)
  
{\bf Step 4.} Let $\bq\in \mathscr{R}^2_\Sigma$ be fixed. We define the characteristic sets
\beq
C^\pm_\bq := \{ \bq'\in\mathscr{S}_1\ |\ s'_\ph+t'_\ph =s_\ph+t_\ph,\ s'_\el \pm t'_\el = s_\el\pm t_\el\}.
\eeq
Since $s_\ph+t_\ph>s_\el - t_\el$, the characteristic $C_\bq^-$ intersects the boundary $\mathscr{C}$ at an earlier positive time $T<\min\{t_\ph,t_\el\}$ (see Figure~\ref{fig:goursat}.)  In fact, let
\beq
T := \half( s_\ph+t_\ph - s_\el + t_\el) > 0,\qquad S:= \half(s_\ph+t_\ph+s_\el - t_\el).
\eeq
Then clearly the ``collision" point is $\bq_c := (T,S,T,S) \in \mathscr{C}\cap C_\bq^-$.  We translate the spatial coordinates $s'_\ph,s'_\el$ in such a way that in the new coordinates, the point $\bq_c$ also belongs to the characteristic  $C^+_\mathbf{0}$:  Let 
\beq
s''_\ph := s'_\ph - (s_\ph+t_\ph),\qquad s''_\el := s'_\el - (s_\ph+t_\ph),
\eeq
and change coordinates on $\cM$ to $(t'_\ph,s''_\ph,t'_\el,s''_\el)$.  In the new coordinates $\bq_c = (T,-T,T,-T) \in C^+_\mathbf{0}$. 
\begin{figure}
\centering
\input{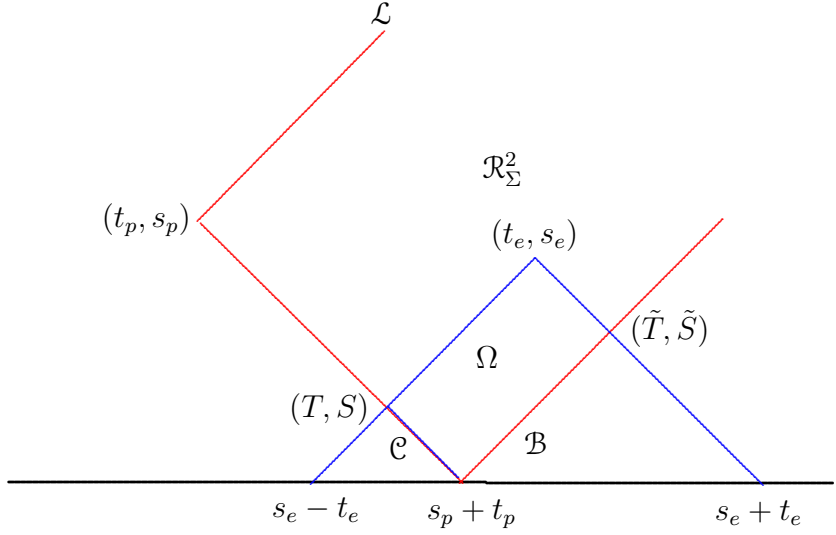}
\caption{\label{fig:goursat}The Goursat problem in region $\mathscr{R}_\Sigma^2$ of the projected configuration space  $\mathcal{M}_\el(\bx_\ph)$}
\end{figure}

 From Step 2 we can compute $\psi_{+-}$ at a point on the boundary $\mathscr{B}$ of the region $\mathscr{R}^1_\Sigma$, where $s''_\ph+t'_\ph = s''_\el - t'_\el = 0$.  Thus we may define a function $\ze:\Rset \to \Cset$ by
\beq \ze(b) := \psi_{+-}(0,0;b,b),\quad \forall b>0.
\eeq

Furthermore, from Step 1, $\psi_{-+}$ is defined everywhere in $\mathscr{S}_1$, in particular on the boundary $\mathscr{C}$.
  Using the boundary condition \refeq{eq:bccomponents} we thus know $\psi_{+-}$ there.
  We may thus define a function $\xi_\theta:\Rset\to \Cset$ by
\beq \xi_\theta(c) := \psi_{+-}(c,-c;c,-c) = 
e^{-i\theta}\sqrt{\frac{X^0+X^1}{X^0-X^1}}\psi_{-+}(c,-c;c,-c).
\eeq
 Therefore, $\psi_{+-}$ is now known on the two characteristic sets $C_\mathbf{0}^\pm$,
allowing us to set up a Goursat problem for $\psi_{+-}$.

{\bf Step 5.}  In the new coordinate system we have chosen, $\bq = (t_\ph,-t_\ph,t_\el,s_\el)$, so that $\bq\in \mathscr{R}^2_\Sigma$ implies that $|s_\el|<t_\el$.  We set up a Goursat problem for a function $U_\theta(t,s)$ defined in the future cone 
\beq
\Om := \{(t,s)\in \Rset^2\ |\ |s| < t\}
\eeq
as follows:
\beq
\left\{ \begin{array}{rclr}
(\p_t^2 - \p_s^2 + \om^2)U_\theta & = & 0&\mbox{in }\Om\\
U_\theta(b,b) & = & \ze(b),& \forall b>0\\
U_\theta(c,-c) & = & \xi_\theta(c),& \forall c>0.
\end{array}\right.
\eeq
Note that, by the compatibility assumption \refeq{eq:compat} on the intial data , we have $\ze(0) = \xi_\theta(0)$.
We solve the above Goursat problem using \refeq{eq:GourKG}, and set
\beq
\psi_{+-}(\bq) = U_\theta(t_\el,s_\el).
\eeq
(Translating back to the old $s'_\ph,s'_\el$ coordinates will reveal the dependence on $t_\ph+s_\ph$.)

{\bf Step 6.} Lastly, now that we have $\psi_{+-}(\bq)$ for all $\bq \in \mathscr{S}_1$, we may recover $\psi_{++}(\bq)$ in the remaining region, i.e. for $\bq\in \mathscr{R}^2_\Sigma$, by integration, via \refeq{eq:++}:
\beq\label{sol:++int}
\psi_{++}(t_\ph,s_\ph,t_\el,s_\el) = \psi_{++}(t_\ph,s_\ph,\tilde{T},\tilde{S}) - i\om\int_{\tilde{T}}^{t_\el} \psi_{+-}(t_\ph,s_\ph,\tau,s_\el+t_\el-\tau)d\tau,
\eeq
where
\beq
\tilde{T} := \half(t_\el +s_\el-t_\ph-s_\ph),\qquad \tilde{S}:= \half(t_\el+s_\el+t_\ph+s_\ph).
\eeq

This completes the algorithm for finding $\psi$ everywhere in $\mathscr{S}_1$.  The resulting formulas, {\em in the original coordinates}, are as follows (recall that these formulas are only valid for $s_\el\geq s_\ph$): 

\bna\label{sol:--}
\psi_{--}(t_\ph,s_\ph,t_\el,s_\el) & = &  \opsi_{--}(s_\ph-t_\ph,s_\el-t_\el)  \\
&&\hspace{-1in}\mbox{} - \frac{\om}{2}\int_{s_\el-t_\el}^{s_\el+t_\el} J_1\left(\om\sqrt{t_\el^2-(s_\el-\si)^2}\right)\sqrt{\frac{t_\el +s_\el - \si}{t_\el-s_\el+\si}}\opsi_{--}(s_\ph - t_\ph, \si) d\si\nonumber\\
&& \hspace{-1in}\mbox{} - \frac{i\om}{2} \int_{s_\el-t_\el}^{s_\el+t_\el} J_0\left(\om\sqrt{t_\el^2-(s_\el-\si)^2}\right) \opsi_{-+}(s_\ph - t_\ph, \si)d\si \nonumber \\
\psi_{-+}(t_\ph, s_\ph, t_\el,s_\el) & = & \opsi_{-+}(s_\ph-t_\ph,s_\el+t_\el) \label{sol:-+} \\
&&\hspace{-1in}\mbox{}-\frac{\om}{2} \int_{s_\el-t_\el}^{s_\el+t_\el}J_1\left(\om\sqrt{t_\el^2-(s_\el-\si)^2}\right) \sqrt{\frac{t_\el - s_\el + \si}{t_\el+s_\el - \si}} \opsi_{-+}(s_\ph - t_\ph,\si) d\si\nonumber\\
&&\hspace{-1in}\mbox{} - \frac{i\om}{2} \int_{s_\el-t_\el}^{s_\el+t_\el} J_0\left(\om\sqrt{t_\el^2-(s_\el-\si)^2}\right) \opsi_{--}(s_\ph - t_\ph, \si) d\si\nonumber
\ena
For $s_\ph+t_\ph \leq s_\el - t_\el$,
\bna\label{sol:+-R}
\psi_{+-}(t_\ph,s_\ph,t_\el,s_\el) & = &  \opsi_{+-}(s_\ph+t_\ph,s_\el-t_\el)  \\
&&\hspace{-1in}\mbox{} - \frac{\om}{2}\int_{s_\el-t_\el}^{s_\el+t_\el} J_1\left(\om\sqrt{t_\el^2-(s_\el-\si)^2}\right)\sqrt{\frac{t_\el+s_\el-\si}{t_\el-s_\el+\si}}\opsi_{+-}(s_\ph+ t_\ph, \si) d\si\nonumber\\
&&\hspace{-1in}\mbox{} -\frac{i\om}{2} \int_{s_\el-t_\el}^{s_\el+t_\el} J_0\left(\om\sqrt{t_\el^2-(s_\el-\si)^2}\right) \opsi_{++}(s_\ph + t_\ph, \si)d\si \nonumber\\
\psi_{++}(t_\ph, s_\ph, t_\el,s_\el) & = & \opsi_{++}(s_\ph+t_\ph,s_\el+t_\el)
 \label{sol:++} \\
&&\hspace{-1in}\mbox{}-\frac{\om}{2} \int_{s_\el-t_\el}^{s_\el+t_\el}J_1\left(\om\sqrt{t_\el^2-(s_\el-\si)^2}\right) \sqrt{\frac{t_\el - s_\el + \si}{t_\el+s_\el - \si}} \opsi_{++}(s_\ph + t_\ph,\si) d\si\nonumber\\
&&\hspace{-1in}\mbox{} - \frac{i\om}{2} \int_{s_\el-t_\el}^{s_\el+t_\el}  J_0\left(\om\sqrt{t_\el^2-(s_\el-\si)^2}\right) \opsi_{+-}(s_\ph + t_\ph, \si) d\si,\nonumber
\ena
while, for $s_\ph+t_\ph > s_\el - t_\el$, setting $\tilde{s}_\el := s_\el - (s_\ph+t_\ph)$ we have
\bna\label{sol:+-QminusR}
\psi_{+-}(t_\ph,s_\ph,t_\el,s_\el) & = & \ze \left(\frac{t_\el+\tilde{s}_\el}{2}\right) + \xi_\theta \left(\frac{t_\el-\tilde{s}_\el}{2} \right) - \half(\ze(0)+\xi_\theta(0))J_0\left(\om\sqrt{t_\el^2 - \tilde{s}_\el^2}\right)\\
&&\mbox{} - \om(t_\el-\tilde{s}_\el) \int_0^{(t_\el+\tilde{s}_\el)/2} \ze(b) \frac{J_1\left(\om\sqrt{(t_\el-\tilde{s}_\el)(t_\el+\tilde{s}_\el-2b)}\right)}{\sqrt{(t_\el-\tilde{s}_\el)(t_\el+\tilde{s}_\el-2b)}} db\nonumber \\
&&\mbox{} - \om(t_\el+\tilde{s}_\el) \int_0^{(t_\el-\tilde{s}_\el)/2} \xi_\theta(c) \frac{J_1\left(\om\sqrt{(t_\el+\tilde{s}_\el)(t_\el-\tilde{s}_\el-2c)}\right)}{\sqrt{(t_\el+\tilde{s}_\el)(t_\el-\tilde{s}_\el-2c)}} dc, \nonumber
\ena
where
\bna
\ze(b) &:=& \psi_{+-}(0,s_\ph+t_\ph,b,s_\ph+t_\ph+b),\label{def:phi}\\
 \xi_\theta(c) & := & e^{-i\theta} \sqrt{\frac{X^0+X^1}{X^0 - X^1}}\psi_{-+}(c,s_\ph+t_\ph-c,c,s_\ph+t_\ph-c),\label{def:xi}
\ena
and the right-hand sides of \refeq{def:phi} and \refeq{def:xi} are to be substituted for from \refeq{sol:+-R} and 
\refeq{sol:-+}, respectively.

Finally, for $s_\ph+t_\ph > s_\el - t_\el$, $\psi_{++}$ can be determined by substituting \refeq{sol:++} and 
\refeq{sol:+-QminusR} into \refeq{sol:++int}.

We turn to the regularity of the solution.  Carefully considering our formulas \eqref{sol:++int}-\eqref{def:xi},
 we note the following points. First of all, the functions $J_0(x)$ and $J_1(x)/x$ which occur in the integration 
kernels are smooth and bounded functions, and the integrals have finite domains.

Now, \eqref{sol:--} shows that $\psi_{--}$ is continuously differentiable in $\mathscr{S}_1$ if
 $\accentset{\circ}{\psi}_{--}$ and $\accentset{\circ}{\psi}_{-+}$ are. Likewise, \eqref{sol:-+} shows that the same condition also 
makes $\psi_{-+}$ continuously differentiable in $\mathscr{S}_1$.

For $s_\ph + t_\ph \leq s_\el - t_\el$, \eqref{sol:+-R} implies that $\psi_{+-}$ is $C^1$ if $\accentset{\circ}{\psi}_{+-}$ and $\accentset{\circ}{\psi}_{++}$ are. Similarly, we can see from \eqref{sol:++} that $\accentset{\circ}{\psi}_{++}$ is $C^1$ in this region under the same condition.\\
For  $s_\ph + t_\ph > s_\el-t_\el$, \eqref{sol:+-QminusR} shows that $\psi_{+-}$ is $C^1$ if $\ze$ 
and $\xi_\theta$ are $C^1$ functions of their argument. By \refeq{def:phi} and \refeq{def:xi} we have just checked that this is the case.
So it is clear that the requirements on the initial data $\accentset{\circ}{\psi}$ ensure that the solution is $C^1$ in $\mathscr{S}_1 \backslash  \mathscr{B}$.

Next, we come to the continuity of $\psi$ across $ \mathscr{B}$. For $\psi_{--}$ and $\psi_{-+}$, there is nothing to show. For $\psi_{+-}$, a short calculation involving the boundary condition for $\accentset{\circ}{\psi}$, \eqref{sol:+-R} yields in the limit $s_\ph+t_\ph \searrow s_\el-t_\el$: $\psi_{+-}(t_\ph,s_\ph,t_\el,s_\el) \rightarrow \psi_{+-}(0,s_\ph+t_\ph,t_\el,s_\el)$ (with $s_\ph+t_\ph=s_\el-t_\el$). As the formula for $\psi(t_\ph,s_\ph,t_\el,s_\el)$ in \eqref{sol:+-R} depends only on $s_\ph + t_\ph$, this agrees with $\psi(t_\ph,s_\ph,t_\el,s_\el)$ in the limit $t_\ph+s_\ph \nearrow s_\ph + t_\ph$.

For $\psi_{++}$, consider \eqref{sol:++int} in the limit $s_\ph + t_\ph \searrow s_\el-t_\el$. Then $\tilde{T} \rightarrow t_\el$ and $\tilde{S} \searrow s_\el$, so \eqref{sol:++int} yields $\lim_{s_\ph + t_\ph \searrow s_\el-t_\el} \psi_{++}(t_\ph,s_\ph,t_\el,s_\el) = \lim_{s_\ph + t_\ph \nearrow s_\el-t_\el} \psi_{++}(t_\ph,s_\ph,t_\el,s_\el)$, i.e. continuity of $\psi_{++}$ across $ \mathscr{B}$.

Finally, the solution formulas also show that the partial derivatives of $\psi$ remain bounded in the limits $s_\ph + t_\ph \nearrow s_\el-t_\el$ and $s_\ph + t_\ph \searrow s_\el-t_\el$ as the partial derivatives $\partial_j \accentset{\circ}{\psi}(s_\ph,s_\el)$ of the initial data remain bounded for $s_\ph \nearrow s_\el$ and $j \in \{ s_\ph, s_\el\}$.

The fact that $\psi$ is compactly supported on any set of the form $(\Sigma \times \Sigma)\cap \overline{\mathscr{S}}_1$ where $\Sigma$ is a Cauchy surface can be read off from the solution formulas as well. In fact, the formulas show that the propagation is not faster than the speed of light in the sense that if $\psi|_{(\Sigma \times \Sigma)\cap \overline{\mathscr{S}}_1}$ is compactly supported in a set $S \subset (\Sigma \times \Sigma)\cap \overline{\mathscr{S}}_1$, then for every other Cauchy surface $\Sigma'$, $\psi|_{(\Sigma' \times \Sigma')\cap \overline{\mathscr{S}}_1}$ is compactly supported in the ``grown set" (compare \cite{LT2017})
\beq
	{\rm Gr}\left(S, (\Sigma' \times \Sigma')\cap \overline{\mathscr{S}}_1 \right) = \left( \bigcup_{(\bx_\ph,\bx_\el) \in S} {\rm LC}(\bx_\ph) \times {\rm LC}(\bx_\el) \right) \cap \left[ (\Sigma' \times \Sigma')\cap \overline{\mathscr{S}}_1\right],
\eeq
where ${\rm LC}(x)$ is the (future and past) solid light cone of $x \in \mathbb{R}^{1,1}$.
\end{proof}

%%%%%%
\section{The probability current}
In the above we have shown the existence of multi-time dynamics for the photon-electron system \refeq{eq:IBVP}, given any constant time-like vector field $X$ on $\Rset^{1,1}$.
Here we propose a particular choice of $X$ that can be constructed entirely from the {\em initial data} $\accentset{\circ}{\psi}$  for the two-body wave function $\psi$. We then show that it can be used to construct a quantum probability current for the system that will be compatible with Born's rule, and will give rise to velocity fields for the two particles in a de Broglie--Bohm setting.

We define, in a given Lorentz frame $\{\be_{(0)},\be_{(1)}\}$ for $\Rset^{1,1}$, and with  $j_X$ as in \refeq{def:jX},
\bna\label{eq:piNULL}
\pi_{(0)} & := & \int_\cI j_{\be_{(0)}}^{00} = \frac{1}{{2}} \int_\cI \tr_\ph(\overline{\opsi}\ga_\ph^0\ga_\el^0 \opsi\ga_\ph^0)  = \frac{1}{{2}}\int_{-\infty}^\infty\int_{-\infty}^\infty \sum_{\varsigma_1,\varsigma_2\in\{+,-\}} |\opsi_{\varsigma_1\varsigma_2}|^2  ds_\ph ds_\el \\
\label{eq:piONE}
\pi_{(1)} & := &  \int_\cI j_{\be_{(1)}}^{00} =\frac{1}{{2}} \int_\cI \tr_\ph(\overline{\opsi} \ga_\ph^0\ga_\el^0\opsi\ga_\ph^1) = \frac{1}{{2}} \int_{-\infty}^\infty\int_{-\infty}^\infty  \sum_{\varsigma_1,\varsigma_2\in\{+,-\}} \varsigma_1|\opsi_{\varsigma_1\varsigma_2}|^2  ds_\ph ds_\el,
\ena
and let 
\beq
\bpi := \sum_{\mu = 0}^1 \pi_{(\mu)} \be_{(\mu)}.
\eeq
It is clear that $\bpi$ is a time-like, future-directed, constant (and therefore Killing) vector field on $\Rset^{1,1}$, 
and that it is constructed entirely from the initial wave function. Let the vector field $X$ be defined by
\beq\label{eq:X}
X := \frac{\bpi}{\eta(\bpi,\bpi)} = \frac{\bpi}{\pi_{(0)}^2 - \pi_{(1)}^2}.
\eeq
$X$ is therefore, like $\bpi$, a time-like, future-directed constant vector field on $\Rset^{1,1}$.

 We now define, following the recipe in \cite{KTZ2018}, the probability current $j$ for the electron-photon system to be 
\beq\label{eq:jmunu}
j^{\mu\nu} := j_X^{\mu\nu} = \frac{1}{2} \tr_\ph (\overline{\psi} \ga_\ph^\mu\ga_\el^\nu \psi \ga_\ph^\al X_\al ).
\eeq
By Prop.~\ref{prop:conservedcurrent}, the current $j$ is conserved under the multi-time flow of \refeq{eq:joint}. 
 Moreover, since the boundary condition \refeq{eq:bccomponents} is satisfied for solutions of \refeq{eq:IBVP}, by 
Prop.~\ref{prop:probconserved}, the corresponding total probability \refeq{eq:P} is conserved on all Cauchy surfaces.
In particular, since
\beq
j^{00} = \frac{1}{{2}\eta(\bpi,\bpi)}\left\{ \pi_{(0)} \sum_{\varsigma_1,\varsigma_2\in\{+,-\}} |\opsi_{\varsigma_1\varsigma_2}|^2 - \pi_{(1)}\sum_{\varsigma_1,\varsigma_2\in\{+,-\}} \varsigma_1|\opsi_{\varsigma_1\varsigma_2}|^2 \right\},
\eeq
we have 
\beq 
\int_{-\infty}^\infty\int_{-\infty}^\infty j^{00} ds_\ph ds_\el = 1
\eeq
at all times, so that we can define
\beq
\rho := j^{00}
\eeq
to be the (joint) probability density of the photon-electron system. More generally, for any smooth Cauchy surface $\Sigma$, 
\beq
\rho_\Sigma(\bx_\ph,\bx_\el) := j^{\mu \nu}_X(\bx_\ph,\bx_\el) \, n_\mu(\bx_\ph) n_\nu(\bx_\el)
\eeq
is the probability density of detecting the photon{-electron pair} on $\Sigma$; 
compare \eqref{eq:curvedborn} (and see \cite{LT2017} for a derivation of this generalized Born rule).

\begin{rem}
 {We note that expressing $X$ in \eqref{eq:jmunu} in terms of $\psi$ with the help of \eqref{eq:X} and 
\eqref{eq:piNULL}, \eqref{eq:piONE}, all factors $1/2$ cancel and $\rho$ takes an analogous form to \eqref{eq:BornrulePH} and
\eqref{eq:BornruleEL}.}
\end{rem}
As an {illustration of the evolution of the joint probability density}, we can solve \refeq{eq:IBVP} with initial data 
corresponding to the wave function being a pure product and its four 
components having the same Gaussian profile, with mean zero for the photon and mean one for the electron. 
 Fig.~\ref{fig:movie} shows plots of the resulting joint probability density at six successive instances of common time $t_\el = t_\ph = t$.
 It is clear that after ``reflection'' from the boundary the wave function is no longer a pure product, so we have a truly interacting 
system.  
\begin{figure}[ht]
\centering
\includegraphics[width=150pt,height=150pt]{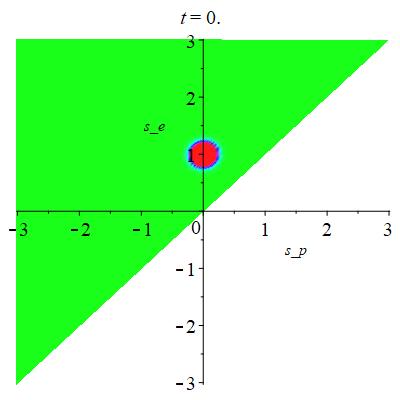}
\includegraphics[width=150pt,height=150pt]{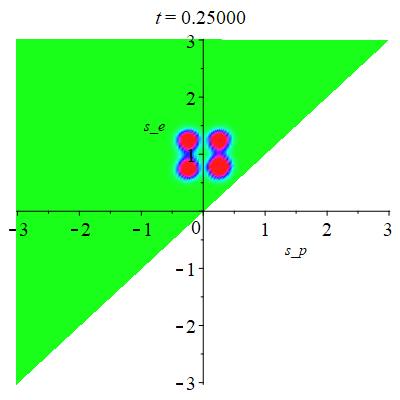}
\includegraphics[width=150pt,height=150pt]{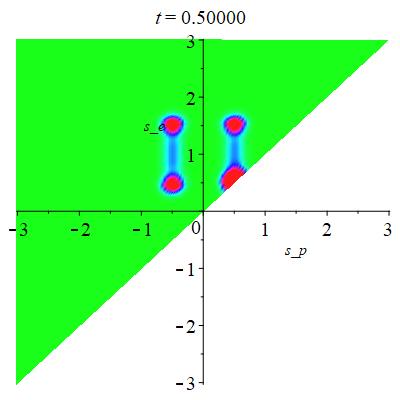}\\
\includegraphics[width=150pt,height=150pt]{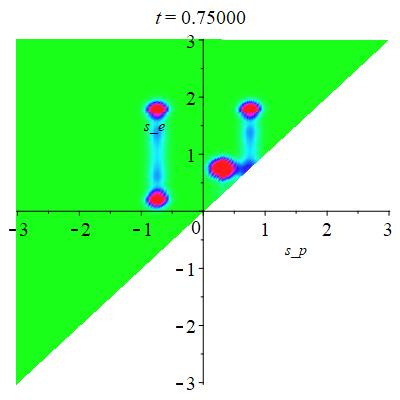}
\includegraphics[width=150pt,height=150pt]{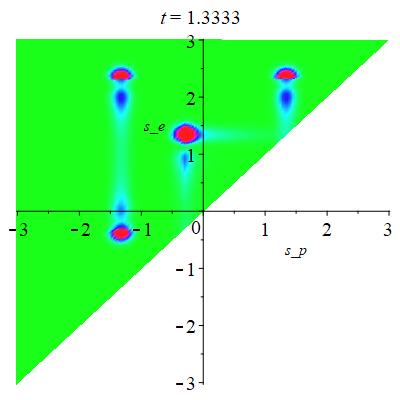}
\includegraphics[width=150pt,height=150pt]{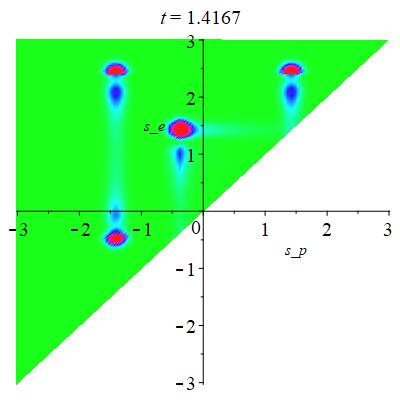}
\caption{Color contour plot of the density $\rho$ in photon-electron configuration space at six consecutive snapshots of common time $t=t_\el=t_\ph$. The photon axis is horizontal, the electron axis vertical. The four ``bumps'' correspond to the four components of the joint wave function.}
\label{fig:movie} 
\end{figure}
\newpage

\section{Electron and photon trajectories} \label{sec:bohmiantrajectories}
 In recent years so-called ``weak measurements'' \cite{AAV} of photon trajectories have become feasible in laboratory experiments 
\cite{KocsisETal}.
 In principle these should be computable from a de Broglie--Bohm type equation of motion for the particle positions.
 The de Broglie--Bohm theory,  a.k.a. Bohmian mechanics (BM) or the pilot wave theory, is a non-relativistic theory which 
stipulates that the building blocks of matter are point particles whose motion is choreographed by Schr\"odinger's, or rather 
Pauli's wave functions in such a way that one recovers the usual statistical quantum predictions from the randomness of the 
initial positions (see \cite{DT2009} for an introduction to BM). 
 A relativistic extension of BM for non-interacting (but possibly entangled) spin-1/2 particles, 
the so-called ``Hypersurface Bohm Dirac (HBD) model," has been proposed in \cite{DGMZ1999}. 
 In the following we show that an HBD-type guiding law can be formulated also for our relativistic 
quantum-mechanical model of an interacting electron-photon system.

 We recall that the law for world-lines 
{of} $N$ electrons (the relativistic analog of particle motion) in the HBD model has two main 
ingredients:
\begin{enumerate}
	\item[(a)] the Dirac tensor current $j^{\mu_1 ... \mu_N}(\bx_1,...,\bx_N) = \overline{\Psi}(\bx_1,...,\bx_N) \gamma^{\mu_1} \otimes \cdots \otimes  \gamma^{\mu_N} \Psi(\bx_1,...,\bx_N)$, and
	\item[(b)] a preferred foliation $\mathscr{F}$ of Minkowski spacetime into space-like surfaces.
\end{enumerate}
 Concerning (a), one can also use a different tensor current $j^{\mu_1...\mu_N}(\bx_1,...,\bx_N)$ to define the world-lines,
 provided its four-divergence with respect to every particle index $\mu_k$ is zero and the contractions of $j$ with $N-1$ 
normal vectors are time-like or light-like and future pointing.
 This may lead to probabilities which are not $|\psi|^2$-distributed, though.
 With regard to (b), a preferred foliation may at first glance seem to contradict the spirit of relativity.
 This point has been critically discussed in \cite{DGNSZ2014}. 
The upshot of the discussion is that the foliation needs to be dynamical, i.e., subject to a law itself, in 
order for the resulting theory to be relativistic in a certain sense.
 Perhaps the most appealing possibility to achieve that is to extract the foliation from the wave function.

 That being said, we readily note that all ingredients for a HBD model are available for our interacting electron-photon model.
 As the tensor current, we use $j^{\mu \nu}_X(\bx_\ph,\bx_\el)$, and the foliation $\mathscr{F} : s \mapsto \Sigma_s$ is given 
by the set of hypersurfaces $\Sigma_s$ orthogonal to $X$.
 The parametrization $s$ of these surfaces is arbitrary.
 Note that $X$ is indeed extracted from the wave function.

Let $Q_\ph(s), Q_\el(s)$ denote the points of intersection of the photon (ph) and electron (el) world-lines with $\Sigma_s$. 
Then the law for their motion is given by:
\begin{align}
	\frac{d Q_\ph^\mu(s)}{ds} ~&=~ j_X^{\mu \nu}(Q_\ph(s), Q_\el(s)) \, X_\nu,\nonumber\\
	\frac{d Q_\el^\nu(s)}{ds} ~&=~ j_X^{\mu \nu}(Q_\ph(s), Q_\el(s)) \, X_\mu.
	\label{hbdlaw}
\end{align}
As the foliation consists of equal-time planes in the frame where $X = (X^0,0)$, it can be simplified as follows in this frame 
(letting $t = Q_\ph^0 = Q_\el^0$):
\beq
	\frac{d Q(t)}{dt} = \frac{J(t,Q(t))}{\rho(t,Q(t))},
	\label{eq:guidance}
\eeq
where $Q = (Q_\ph^1, Q_\el^1)$ and {$J=(J^1,J^2)$, with} $J(t,(s_\ph,s_\el)) = (j^{10}_X(t,s_\ph,t,s_\el), j^{01}_X(t,s_\ph,t,s_\el))$; and where we have set $\rho(t,(s_\ph,s_\el)) = j^{00}(t,s_\ph,t,s_\el)$.

The dynamics thus reduces to an ODE with equal times in the preferred frame given by $X$.
 We shall prove the typical global existence of solutions of \eqref{eq:guidance} on the (timeless) configuration space
\beq
	\mathcal{Q} = \{ (s_\ph, s_\el) \in \mathbb{R}^2 : s_\ph < s_\el\}.
	\label{eq:timelessconfigspace}
\eeq
``Typical" here means for almost all initial configurations $Q_0 \in \mathcal{Q}$ with respect to the measure $\rho \, dq$.
 Such a global existence result was obtained in \cite{BDGPZ1995} for non-relativistic Bohmian mechanics and in \cite{TT2005} 
for general laws of the form \eqref{eq:guidance} where, however, $J$ and $\rho$ were assumed to be $C^1$-functions.
 Thus, we have to make sure that the arguments also work in our case where $J, \rho$ are only Lipschitz continuous.
 The idea of \cite{TT2005} is to devise conditions which exclude all possible ways in which the dynamics may fail to exist, namely:
\begin{enumerate}
	\item The trajectories reach a node of $\Psi$ where $\rho = 0$ and thus \eqref{eq:guidance} may become singular,
	\item The trajectories escape to infinity in finite time,
	\item The trajectories reach a boundary point (where the dynamics is not defined).
\end{enumerate} 
The theorem of \cite{TT2005} is included in Appendix \ref{sec:generalexbm}, see Thm. \ref{thm:generalexbm}.
 We now use it to formulate a more specialized theorem which can be applied to our case.
\newpage

\begin{thm} \label{thm:specializedexbm}
	Let $\mathcal{Q}$ be the configuration space \eqref{eq:timelessconfigspace}.
Moreover, let 
 {$\rho : [0,\infty) \times \overline{\mathcal{Q}} \rightarrow [0,\infty)$ and $J: [0,\infty) \times \overline{\mathcal{Q}} \rightarrow \mathbb{R}^2$} be locally Lipschitz continuous and have the following properties:
	\begin{enumerate}
		\item $\partial_t \rho + {\rm div} \, J = 0$ almost everywhere\footnote{``Almost everywhere'' without any measure specified 
refers to the Lebesgue measure, here and henceforth.},
		\item $\rho > 0$ whenever $\rho \neq 0$,
		\item $\int_{\mathcal{Q}} dq \, \rho(t,q) = 1$ for all $t \geq 0$,
		\item There is a constant $C>0$ such that $|J| \leq C \rho$.
		\item $J$ and the first order derivatives of $\rho$ are bounded on bounded sets (whenever these derivatives exist, i.e., almost everywhere),
		\item $|J^1(t,q) - J^2(t,q)| = 0~ \forall t \geq 0 ~\forall q \in \partial Q$.
	\end{enumerate}
	Let $\mu_t$ be the family of measures on $\mathcal{Q}$ defined by $\mu_t(dq) = \rho(t,\cdot) \, dq$. Then for $\mu_0$-almost all $q \in \mathcal{Q},$ the solution of \eqref{eq:guidance} starting at $Q(0) = q$ exists for all times $t\geq 0$, we have $Q(t) \in \mathcal{Q} \, \forall t \geq 0$ and the family of measures $\mu_t$ is equivariant (see Appendix \ref{sec:generalexbm} for a definition of equivariance).
\end{thm}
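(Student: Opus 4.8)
The plan is to deduce Thm.~\ref{thm:specializedexbm} from the Teufel--Tumulka theorem (Thm.~\ref{thm:generalexbm} in Appendix~\ref{sec:generalexbm}) after two reductions: first, removing the boundary $\partial\mathcal{Q}=\{s_\ph=s_\el\}$ by an even reflection; second, checking that the $C^1$ hypotheses of Thm.~\ref{thm:generalexbm} may be relaxed to the local Lipschitz continuity assumed here. Before either reduction I would record two elementary points. On the open set $\{\rho>0\}$ the velocity field $v:=J/\rho$ is locally Lipschitz, so Picard--Lindel\"of gives a unique maximal solution of \eqref{eq:guidance} through each such point together with a flow map which, by a Gr\"onwall estimate run in both time directions, is locally bi-Lipschitz and hence carries Lebesgue-null sets to Lebesgue-null sets. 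And condition~(4), $|J|\le C\rho$, forces $|v|\le C$, so $|Q(t)-Q(0)|\le Ct$: no solution can escape to infinity in finite time, which voids the second of the three failure modes listed before the theorem.

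To remove the boundary I would use condition~(6). Let $R(s_\ph,s_\el):=(s_\el,s_\ph)$ and extend $\rho$ and $J$ from $[0,\infty)\times\overline{\mathcal{Q}}$ to $[0,\infty)\times\Rset^2$ by $\tilde\rho(t,q):=\rho(t,q),\ \tilde J(t,q):=J(t,q)$ for $q\in\overline{\mathcal{Q}}$, and $\tilde\rho(t,q):=\rho(t,Rq),\ \tilde J(t,q):=(J^2,J^1)(t,Rq)$ for $q\notin\mathcal{Q}$. Because $R$ is an isometry and $J^1=J^2$ on the diagonal, the extended $\tilde\rho,\tilde J$ are locally Lipschitz on $[0,\infty)\times\Rset^2$; the continuity equation is preserved (the divergence is invariant under the combined swap of coordinates and components effected by $R$); $\tilde\rho>0$ at a point iff $\rho>0$ at its mirror image; and $\int_{\Rset^2}\tilde\rho(t,\cdot)\,dq=2$, so $\tfrac12\tilde\mu_t$, with $\tilde\mu_t(dq):=\tilde\rho(t,\cdot)\,dq$, is a probability measure. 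Moreover $\tilde J$ is tangent to the diagonal wherever $\tilde\rho>0$, so the diagonal is invariant under the extended flow; by backward uniqueness a solution started in the open half-plane $\mathcal{Q}$ can touch the diagonal only at a point where $\tilde\rho=0$, that is, only simultaneously with the first failure mode (hitting a node).

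It then remains to exclude that first failure mode --- a solution running into $\{\tilde\rho=0\}$ in finite time --- for $\tfrac12\tilde\mu_0$-almost every starting point, and to establish equivariance; this is precisely what the Teufel--Tumulka argument yields for the boundaryless system $(\tilde\rho,\tilde J)$ on $\Rset^2$, which inherits hypotheses (1)--(5). The only steps of that argument that invoke $C^1$ regularity are (a) existence and uniqueness of the flow, (b) the local bi-Lipschitz property of the flow and the resulting preservation of Lebesgue-null sets, and (c) the transport identity $(\Phi_{0,t})_\ast\mu_0=\mu_t$ on the set where the flow is defined; each of these requires only local Lipschitz continuity, the last one because hypotheses (1), (3) and (5) make the standard Liouville/characteristics computation go through with almost-everywhere derivatives. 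Running the argument verbatim then shows that the set of $q\in\Rset^2$ whose solution reaches $\{\tilde\rho=0\}$ in finite time is $\tilde\mu_0$-null and that $\tfrac12\tilde\mu_t$ is equivariant. Restricting to $\mathcal{Q}$, which has $\tilde\mu_0$-measure $1$ and on which $\tilde\mu_0$ agrees with $\mu_0$, and invoking the invariance of the diagonal from the first reduction, we conclude that for $\mu_0$-almost every $q\in\mathcal{Q}$ the solution of \eqref{eq:guidance} exists for all $t\ge0$, never reaches a node and therefore never touches $\partial\mathcal{Q}$, so $Q(t)\in\mathcal{Q}$ for all $t$, and the family $\mu_t$ is equivariant.

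The main obstacle I anticipate is step~(c) above: carefully re-examining the measure-transport part of the proof in \cite{TT2005} to confirm that it survives the weakening from $C^1$ to locally Lipschitz data --- in particular that the flow still pushes $\mu_0$ forward exactly to $\mu_t$ and does not manufacture a null set of nodes where the dynamics breaks down --- since everything else is either standard ODE theory or an immediate consequence of hypotheses (1)--(6).
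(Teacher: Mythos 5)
Your proposal is correct in outline, but it takes a genuinely different route from the paper. The paper keeps the half-plane $\mathcal{Q}$ and invokes the Teufel--Tumulka theorem in its ``critical set'' form: it treats $S=\{s_\ph\ge s_\el\}$ as an admissible excluded set and verifies the boundary integral condition \eqref{eq:critsetcond} directly, using condition~6 together with the local Lipschitz property of $J^{\rm rel}=(J^1-J^2)/\sqrt 2$ to get the linear decay $|J\cdot e|\le L\,\mathrm{dist}(q,S)$, which makes the integrand $|J\cdot e|/\mathrm{dist}(q,S)$ bounded; the node and escape conditions \eqref{eq:nodecond}, \eqref{eq:escapecond} are then checked from conditions 4 and 5 exactly as you do. You instead eliminate the boundary by an even reflection (with component swap) across the diagonal and replace the integral estimate \eqref{eq:critsetcond} by an ODE argument: condition~6 makes the extension Lipschitz and renders the diagonal an invariant manifold of the extended flow, so backward uniqueness shows that \emph{every} trajectory avoiding nodes also avoids $\partial\mathcal{Q}$ --- a pathwise and in fact slightly stronger conclusion than the ``$\mu_0$-almost every'' statement the integral condition delivers. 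The price is the bookkeeping of the extension (normalization $2$, pushing equivariance back down to $\mathcal{Q}$, which you handle correctly), and the fact that you still must verify \eqref{eq:nodecond} and \eqref{eq:escapecond} for the extended system; you compress this into ``inherits hypotheses (1)--(5)'', which is accurate but should be spelled out as in the paper (bounded velocities confine $\varphi_t(B_r)$ to $B_{r+CT}$; condition~5 then bounds the integrands). Both proofs share the one genuinely delicate point, namely that the $C^1$ hypotheses of \cite{TT2005} can be relaxed to local Lipschitz continuity with the continuity equation holding almost everywhere; the paper likewise only asserts this ``follows from a careful reading'' of \cite{TT2005}, so you are on equal footing there.
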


\begin{proof}

The idea is to apply Thm. \ref{thm:generalexbm}. We note that the assumption that $\rho, J$ are $C^1$-functions and satisfy 
$\partial_t \rho + {\rm div} \, J = 0$ can, in fact, be relaxed to the requirement that $\rho, J$ are Lipschitz and satisfy 
$\partial_t \rho + {\rm div} \, J = 0$ almost everywhere. (This follows from a careful reading of \cite{TT2005}.)

Now we apply Thm. \ref{thm:generalexbm} to the configuration space $\mathcal{Q}$ from \eqref{eq:timelessconfigspace}. We have: $\mathcal{Q} = \mathbb{R}^2 \backslash S$ where $S = \{ (s_\ph, s_\el) \in \mathbb{R}^2 : s_\ph \geq s_\el \}$. This is an admissible set in the sense of Thm. \ref{thm:generalexbm} with ${\rm dist}((s_\ph,s_\el),S) = |s_\ph-s_\el|$. 
The radial unit vector $\perp\partial S$ pointing in the direction of $S$ is, accordingly, given by $e = \frac{1}{\sqrt{2}} (-1,1)$.

To show the claim we thus need to demonstrate that conditions \eqref{eq:nodecond}, \eqref{eq:escapecond} and \eqref{eq:critsetcond} are satisfied.  Let us first briefly explain the notation used here:  
We define 
\beq
	\mathcal{N}_t := \{ (t,q) \in \mathcal{Q} : \rho(t,q) = 0 \},
\eeq
to be the set of nodes of $\rho$ at time $t$. 
{As $J, \rho$ are assumed to be Lipschitz, we know that away from $\cN_t$, the right-hand-side of \refeq{eq:guidance} is well-defined and locally Lipschitz.}
From standard ODE theory, it follows that for any $q \in \mathcal{Q} \backslash \mathcal{N}_0$, the guiding equation \refeq{eq:guidance} has a unique solution $Q_q(t)$ with $Q_q(0) = q$.  Let  $(\tau_q^-, \tau_q^+)$ be the {\em maximal} time interval of existence for this solution.  
We then extend the configuration space $\cQ$ to $\tilde{\cQ} := \cQ \cup \{\diamondsuit\}$ by adding an extra ``graveyard" configuration $\diamondsuit$, which corresponds to a trajectory failing to exist, namely, for $t\in \Rset$ we set 
\beq
\tilde{Q}_q(t) := \left\{\begin{array}{lr} Q_q(t) & \tau_q^- <t< \tau_q^+\\
\diamondsuit & \mbox{ otherwise}.\end{array}\right.
\eeq
Similarly, if $q\in \cN_0$, we set $\tilde{Q}_q(t):= \diamondsuit$ for all $t \neq 0$, and $\tilde{Q}_\diamondsuit (t) = \diamondsuit$ for all $t\in\Rset$. In this way, we now have a well-defined {\em flow map} $\varphi:\Rset\times\tilde{\cQ} \to \tilde{\cQ}$ and the corresponding one-parameter group of diffeomorphisms $\{\varphi_t\}_{t\in \Rset}$ for \refeq{eq:guidance}, as follows
\beq
\varphi(t,q) := \tilde{Q}_q(t),\qquad \varphi_t(q) := \varphi(t,q).
\eeq
As the sets $\varphi_t(B) \backslash \{\diamondsuit\}$ are not easily accessible, we follow remark 3 below Thm.~1 in \cite{TT2005} and show the conditions {hold} only for all balls around the origin (intersected with $\mathcal{Q}$) instead of all bounded Borel sets of $\mathcal{Q}$. As this replacement enlarges the integrals in \eqref{eq:nodecond}, \eqref{eq:escapecond} and \eqref{eq:critsetcond}, this only yields stronger conditions which imply the ones we need to check.

We first show that the following stronger version of condition \eqref{eq:critsetcond} holds:
	\beq
		\forall r>0: ~~~ \int_0^T dt \int_{\varphi_t(B_r(0)\cap \mathcal{Q}) \backslash \{ \diamondsuit\}} dq ~ \frac{|J(t,q) \cdot e|}{{\rm dist}(q,S)} < \infty.
	\label{eq:critsetcond2}
	\eeq
Here $Q_q(t) := \diamondsuit$ also if $Q_q(t) \in \partial \mathcal{Q}$, i.e. the trajectory reaches a boundary point 
(in the same way as $Q_q(t) := \diamondsuit$ in Thm. \ref{thm:generalexbm} if $Q_q(t)$ reaches a node).
By assumption 4, there is a constant $C>0$ such that $|J| \leq C \rho$. As the velocities in \eqref{eq:guidance} are bounded,
 we have that $\varphi_t(B_r(0)\cap \mathcal{Q}) \backslash \{ \diamondsuit\}$ is contained in $B_{r+Ct}(0)\cap \mathcal{Q}$ for 
every $t \in [0,T]$, in particular in $B_{r+CT}(0) \cap \mathcal{Q}$. We can therefore replace 
$\varphi_t(B_r(0)) \backslash \{ \diamondsuit\}$ with $B_{r+CT}(0) \cap \mathcal{Q}$ in \eqref{eq:critsetcond2}.
 To check that the remaining integral is finite, we use $e = \frac{1}{\sqrt{2}}(-1,1)$, thus $|J(t,q) \cdot e| = \frac{1}{\sqrt{2}} |J^1-J^2|$.
 Moreover, ${\rm dist}((s_\ph,s_\el),S) = |s_\ph-s_\el|$. 
The crucial point now is the {local} Lipschitz property of $J$.
 The function $J^{\rm rel} =(J^1-J^2)/\sqrt{2}$ inherits this property.
 Thus, {for each $R>0$} there is a constant $L\geq 0$ such that 
$|J^{\rm rel}(t,q_1)-J^{\rm rel}(t,q_2)| \leq L \, |q_1-q_2|$ for all {$q_1,q_2 \in B_R(0)\cap \mathcal{Q}$} and all $t\in[0,T]$.
Let $q_1 = (s_\ph, s_\el)$ and choose $q_2 = (s_\ph,s_\ph) \in \partial \mathcal{Q}$.
 Then, as $J^{\rm rel}(q) = 0 \, \forall q \in \partial \mathcal{Q}$ by assumption 6, we find: $ |J(t,(s_\ph,s_\el)) \cdot e| = |J^{\rm rel}(t,(s_\ph,s_\el)| \leq L \, |(s_\ph,s_\el) - (s_\ph, s_\ph)| = L\,|s_\ph-s_\el|$. {We can thus find an $L\geq 0$} such that:
\beq
	\int_0^T dt \int_{\varphi_t(B_r(0)\cap \mathcal{Q}) \backslash \{ \diamondsuit\}} dq ~ \frac{|J(t,q) \cdot e|}{{\rm dist}(q,S)}~ \leq ~T \int_{B_{r+CT}(0)\cap \mathcal{Q}} dq ~ \frac{L \, |s_\ph-s_\el|}{|s_\ph-s_\el|} < \infty.
\eeq
This means that a typical trajectory starting in $\mathcal{Q}$ does not reach the boundary $\partial \mathcal{Q}$.

We now turn to conditions \eqref{eq:nodecond} and \eqref{eq:escapecond}, proceeding similarly as in the proof of corollary 1 in \cite{TT2005}. We shall check the following (stronger) versions of \eqref{eq:nodecond} and \eqref{eq:escapecond}:
\begin{align}
	&\forall r>0: ~~~ \int_0^T dt \int_{\varphi_t(B_r(0)\cap \mathcal{Q}) \backslash \{ \diamondsuit\}} dq \left| \left( \frac{\partial}{\partial t} + \frac{J}{\rho} \cdot \nabla_q \right) \rho(t,q) \right| < \infty,
	\label{eq:nodecond2}\\
 & \forall r>0: ~~~ \int_0^T dt \int_{\varphi_t(B_r(0)\cap \mathcal{Q}) \backslash \{ \diamondsuit\}} dq \left| J(t,q) \cdot \frac{q}{|q|} \right| < \infty.
	\label{eq:escapecond2}
\end{align}
By the bound on the velocities we again have that $\varphi_t(B_r(0)\cap \mathcal{Q}) \backslash \{ \diamondsuit\} \subset B_{r+CT}(0) \cap \mathcal{Q}$. The integrand in \eqref{eq:nodecond2} can be estimated as follows: $\left|  \left( \frac{\partial}{\partial t} + \frac{J}{\rho} \cdot \nabla_q \right) \rho(t,q)\right| \leq \left| \partial_t \rho \right| + C \left| \nabla_q \rho\right|$ almost everywhere. By assumption 5, $J$ and the first order derivatives of $\rho$  (which exist almost everywhere by Rademacher's theorem) are bounded on bounded sets, in particular on the compact set $[0,T] \times \left( \overline{B}_{r+CT}(0) \cap \mathcal{Q}\right)$. Thus, the integral in \eqref{eq:nodecond2} is finite. 
Similarly, the integrand in \eqref{eq:escapecond2} satisfies $\left| J(t,q) \cdot \frac{q}{|q|} \right| \leq |J(t,q)|$ which is bounded on bounded sets. So also the integral in \eqref{eq:escapecond2} is finite.
\end{proof}

\begin{cor}
Let  $\mathcal{Q}$ be given by \eqref{eq:timelessconfigspace}, set {$J(t,(s_\ph,s_\el)) := {(j_X^{10}(t,s_\ph,t,s_\el),j_X^{01}(t,s_\ph,t,s_\el))}$}, and set  $\rho(t,(s_\ph,s_\el)) := j_X^{00}(t,s_\ph,t,s_\el)$ where $j_X^{\mu \nu}$ is given by \eqref{def:jX} and $\Psi$ as in Thm. \ref{thm:main}. The initial data $\opsi$ are assumed to be normalized such that $\int_\mathcal{Q} dq \, \rho(0,q) = 1$.  Let $\mu_t$ be the family of measures on $\mathcal{Q}$ given by $\mu_t(dq) = \rho(t,q) \, dq$. 

Then for $\mu_0$-almost all $q \in \mathcal{Q},$ the solution of \eqref{eq:guidance} starting at $Q(0) = q$ exists for all times $t\geq 0$, we have $Q(t) \in \mathcal{Q} \, \forall t \geq 0$ and the family of measures $\mu_t$ is equivariant.
\end{cor}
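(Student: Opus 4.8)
\medskip

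\textbf{Proof proposal.} The plan is to verify that the explicit functions $\rho(t,(s_\ph,s_\el)) := j_X^{00}(t,s_\ph,t,s_\el)$ and $J(t,(s_\ph,s_\el)) := (j_X^{10}(t,s_\ph,t,s_\el),\, j_X^{01}(t,s_\ph,t,s_\el))$ --- built from the solution $\Psi$ of Theorem~\ref{thm:main} and the distinguished constant time-like vector field $X$ constructed in Section~6, and expressed throughout in the Lorentz frame in which $X=(1,0)$ --- satisfy hypotheses~(1)--(6) of Theorem~\ref{thm:specializedexbm}; the corollary then follows by a direct application of that theorem. The only non-explicit input is the regularity assertion of Theorem~\ref{thm:main}: $\Psi$ is $C^1$ on $\mathscr{S}_1\setminus\mathscr{B}$, continuous across $\mathscr{B}$, has first partials bounded from either side of $\mathscr{B}$, and is compactly supported on each $(\Sigma\times\Sigma)\cap\overline{\mathscr{S}}_1$. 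Since, after evaluation at $t_\ph=t_\el=t$, both $\rho$ and $J$ are quadratic in $\Psi$, these properties descend: $\rho$ and $J$ are continuous on $[0,\infty)\times\overline{\cQ}$, are $C^1$ off the equal-time trace of $\mathscr{B}$ in $\cQ$ with derivatives that merely jump (and remain bounded) across it, hence are locally Lipschitz; moreover $J$ and the almost-everywhere-defined $\nabla_q\rho$ are bounded on bounded subsets of $[0,T]\times\cQ$ by the compact-support and bounded-derivative properties. This yields the standing regularity requirement together with hypothesis~(5).

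Hypothesis~(1), $\partial_t\rho+\operatorname{div}J=0$ almost everywhere: at an equal-time point where $\Psi$ is $C^1$, the chain rule gives $\partial_t\rho=(\partial_{x_\ph^0}+\partial_{x_\el^0})j_X^{00}$, and the two partial conservation laws of Proposition~\ref{prop:conservedcurrent}, $\partial_{x_\ph^\mu}j_X^{\mu\nu}=0$ and $\partial_{x_\el^\nu}j_X^{\mu\nu}=0$, turn this into $-\partial_{s_\ph}j_X^{10}-\partial_{s_\el}j_X^{01}=-\operatorname{div}J$; since the equal-time trace of $\mathscr{B}$ has Lebesgue measure zero, the identity holds almost everywhere. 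For hypotheses~(2) and~(4) we use the explicit component form \refeq{eq:jcomponents}, which at $X=(1,0)$ reduces to $j_X^{\mu\nu}=|\psi_{--}|^2+(-1)^\nu|\psi_{-+}|^2+(-1)^\mu|\psi_{+-}|^2+(-1)^{\mu+\nu}|\psi_{++}|^2$; hence $\rho=j_X^{00}=\sum_{\varsigma_1,\varsigma_2}|\psi_{\varsigma_1\varsigma_2}|^2\ge 0$, so (2) holds trivially, while $|j_X^{10}|\le\rho$ and $|j_X^{01}|\le\rho$, giving $|J|\le\sqrt2\,\rho$ and hence (4) with $C=\sqrt2$. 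Hypothesis~(3) is the normalization computation of Section~6, $\int j^{00}\,ds_\ph\,ds_\el=1$, together with the conservation of total probability from Proposition~\ref{prop:probconserved}, which makes this integral independent of the Cauchy surface and thus constant in $t$.

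The one genuinely model-specific condition is hypothesis~(6), $|J^1-J^2|=0$ on $\partial\cQ=\{s_\ph=s_\el\}$. With the component formula above, $J^1-J^2=j_X^{10}-j_X^{01}=2\bigl(|\psi_{-+}|^2-|\psi_{+-}|^2\bigr)$, and the boundary condition \refeq{eq:bccomponents}, which in the frame $X=(1,0)$ reads $\psi_{-+}=e^{i\theta}\psi_{+-}$ on $\mathscr{C}$ and is attained by the solution of \refeq{eq:IBVP} as a limit from inside $\mathscr{S}_1$, forces $|\psi_{-+}|=|\psi_{+-}|$ there, so $J^1=J^2$ on $\partial\cQ$. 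With all six hypotheses verified, Theorem~\ref{thm:specializedexbm} applies and yields the $\mu_0$-almost-sure global existence of the solution of \refeq{eq:guidance}, the invariance $Q(t)\in\cQ$ for all $t\ge0$, and the equivariance of $\{\mu_t\}$.

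I expect the main obstacle to be the regularity bookkeeping around $\mathscr{B}$ and $\mathscr{C}$ rather than any individual algebraic identity: one must confirm that the one-sided limits of $\Psi$ and of its first derivatives towards both $\mathscr{B}$ and $\mathscr{C}$ from $\mathscr{S}_1$ genuinely exist and that the boundary condition is attained in the limit, so that $\rho$ and $J$ are Lipschitz with gradient jumps but no blow-up, and that Rademacher's theorem together with the almost-everywhere identities and the Lipschitz relaxation of \cite{TT2005} (as packaged in Theorem~\ref{thm:generalexbm}) may legitimately be invoked. All of this is furnished by Theorem~\ref{thm:main}, but it must be unpacked with care rather than merely cited.
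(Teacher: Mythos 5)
Your proposal is correct and follows essentially the same route as the paper's proof: both verify hypotheses (1)--(6) of Theorem~\ref{thm:specializedexbm} using the quadratic dependence of $j_X^{\mu\nu}$ on $\Psi$ via \eqref{eq:jcomponents}, the regularity and compact-support assertions of Theorem~\ref{thm:main} for the Lipschitz and boundedness conditions, Proposition~\ref{prop:conservedcurrent} for the continuity equation, the bound $|j_X^{\mu\nu}|\le j_X^{00}$ for $|J|\le\sqrt{2}\,\rho$, and the boundary condition (equivalently \eqref{eq:localprobcons2}) for $J^1=J^2$ on $\partial\mathcal{Q}$. Your explicit unpacking of the chain-rule step for $\partial_t\rho+\operatorname{div}J=0$ and of $|\psi_{-+}|=|\psi_{+-}|$ on $\mathscr{C}$ in the frame $X=(1,0)$ matches what the paper does implicitly.
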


\begin{proof}
	We check that the conditions of Thm. \ref{thm:specializedexbm} are met. 
        We begin with the Lipschitz property of $J$ and $\rho$.
	To this end,  we note that $J$ and $\rho$ involve only the components of tensor current $j^{\mu \nu}_X$ which are, according to \eqref{eq:jcomponents}, quadratic expressions in the components of $\Psi$. As a consequence of the regularity properties of $\Psi$ in Thm. \ref{thm:main}, we therefore obtain that also the components of $V$ inherit these properties, i.e. they are
\begin{enumerate}
	\item[(a)] continuously differentiable for $(t_\ph,s_\ph,t_\el,s_\el)  \in \mathscr{S}_1 \backslash  \mathscr{B}$ with $ \mathscr{B} = \{(t_\ph,s_\ph,t_\el,s_\el) \in \mathbb{R}^4: t_\ph +s_\ph = t_\el- s_\el \}$,
	\item[(b)] continuous across $ \mathscr{B}$, and
	\item[(c)] all their partial derivatives are bounded and stay bounded in the limits $t_\ph + s_\ph \nearrow s_\el-t_\el$ and $s_\ph+t_\ph \searrow s_\el-t_\el$.
\end{enumerate}
From these properties, it follows that $J$ and $\rho$ are locally Lipschitz continuous.

Next we check items 1-6 in Thm. \ref{thm:specializedexbm}.
\begin{enumerate}
   \item The continuity equation $\partial_t \rho + {\rm div} \, J = 0$ holds almost everywhere as a consequence of \eqref{eq:continuity}.
 \item {We have $\rho > 0 $ whenever $\rho \neq 0$ as $\rho = \Psi^\dagger \Psi$ (in the frame where $X=(X^0,0)$).}
\item The normalization property $\int_\mathcal{Q} dq \, \rho(t,q) = 1~\forall t \geq 0$ then follows from 
  $\int_\mathcal{Q} dq \, \rho(0,q) = 1$, {the continuity equation on $\mathcal{Q}$, and the boundary condition 
\eqref{eq:localprobcons2} (see also \cite[thm. 4.4]{Lie2015}).}
 \item To show the boundedness of the velocities $J/\rho$, we note that \eqref{eq:jcomponents} yields $j_X^{\mu \nu}/ j^{00}_X \leq 1$.
 This implies: $|J|/\rho = \sqrt{|j^{10}_X|^2 + |j^{01}_X|^2}/ j^{00}_X \leq \sqrt{2} \left(\max \{ |j^{10}_X|, |j^{01}_X|\} \right)/j^{00}_X \leq \sqrt{2}$ (where $j^{\mu \nu}_X$ is to be restricted to equal times).
	\item $J$ and the first order derivatives of $\rho$ are bounded on bounded sets (whenever the first order derivatives exist) as
 follows from the same properties of $\Psi$ (see Thm. \ref{thm:main}).
	\item For all $q = (s,s)\in \partial Q$ and all $t \geq 0$, we obtain using \eqref{eq:localprobcons2}: 
$|J^1(t,q) - J^2(t,q)| = |j_X^{10}(t,s,t,s) - j_X^{01}(t,s,t,s)| = 0$.
\end{enumerate}
Thus, all requirements of Thm. \ref{thm:specializedexbm} are satisfied and the claim follows.
\end{proof}

\section{Numerical Experiments}\vspace{-10pt}
Given an initial wave function $\opsi$, we may solve the initial-boundary-value problem \refeq{eq:IBVP} to find $\Psi$, 
and then compute the corresponding current  {$j^{\mu\nu}_X$} to it. 
  We may then proceed to solve the system of ODEs \refeq{hbdlaw}, with ``typical" (with respect to the initial probability density 
$|\opsi|^2$) data $(Q_\ph(0),Q_\el(0))$ corresponding to the initial actual positions of particles, and plot the resulting trajectories.
  Here we report on the results of our preliminary investigations in this regard.  

We took $\opsi$ to correspond to a pure product of two Gaussian profiles with the same width $\si$ and means that are a distance $d$ apart.
 We randomly chose the amplitudes and phases of the four components, subject to compatibility with the boundary condition \refeq{eq:compat},
 and then normalized $\opsi$ in such a way that the corresponding vectorfield $X$ is equal to $(X^0,0)$.
  Other parameters that need to be chosen are the normalized electron mass $\om$, and the phase angle $\theta$ in the boundary condition.  

As a first test, we computed and compared trajectories in the non-interacting versus the interacting case, 
corresponding to the same initial wave function and the same initial actual configuration; see Fig.~\ref{fig:intvsnonint}.
 The non-interacting case corresponds to ignoring the boundary condition and using the solution formulas 
(\ref{sol:--}--\ref{sol:++}) {\em everywhere} in $\mathscr{S}_1$.  
 For this test we used $\si=0.1$, $d=1$, $\om=2$, $\theta=0$, and took the same initial positions $Q_\el(0)=0.02$, $Q_\ph(0)=0.98$ for 
both the interacting and the non-interacting case.  Note that in the latter case, the particles simply go through each other, while 
in the former, they bounce off of one another, consistent with a Compton scattering scenario.

\begin{figure}[ht]
\centering
\includegraphics[width=170pt,height=170pt]{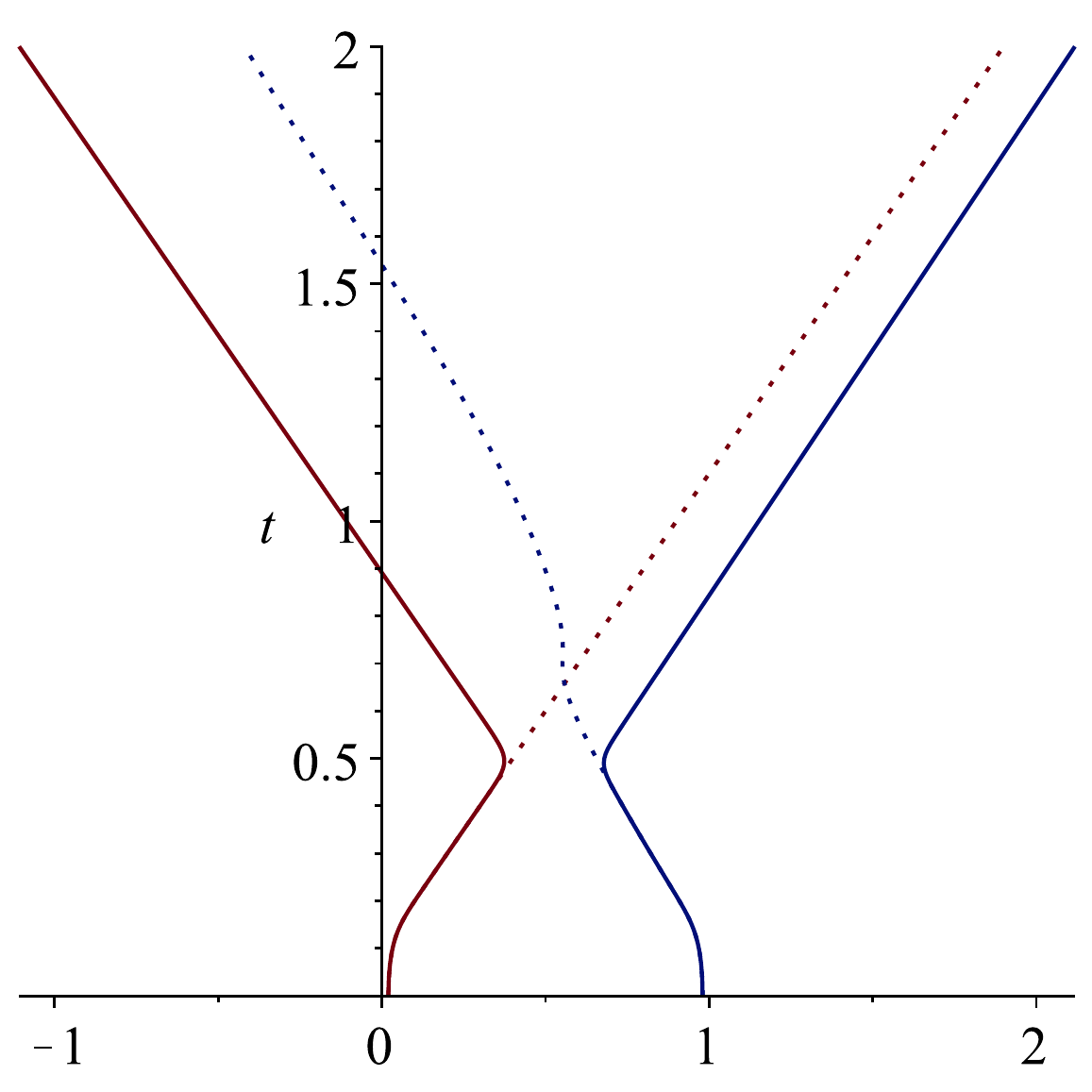}
\caption{\label{fig:intvsnonint} World-lines corresponding to interacting (solid) versus the non-interacting (dotted) electron-photon system,
 with initial photon position at 0.02 and initial electron position at 0.98}
\end{figure}

Next we computed a large number of ``typical" trajectories for the interacting system by randomly choosing the initial actual 
configuration according to the initial density $|\opsi|^2$. 
 We used $\si=0.1$, $d=1$, $\om=2$ and $\theta = 0$;
see Fig.~\ref{fig:bohmtraj}, {where different colors indicate different pairs of trajectories.} 
 Note that consistent with the evolution of the density $\rho$ (cf. Fig.~\ref{fig:movie}), 
at late times the distribution of actual configurations develops four peaks, 
corresponding to the particles: 
(1) always moving away from each other, (2) both moving to the right, (3) both moving to the left, (4) initially approaching 
each other, then bouncing off of one another and moving away (scattering).
\begin{figure}[ht]
\centering
\includegraphics[width=180pt,height=180pt]{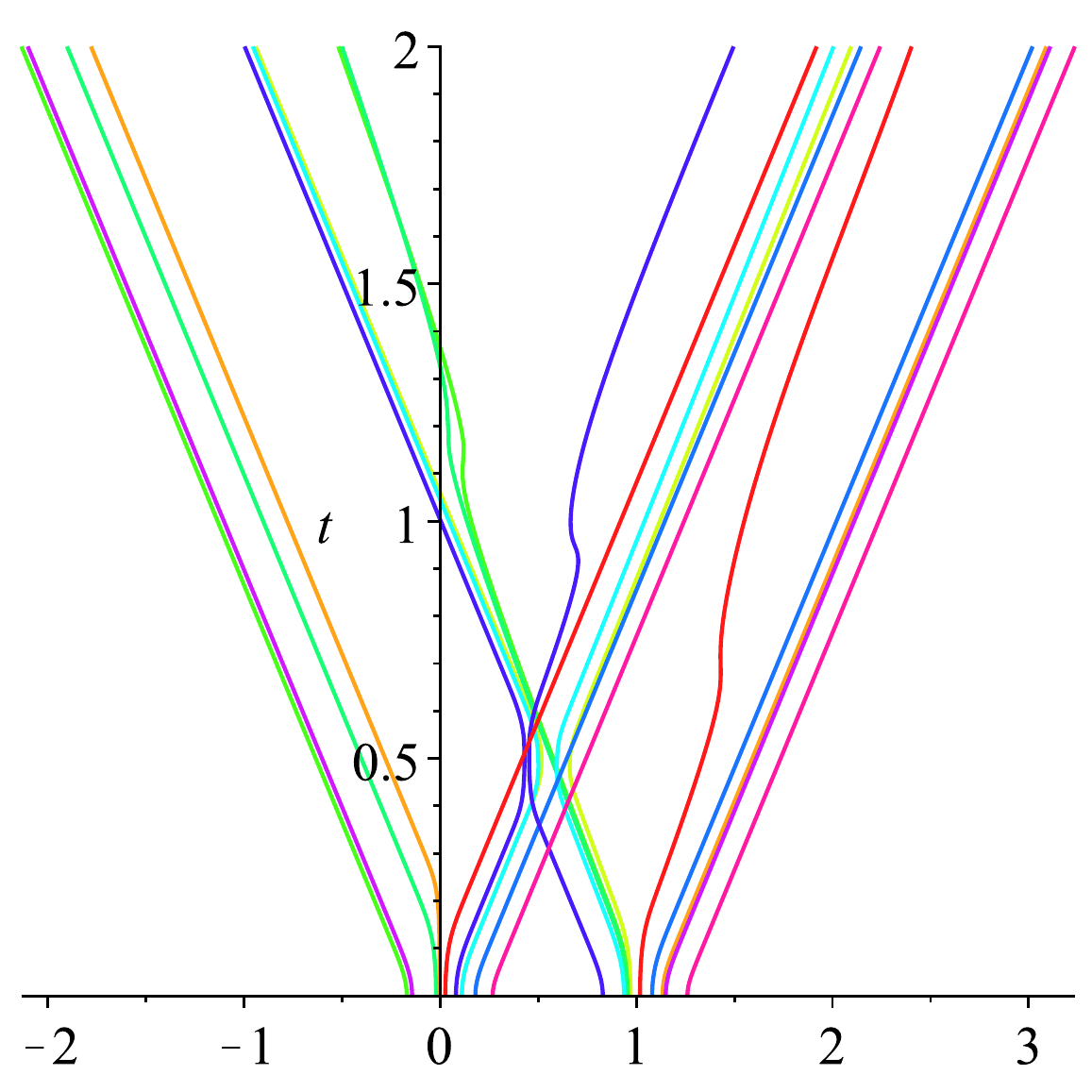}\qquad\qquad
\includegraphics[width=180pt,height=180pt]{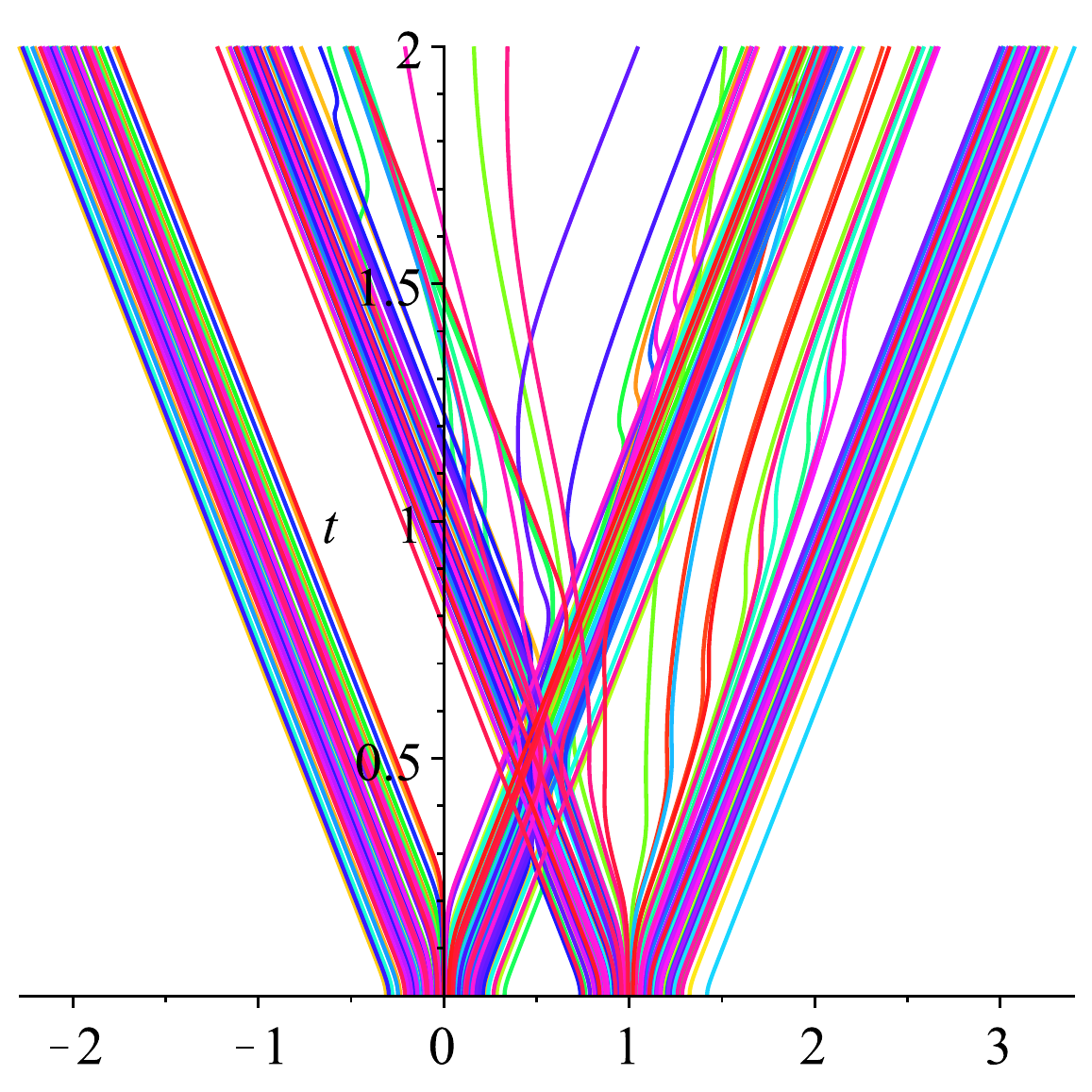}
\caption{\label{fig:bohmtraj}Electron \&\ photon world-lines corresponding to 10 pairs (on the left) and 100 pairs (on the right) of
 typical initial positions. The photon is to the left of the electron.}
\end{figure}

 Also of note is the existence of trajectories where the photon and electron approach each other and then
spend an appreciable amount of time together in very close proximity, almost as in a bound or quasi-bound state.
 Perhaps not too surprisingly, this phenomenon appears to become more typical when the electron is given a larger mass. 
 See Fig.~\ref{fig:romance} where six pairs of trajectories have been shown for $\om=1$.  
\begin{figure}[ht]
\centering
\includegraphics[width=200pt,height=200pt]{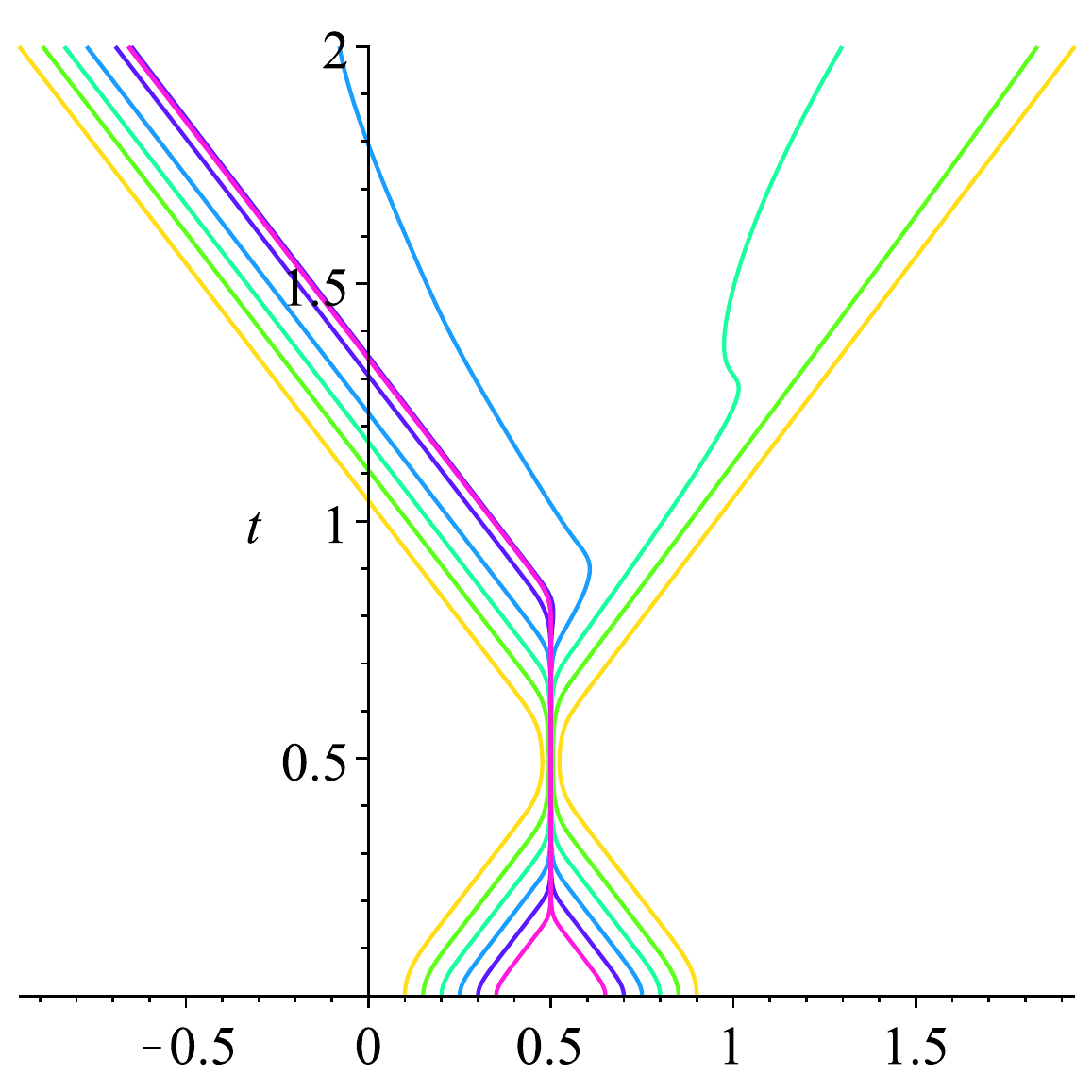}
\caption{\label{fig:romance}World-lines corresponding to possible bound or quasi-bound states}
\end{figure}
 Note in particular the pair of trajectories whose initial positions are closest to each other, which seem to indicate a bound state 
between electron and photon may have formed.  
 This may be hardly distinguishable from a scenario in which the photon gets annihilated, transferring its energy-momentum to the electron. 
 In our model the photon of course does not get annihilated at all. 
 We will speak of ``photon capture'' by the electron. 
 The time-reversed process is possible as well, and we will speak of ``photon release.''

 Finally, we also tested for the dependence of the trajectories on the parameter $\theta$.
  Our numerical tests so far have {\em not} shown any discernible difference between trajectories corresponding to 
different values of $\theta$, once we factor out the dependence of the initial wave function on this parameter as 
required by the compatibility condition \refeq{eq:compat}.
  (Recall that one way to satisfy this compatibility condition is for intial wave function to be zero on the coincidence set, 
so it is possible to choose initial data that are independent of $\theta$.)

\section{Summary and Outlook}

 After having demonstrated in this paper that a relativistic quantum-\emph{mechanical} treatment of the Compton effect
as a two-body problem is feasible in 1+1 spacetime dimensions, the next goal is to find out whether the same 
can be accomplished in 3+1 spacetime dimensions, as envisioned by C.G. Darwin (recall the quotation at the beginning of our paper).
 The larger goal, of course, is to eventually find out whether empirical electromagnetism in 3+1 spacetime dimensions
can be accurately accounted for in terms of a relativistic quantum mechanics with a \emph{fixed} finite number $N$ of 
electrons, photons, and their anti-particles.

 Furthermore, following the lead of \cite{DGMZ1999} we have formulated an interacting generalization of their
``Hypersurface Bohm--Dirac (HBD) model."
 In our theory, both electrons {\em and} photons are treated as point particles whose positions move, guided in a deterministic
manner by the quantum-mechanical multi-time wave function.
 This should help  alleviate fears that the non-relativistic de Broglie--Bohm type theory of interacting particles 
could not be made relativistic.
 Of course, our 1+1 dimensional formulation is only a first step, but
some of us are optimistic that a 3+1-dimensional formulation could be feasible.
\newpage

\subsection*{Acknowledgments}
  We would like to thank Hans Jauslin, Stefan Teufel, and Roderich Tumulka for helpful discussions.  Thanks to Lawrence Frolov and Samuel Leigh for giving our paper a careful reading and for their helpful suggestions.
  Thanks go also to the referees for their comments. 
\\[1mm]
\begin{minipage}{15mm}
 \includegraphics[width=13mm]{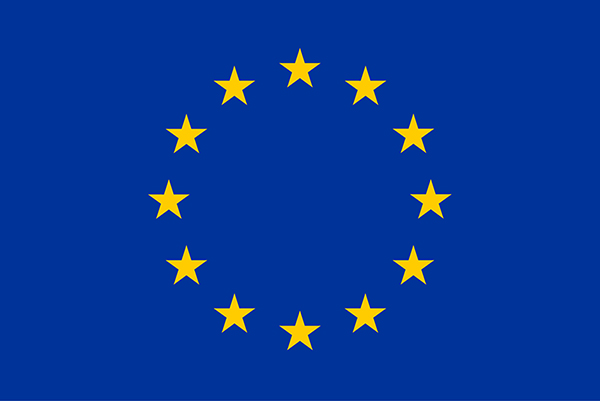}
\end{minipage}
\begin{minipage}{148mm} % This phrase is prescribed by the European Research Council.
This project has received funding from the European Union's Framework for
Research and Innovation Horizon 2020 (2014--2020) under the Marie Sk{\l}odowska-Curie Grant Agreement No.~705295.
\end{minipage}
\bigskip

 On behalf of all authors, the corresponding author states that there is no conflict of interest.

\appendix 
\bigskip
\centerline{\bf \large{APPENDIX}}

\section{Solution formulas for the Klein-Gordon equation}
We recall that the general solution to the Cauchy problem for the Klein-Gordon equation 
\beq
w_{tt} - w_{ss} + w = 0,\qquad w(0,s) = f(s),\qquad w_t(0,s) = g(s),
\eeq
is
\bna
w(t,s) & = & \half\{ f(s-t)+f(s+t) \}- \frac{t}{2}\int_{s-t}^{s+t} \frac{J_1(\sqrt{t^2 - (s-\si)^2})}{\sqrt{t^2 - (s-\si)^2}} f(\si) d\si \nonumber\\
&&\mbox{} + \half\int_{s-t}^{s+t} J_0(\sqrt{t^2 - (s-\si)^2}) g(\si) d\si ,\label{eq:KGsol}
\ena
where $J_\nu$ is the Bessel function of index $\nu$.

We also need the solution formula for the Goursat (i.e. characteristic initial data) problem for the 1-dimensional Klein-Gordon equation:
\beq
U_{tt} - U_{ss} + U = 0\mbox{ for }|s|<t,\qquad U(b,b) = \ze(b),\qquad U(c,-c) = \xi(c),
\eeq
which is (see e.g. \cite{Gar1998}, eq. (4.85)):
\bna
U(t,s) & = & \half (\ze(0)+\xi(0))J_0(\sqrt{t^2-s^2})\nonumber \\
&&\mbox{} +   \int_0^{(t+s)/2} \ze'(b) J_0(\sqrt{(t-s)(t+s-2b)}) db\label{eq:GourKG}\\
&&\mbox{} +
\int_0^{(t-s)/2} \xi'(c) J_0(\sqrt{(t+s)(t-s-2c)}) dc \nonumber
\ena
(Note that $\ze(0)=\xi(0)$ is necessary for continuity.) Using integration by parts, the above can also be written as
\bna
U(t,s) & = &  \ze \left(\frac{t+s}{2}\right) + \xi \left(\frac{t-s}{2}\right) - \half (\ze(0)+\xi(0))J_0(\sqrt{t^2-s^2}) \nonumber\\
&&\mbox{} - (t-s) \int_0^{(t+s)/2} \ze(b) \frac{J_1(\sqrt{(t-s)(t+s-2b)})}{\sqrt{(t-s)(t+s-2b)}} db\nonumber \\
&&\mbox{} - (t+s) \int_0^{(t-s)/2} \xi(c) \frac{J_1(\sqrt{(t+s)(t-s-2c)})}{\sqrt{(t+s)(t-s-2c)}} dc.
\ena

%%%
\newpage

\section{Existence and uniqueness for the electron-photon trajectories} \label{sec:generalexbm}

For completeness, we include the theorem about the global existence and uniqueness of Bomian trajectories by Teufel and Tumulka \cite{TT2005}
 which we heavily use in Sec. \ref{sec:bohmiantrajectories}.

The notation is as follows: 
The configuration space $\mathcal{Q}$ is taken to be either $\mathcal{Q} = \mathbb{R}^d$ or 
$\mathcal{Q} = \mathbb{R}^d \backslash \bigcup_{l=1}^m S_l$ where each $S_l$ is an admissible set. 
A set $S \subset \mathbb{R}^d$ is said to be admissible if there is a $\delta > 0$ such that the distance function 
$q \mapsto \text{dist}(q,S)$ is differentiable on the open set $(S + \delta) \backslash S$ where 
$S + \delta = \{ q \in \mathbb{R}^d : {\rm dist}(q,S) < \delta\}$. Let
$\cN_t$, $Q_q(t)$, and $\varphi_t$ be defined as before. 
 This allows us to introduce the notion of \textit{equivariance} as follows: Let $\rho_t$ be the distribution of $Q_q(t)$ if $q$ has 
distribution $\mu_0$, i.e., $\rho_t = \mu_0 \circ \varphi_t^{-1}$. One then says that \textit{the family of measures $\mu_t$ is 
equivariant on a time interval $I$} if $\rho_t = \mu_t$ for all $t \in I$. (Intuitively, this means that the measures $\mu_t$ 
capture the time-evolved distribution of trajectories correctly.)

\begin{thm}[{\cite[thm. 1]{TT2005}}] \label{thm:generalexbm}
Let $\mathcal{Q} \subset \mathbb{R}^d$ be a configuration space as defined above and let $j=(\rho, J)$ be a current with 
$\rho : [0,\infty) \times \mathcal{Q} \rightarrow [0,\infty)$, $J :  [0,\infty) \times \mathcal{Q} \rightarrow \mathbb{R}^d$ where:
	\begin{enumerate}
		\item[(i)] $\rho$ and $J$ are continuously differentiable,
		\item[(ii)] $\partial_t \rho + {\rm div} \, J = 0$,
		\item[(iii)] $\rho > 0$ whenever $\rho \neq 0$,
		\item[(iv)] $\int_\mathcal{Q} dq \, \rho(t,q) = 1$ for all $t \geq 0$.
	\end{enumerate}
	Let $T>0$ and let $\mathscr{B}(\mathcal{Q})$ denote the set of all bounded Borel sets in $\mathcal{Q}$. Suppose that
	\beq
	\forall B \in \mathscr{B}(\mathcal{Q}): ~~~ \int_0^T dt \int_{\varphi_t(B) \backslash \{ \diamondsuit\}} dq
 \left| \left( \frac{\partial}{\partial t} + \frac{J}{\rho} \cdot \nabla_q \right) \rho(t,q) \right| < \infty,
	\label{eq:nodecond}
	\eeq
	\beq
		\forall B \in \mathscr{B}(\mathcal{Q}): ~~~ \int_0^T dt \int_{\varphi_t(B) \backslash \{ \diamondsuit\}} dq 
\left| J(t,q) \cdot \frac{q}{|q|} \right| < \infty,
	\label{eq:escapecond}
	\eeq
	and, if $\mathcal{Q}=\mathbb{R}^d \backslash \bigcup_{l=1}^m S_l$, in addition that for every $l =1,...,m$:
	\beq
		\exists \delta> 0: \forall B \in \mathscr{B}(\mathcal{Q}): ~~~ \int_0^T dt \int_{\varphi_t(B) \backslash 
\{ \diamondsuit\}} dq ~ \mathbbm{1}(q\in(S_l +\delta)) \frac{|J(t,q) \cdot e_l(q)|}{{\rm dist}(q,S_l)} < \infty.
	\label{eq:critsetcond}
	\eeq
	Here, ${\rm dist}(q,S_l)$ is the Euclidean distance of $q$ from $S_l$ and $e_l(q) = -\nabla_q {\rm dist}(q,S_l)$ 
is the radial unit vector towards $S_l$ at $q \in \mathcal{Q}$.

Then for almost every $q \in \mathcal{Q}$ relative to the measure $\mu_0(dq) = \rho(0,q) dq$, the solution of \eqref{eq:guidance} 
starting at $Q(0) = q$ exists at least up to time $T$, and the family of measures $\mu_t(dq) = \rho(t,q) dq$ is equivariant on $[0,T]$. 
In particular, if \eqref{eq:nodecond}, \eqref{eq:escapecond} and, if appropriate, \eqref{eq:critsetcond} are true for every $T>0$, then 
for $\mu_0$-almost every $q \in \mathcal{Q}$ the solution of \eqref{eq:guidance} starting at $q$ exists for all times $t \geq 0$.
\end{thm}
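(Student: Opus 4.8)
The plan is to reduce the global-in-time claim to three local, measure-theoretic estimates --- one for each way a maximal trajectory can fail to exist --- and to feed the integral hypotheses \eqref{eq:nodecond}, \eqref{eq:escapecond} and \eqref{eq:critsetcond} into them through a single change-of-variables identity. First I would set up the flow: on the open set $\{\rho>0\}$ the velocity field $v:=J/\rho$ is $C^1$ by hypotheses (i) and (iii), so for each $q$ outside $\mathcal{N}_0:=\{q':\rho(0,q')=0\}$ the non-autonomous ODE \eqref{eq:guidance} has, by Picard--Lindel\"of, a unique maximal solution $Q_q$ on an interval $(\tau_q^-,\tau_q^+)$; I extend it by $Q_q(t):=\diamondsuit$ for $t\ge\tau_q^+$ (and as soon as it would leave $\mathcal{Q}$), and set $Q_q\equiv\diamondsuit$ for $q\in\mathcal{N}_0$, obtaining a Borel flow $\varphi_t$ with $\mu_0(\mathcal{N}_0)=0$. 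By the continuation lemma for ODEs, if $\tau_q^+<\infty$ then $Q_q(t)$ must, as $t\uparrow\tau_q^+$, run into a node ($\rho\to0$), escape to infinity, or approach $\partial\mathcal{Q}$; these are the three modes to rule out.

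The structural input is a transport (Liouville) identity: with $J=\rho v$, the continuity equation $\partial_t\rho+\mathrm{div}\,J=0$ says exactly that the spacetime vector field $(1,v)$ has vanishing Lie derivative on the $(d+1)$-form $\rho\,dt\wedge dq^1\wedge\cdots\wedge dq^d$ (a one-line computation), so its flow preserves $\rho\,dt\,dq$ on $\{\rho>0\}$; projecting onto time-$t$ slices, $\varphi_t$ pushes $\rho(0,\cdot)\,dq$ forward to $\rho(t,\cdot)\,dq$ on its domain. Concretely, for every bounded Borel $B$ and measurable $g\ge0$,
\[
\int_{\varphi_t(B)\setminus\{\diamondsuit\}} g(t,q)\,dq \;=\; \int_{B\cap\{\tau_q^+>t\}} \frac{g(t,Q_q(t))}{\rho(t,Q_q(t))}\,d\mu_0(q),
\]
and this is the only change of variables I would use. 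It already yields equivariance restricted to surviving configurations, so the theorem reduces to showing $\mu_0(\{q:\tau_q^+\le T\})=0$ for each fixed $T$.

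I would then dispose of the three modes, each by a ``logarithmic-derivative-diverges'' argument made quantitative through that identity. \emph{Nodes:} if a $\mu_0$-positive subset of $B$ had $\rho(t,Q_q(t))\to0$ as $t\uparrow\tau_q^+\le T$, then $\int_0^{\tau_q^+}|\tfrac{d}{dt}\log\rho(t,Q_q(t))|\,dt=\infty$ along each such trajectory; but taking $g=|(\partial_t+v\cdot\nabla_q)\rho|$ in the identity, the $1/\rho$ it produces turns $(\partial_t+v\cdot\nabla_q)\rho$ along a trajectory into $\tfrac{d}{dt}\log\rho$, so the spacetime integral in \eqref{eq:nodecond} equals $\int_B\int_0^{\min(T,\tau_q^+)}|\tfrac{d}{dt}\log\rho(t,Q_q(t))|\,dt\,d\mu_0$, forcing it to be infinite --- contradiction. \emph{Escape:} with $g=\bigl|J\cdot q/|q|\bigr|$, and since $\tfrac{d}{dt}|Q_q|=v\cdot Q_q/|Q_q|$, the identity rewrites \eqref{eq:escapecond} as $\int_B\int_0^{\min(T,\tau_q^+)}\bigl|\tfrac{d}{dt}|Q_q(t)|\bigr|\,dt\,d\mu_0$, which a $\mu_0$-positive set of trajectories with $|Q_q(t)|\to\infty$ before $T$ makes diverge. \emph{Boundary} (relevant only when $\mathcal{Q}=\mathbb{R}^d\setminus\bigcup_l S_l$): since $q\mapsto\mathrm{dist}(q,S_l)$ is differentiable on $(S_l+\delta)\setminus S_l$ with $\nabla\mathrm{dist}=-e_l$, one has $\tfrac{d}{dt}\mathrm{dist}(Q_q(t),S_l)=-v\cdot e_l$, so \eqref{eq:critsetcond} becomes $\int_B\int_0^{\min(T,\tau_q^+)}\mathbbm{1}(Q_q(t)\in S_l+\delta)\,|\tfrac{d}{dt}\log\mathrm{dist}(Q_q(t),S_l)|\,dt\,d\mu_0$, which diverges if a $\mu_0$-positive set of trajectories reaches $S_l$ in finite time. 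Since $B$ and $T$ are arbitrary, $\tau_q^+=\infty$ for $\mu_0$-a.e.\ $q$; equivariance on each $[0,T]$ is then the transport identity with a $\mu_0$-null graveyard.

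I expect the genuine difficulty to lie not in any one estimate but in making the bookkeeping around $\varphi_t$ rigorous: $\varphi_t$ is defined only on the a priori unknown surviving set $\{\tau_q^+>t\}$, so the change-of-variables identity, the Borel measurability of $\varphi_t(B)\setminus\{\diamondsuit\}$, and the Tonelli interchange $\int_0^T\!\int=\int\!\int_0^T$ must all be carried out on the decreasing family of open sets $\{q\in B:\tau_q^+>t\}$ and then passed to the limit by monotone convergence --- this is the self-referential point, and it is where the $C^1$ (rather than merely Lipschitz) regularity in hypothesis (i) is genuinely convenient, since it makes each $\varphi_t$ a diffeomorphism with continuous Jacobian $\rho(0,\cdot)/\rho(t,\varphi_t(\cdot))$ on its domain. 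Modulo this, every death estimate is a one-dimensional ODE-comparison computation.
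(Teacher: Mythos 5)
The paper does not prove this statement at all: it is quoted verbatim from Teufel--Tumulka \cite{TT2005} (Appendix B exists only ``for completeness''), so there is no in-paper proof to compare against. Your sketch is a faithful reconstruction of the Teufel--Tumulka ``simple proof'': the change-of-variables identity $\int_{\varphi_t(B)\setminus\{\diamondsuit\}} g\,dq = \int_{B\cap\{\tau_q^+>t\}} (g/\rho)(t,Q_q(t))\,d\mu_0$ is exactly their central device, and your conversion of \eqref{eq:nodecond}, \eqref{eq:escapecond}, \eqref{eq:critsetcond} into total-variation statements for $\log\rho(t,Q_q(t))$, $|Q_q(t)|$ and $\log\mathrm{dist}(Q_q(t),S_l)$ along trajectories is their argument; you also correctly identify the one genuinely delicate point (the identity is a priori only available on the unknown surviving set, so one must run the estimates with the sub-equivariance inequality first and upgrade to equivariance afterwards).
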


\paragraph{Remark.} 
The conditions \eqref{eq:nodecond}, \eqref{eq:escapecond} and \eqref{eq:critsetcond} have the following intuitive meanings.
 If \eqref{eq:nodecond} holds, $\mu_0$-almost no trajectory approaches a node for $0 \leq t \leq T$. If \eqref{eq:escapecond} 
is satisfied, $\mu_0$-almost no trajectory escapes to infinity for $0 \leq t \leq T$. If \eqref{eq:critsetcond} holds, $\mu_0$-almost 
no trajectory approaches the critical set $\bigcup_{l=1}^m S_l$ for $0 \leq t \leq T$.

\newpage
\bibliographystyle{plain}

\end{document}